\newcommand{\DATUM}{February-8-2010}               
\newcommand{\change}
{{\marginpar{\#}}}        
\newcommand{\comma}{\: ,}              
\newcommand{\Om}{\Omega}                
\newcommand{\one}{{\bf 1}}
\newcommand{\cA}{{\mathcal{A}}}
\newcommand{\cF}{{\mathcal{F}}}
\newcommand{\cH}{{\mathcal{H}}}
\newcommand{\cI}{{\mathcal{I}}}
\newcommand{\cO}{{\mathcal{O}}}         
\newcommand{\cR}{{\mathcal{R}}}
\newcommand{\cS}{{\mathcal{S}}}
\newcommand{\RR}{\mathbb{R}}            
\newcommand{\NN}{\mathbb{N}}            
\newcommand{\CC}{\mathbb{C}}            
\newcommand{\fh}{\mathfrak{h}}         
\newcommand{\vnabla}{{\vec{\nabla}}}
\newcommand{\vP}{{\vec{P}}}
\newcommand{\vk}{{\vec{k}}}
\newcommand{\vp}{{\vec{p}}}
\newcommand{\vx}{{\vec{x}}}
\newcommand{\vz}{{\vec{z}}}
\newcommand{\vy}{{\vec{y}}}
\newcommand{\dom}{{\rm dom}}
\newcommand{\cirS}{\mathop{\bigcirc\kern -.73em {\scriptstyle{\rm S}}}}
\newcommand{\QED}{\phantom{blablabla}\hfill\qed\newline}  
\renewcommand{\thesection}
{\Roman{section}}                      
\renewcommand{\theequation}
{\thesection.\arabic{equation}}        
\newtheorem{theorem}{Theorem}[section]         
\newtheorem{lemma}[theorem]{Lemma}             
\newtheorem{corollary}[theorem]{Corollary}     
\renewcommand{\theequation}{\thesection.\arabic{equation}}
\newcommand{\resetequ}{\setcounter{equation}{0}}
\newcommand{\betacrit}{{11/2}}  
\renewcommand{\dom}{\mathrm{Dom}}  
\newcommand{\ep}{\epsilon}  
\newcommand{\vdiffdilation}{\vec{\mathfrak{d}}  }  
\newcommand{\str}{|}  
\newcommand{\norm}{\|}  
\newtheorem*{blanktheorem}{Theorem}  
\newtheorem*{blanklemma}{Lemma}   
\newcommand{\trialvector}{\eta_{\vec{Q}}}
\begin{document}
\bibliographystyle{plain}
\setcounter{page}{0}
\thispagestyle{empty}

\author{ Wojciech  De Roeck \\
\small{ Institut f\"ur Theoretische Physik  Universit\"at Heidelberg, 
 }   \\[-1ex] 
\small{Philosophenweg 19  D69120 Heidelberg,  Germany } \\
(w.deroeck@thphys.uni-heidelberg.de)
\and
J\"{u}rg Fr\"{o}hlich \\      
\small{Institute~of~Theoretical Physics; ETH Z\"{u}rich;} \\[-1ex] 
\small{CH-8093 Z\"{u}rich, Switzerland} \\
(juerg@itp.phys.ethz.ch)
\and
Alessandro Pizzo\\
\small{Department of Mathematics, University of California Davis;} \\[-1ex] 
\small{One Shields Avenue, Davis, California 95616, USA} \\
(pizzo@math.ucdavis.edu)
}
\date{\DATUM}

\setcounter{page}{0}
\thispagestyle{empty}
\title{Absence of Embedded  Mass Shells: Cerenkov Radiation and Quantum Friction}

\maketitle

\begin{abstract}
We show that, in a model where a non-relativistic
particle is coupled to a quantized relativistic scalar Bose field, the embedded mass shell of the
particle dissolves in the continuum when the interaction is turned on, provided the coupling constant is sufficiently small. More precisely, under the assumption that
the fiber eigenvectors corresponding to the putative mass
shell are differentiable as functions of the total momentum of the system, we show that a
mass shell could exist only at a strictly positive distance  from the unperturbed embedded mass
shell near the boundary of the energy-momentum spectrum. 
\end{abstract}
%

\thispagestyle{empty}
\newpage
\setcounter{page}{1}
\section{Introduction}\label{sec: intro}
The model studied in this paper describes a system consisting of a non-relativistic quantum particle coupled to a quantized relativistic field of scalar massless bosons through an interaction term linear in creation- and annihilation operators. The system is invariant under space translations.  Therefore its total momentum is conserved. In states  where the initial particle momentum is larger than $mc$, where $m$ is the mass of the non-relativistic particle and $c$ the propagation speed of the bosonic modes, we expect that the particle will emit Cerenkov radiation,  because its group velocity is larger than the speed of the bosons. We are thus interested in the spectral region $(E,\vP)$ with $|\vP|>1$; using units such that $m=c=1$.  Here $E, \vP$ are the spectral variables of the Hamiltonian and of the total momentum operator, respectively. In this region, we expect that a mass shell of the non-relativistic particle does not exist. Put differently,  we expect that the mass shell,  which in the unperturbed system is described by the equation $E=\vP^2/2$,  disappears,  as soon as the interaction is switched on. This would show that one-particle states of the non-relativistic particle are unstable for values of $|\vP|$ larger than $1$. 

Our main result is as follows. We assume that, for  $|\vP|>1$, a mass shell exists  with the property that the corresponding fiber eigenvectors are differentiable as functions of the total momentum of the system. Then we show that,  for sufficiently small values of the coupling constant, such a
mass shell may exist only at a strictly positive distance ($>\cO(1)$)  from the unperturbed mass
shell in the energy-momentum spectrum. More precisely, one-particle states might only exist in a region around the three-dimensional surface $E=|\vP|-\frac{1}{2}$, whose width tends to zero,  as the coupling constant approaches $0$. Our results are proven for models with a fixed ultraviolet cutoff that turns off interactions with high-energy bosons, and under the assumption of appropriate infrared regularity of the form factor that models the interaction.

In the literature, many results are concerned with the existence of a mass shell for $|\vP|<1$, depending on the  behavior of the coupling between the non-relativistic particle and the relativistic boson field in the infrared region. These results clarify and extend the notion of stable particle by providing a scattering picture for \emph{infraparticles}, for which a mass shell does not exist (i.e., the single-particle states are not normalizable in the Hilbert space of  pure states of the system); see \cite{F1}, \cite{F2}, \cite{P1}, \cite{P2}, \cite{C}, \cite{CFP1}, \cite{CFP2}, \cite{BCFS2}, \cite{FP2}, \cite{hasler}.

To our knowledge, for the spectral region studied in this paper, no rigorous results have yet appeared in the literature concerning the existence or non-existence of an embedded mass shell. However,  in \cite{erdos}, for the model studied in this  paper,  it is proven that  the electron motion in the kinetic limit is described by a Boltzmann equation that exhibits  the slowdown of the particle by emitting Cerenkov radiation, as long as its velocity is greater than $1$.  This supports the thesis that there is no mass shell for $\str \vP \str >1$.  

We also stress that the conclusions of our paper leave open an interesting question: Our analysis does not exclude the existence of  single-particle states near the boundary of the energy-momentum spectrum (which,  for $|\vP|>1$,  is approximately linear in $|\vP|$).  In this respect, we recall that the existence of the groundstate  eigenvalue for the fiber Hamiltonians,  in the region $|\vP|>1$, has been studied in \cite{spohn} and \cite{moeller} (see also \cite{minlos1}, \cite{minlos2}  for some related spectral problems) but under some assumptions on the boson dispersion relation that change  the physical  phenomenon we are interested in.  In fact, in these papers,  the bosons are massive and their  energy dispersion relation is strictly subadditive (see \cite{moeller}). In  particular, in \cite{moeller}, it is proven that, for spatial dimension $d=3$, the fiber Hamiltonian has no groundstate whenever the infimum of its spectrum equals the infimum of its essential spectrum.   However, because of the assumptions above, this result does not apply to the model studied in this paper.

In the following, the spin of the electron is neglected, and the bosons are scalar.

\noindent \textbf{Acknowledgement} We thank an anonymous referee who pointed out the Remark on page 41.  At the time when this work was finished, W.D.R. was supported by the European Research Council and the Academy of Finland. A.P. is supported by  NSF grant DMS-0905988.

\section{Description of the model and result} \label{SectionII}
\resetequ

\subsection{Hilbert space}

The Hilbert space of pure states of the system is given by
\begin{equation}
\cH\,=\,L^2(\RR^3)\otimes\cF\,,
\end{equation}
where $\cF$ is the Fock space of scalar bosons,
\begin{equation} \label{eq-I-1a}
\cF \ := \ \bigoplus_{N=0}^\infty \cF^{(N)} 
\comma \hspace{6mm} \cF^{(0)} = \CC \, \Om \comma 
\end{equation}
with $\Om$ the vacuum vector, i.e., the state without any bosons, and the state space, $\cF^{(N)}$, of $N$ bosons is given by
\begin{equation} \label{eq-I-2}
\cF^{(N)} \ := \  
\cS_N \,  \fh^{\otimes\,N} 
\comma \hspace{6mm} 
N \geq 1 \,.
\end{equation}
Here the Hilbert space, $\fh$, of state vectors of a single  boson is given by
\begin{equation} \label{eq-I-2a}
\fh \ := \ L^2[ \RR^3 ] \,,
\end{equation}
and $\cS_N$ denotes symmetrization.
We introduce the usual
creation- and annihilation operators, $a^*_{\vk}$ and $a_{\vk}$, obeying the canonical 
commutation relations
\begin{eqnarray} \label{eq-I-6} 
[a^*_{\vk} \, , \, a^*_{\vk'}] 
\;= \;
[a_{\vk} \, , \, a_{\vk'}] \;& = &\; 0 \comma 
\\ \label{eq-I-7}
 [a_{\vk} \, , \, a^*_{\vk'}] 
\; &=& \; \, \delta (\vk - \vk') \comma 
\\ \label{eq-I-8}
 a_{\vk} \, \Om \; &= &\; 0 \comma 
\end{eqnarray}
for all $\vk, \vk' \in \RR^3$.
\\

\subsection{Fiber decomposition}

\noindent
We may write $\cH$ as a direct integral
\begin{equation}
\cH\,=\,\int^{\oplus}\,\cH_{\vec{P}}\,d^3P\,.
\end{equation}

\noindent
Given any $\vP\in\RR^3$, there is an isomorphism, $I_{\vP}$,
\begin{equation}
    I_{\vP}\,:\,\cH_{\vP}\,\longrightarrow\,\cF^{b}\,,
\end{equation}
from the fiber space $\cH_{\vP}$ to the Fock space $\cF^{b}$, acted upon by the annihilation- and
creation operators $b_{\vk}$, $b^*_{\vk}$,
where
$b_{\vk}$ corresponds to $e^{i\vk\cdot\vx}  a_{\vk}$, and
$b_{\vk}^*$ to $e^{-i\vk\cdot\vx} a_{\vk}^* $, and
with vacuum $\Omega_{f}:=I_{\vP}(e^{i\vP\cdot\vx})$.
To define $I_{\vP}$ more precisely, we consider a vector
$\psi_{(f^{(n)};\vP)}\in \cH_{\vP}$
with a definite total momentum describing an electron and $n$ bosons. Its wave function in the variables
$(\vx;\vk_1,\dots,\vk_n)$ is given by
\begin{equation}
 e^{i(\vP-\vk_1-\cdots-\vk_n)\cdot\vx}f^{(n)}(\vk_1,\dots,\vk_n)\,,
\end{equation}
where $f^{(n)}$ is totally symmetric in its $n$ arguments.
The isomorphism $I_{\vP}$ acts by way of
\begin{eqnarray}
    \lefteqn{I_{\vP}\big( e^{i(\vP-\vk_1-\cdots-\vk_n)\cdot\vx}f^{(n)}(\vk_1,\dots,\vk_n) \big)}
    \\
    &= &\frac{1}{\sqrt{n!}}\int \, d^3k_1\dots d^3k_n \,f^{(n)}(\vk_1,\dots,\vk_n)\,
    b_{\vk_1}^* \cdots
    b_{\vk_n}^*  \, \Omega_f \,.\nonumber
\end{eqnarray}

\subsection{Hamiltonians} \label{sec: hamiltonians}

We consider a non-relativistic particle moving in a medium of relativistic  bosons. The Hamiltonian of the system is given by
\begin{equation}
H:=\frac{1}{2} \vec{p}^{\,2}\,+\,g\phi(\rho_{\vx})\,+\,H^f\,,
\end{equation}
where:
\begin{itemize}
\item
 The operators $\vx\,,\,\vp$ describe the electron position and momentum, respectively;
 \item
$ H^f:=\,d\Gamma(\omega(|\vk|))$ (see Section \ref{sec: notation}), where $\omega(|\vk|):=|\vk|$, is the free field Hamiltonian. In physicist's notation
$$
H^f = \int d^3k  \, \str \vk \str  \, a^*_{\vk} a^{}_{\vk}.
$$
\item
 The real number $g$, $|g|>0$,  is a coupling constant.
\item
The interaction Hamiltonian is 
\begin{equation} \label{form-factor}
    \phi(\rho_{\vx}) \, := \, \int 
    d^3k\,\rho(\vk) \, (a^*_{\vk}\,e^{-i\vk\cdot\vx}\, + \, a_{\vk}\,e^{i\vk\cdot\vx}) \,, 
\end{equation}
where the form factor $\rho(\vk) \in \RR$ satisfies the following conditions
\begin{enumerate}
\item There is an ultraviolet cutoff $\Lambda$, i.e.\  $\rho(\vk)=0$ whenever $\str \vk\str > \Lambda$.
\item  The function $\rho$ is rotationally invariant, i.e., $\rho(\vk)=\rho(\str\vk\str)$, continously differentiable, $\rho \in C^1$. For expository convenience, when we will describe the decay mechanism in Theorem V.1, we will also assume that $\rho(\vk)  \neq 0$ for $0 < \str \vk \str < \Lambda $. Actually, this assumption is not necessary to state the main result of the theorem,  but simplifies the construction of the trial state in Eq. (\ref{eq: trial vector}) of Theorem V.1.
\item The following infrared regularity condition holds:
\begin{equation} \label{assumption on form factor}
|\rho(\vk)|\leq\cO(|\vk|^{\beta})\,, \qquad  \textrm{and}   \qquad   |\vnabla_{\vk}\rho(\vk)|\leq\cO(|\vk|^{\beta-1}),  \qquad \textrm{as} \, \,  \vk\to 0
\end{equation}
for an exponent $\beta>\betacrit$.  We believe that the critical value, $\beta=\betacrit$, is not optimal. From physical considerations,  the result concerning the instability of the mass shell should hold for any exponent $\beta  \geq -1/2$. For $\beta=-1/2$, the Hamiltonian describes  the interaction of the electron  with the quantized relativistic field with no infrared regularization.
\end{enumerate}

\end{itemize}
The operator $H$ is self-adjoint,  because  $\phi(\rho_{\vx})$ is an infinitesimal perturbation of $H^{0}:=H^f+\frac{\vec{p}^{\,2}}{2}$, and $\dom(H) = \dom(H^{0})$, i.e., the domains of self-adjointness coincide.
Since the Hamiltonian $H$ commutes with the total momentum, it preserves the
fiber spaces $\cH_{\vP}$, for all $\vP\in\RR^3$. Thus, we can write
\begin{equation}
    H\,=\,\int^{\oplus} H_{\vP}\,d^3P\,,
\end{equation}
where
\begin{equation}
    H_{\vP}\,:\,\cH_{\vP}\longrightarrow\cH_{\vP}\,.
\end{equation}
In terms of the operators $b_{\vk}$, $b^*_{\vk}$, and of the
variable $\vP$, the fiber Hamiltonian $H_{\vP}$ is given by
\begin{equation}\label{eq-fibHam}
    H_{\vP} \; := \; H^0_{\vP}+g\phi^b(\rho)\,,
\end{equation}
with
\begin{equation}
H^0_{\vP}:=\frac{\big(\vP-\vP^f\big)^2}{2}+ \;H^{f}\,,
\end{equation}
where, as operators on the fiber space $\cH_{\vP}$,
\begin{eqnarray}
    \vP^f &= &  \int d^3k\, \vk b^*_{\vk} \, b_{\vk} \, ,
    \\
    H^f& =&\,d\Gamma^b(\omega(|\vk|))= \int d^3k \, \omega(|\vk|) b^*_{\vk}  b_{\vk} \,,
\end{eqnarray}
and
\begin{equation}
    \phi^b(\rho) \, := \, \int \,
    d^3k\,\rho(\vk) \, (b^*_{\vk} 
    \, + \, b_{\vk}) \,.
\end{equation}
\\

\subsection{Result}\label{Section-III.1}
The absence of a mass-shell for $|\vP|>1$ is expressed by the following statement: The equation 
\begin{equation}\label{eq:III.20bis}
H_{\vP}\Psi_{\vP}=E_{\vP}\Psi_{\vP}
\end{equation}
has no  normalizable solution for any value of $E_{\vP}$ and for almost every $\vP\in\RR^3$, $|\vP|>1$. What we actually prove in this paper is the absence of \emph{regular} mass shells as formulated  in the theorem below (see also Figure 1).

More concretely, we address the  question whether, for a given region $I\times \Delta_I$ in the momentum-energy space (see (ii) below), there is an open interval $I_g$, $I_g\subset I$,  of size at least $\cO(|g|^{\gamma})$, $\gamma>0$,  where the mass shell exists, with $E_{\vP}\in \Delta_I$ and with the regularity property specified in the theorem. Recall that $\beta$ determines the infrared behaviour of the form factor $\rho$, see \eqref{assumption on form factor}.

\begin{theorem}\label{thm: main}
Assume that the form factor $\rho$ satisfies  \eqref{assumption on form factor},  with  $\beta>\betacrit$, and fix an interval $I$ of the form $I:=(1+\delta,\, \sigma), \delta>0,\sigma<\infty$ and a bounded interval $\Delta_I$. Fix constants $0< C_I, c_I  < \infty$ and exponents $0< \gamma < 1/4$ and $0<\ep< \gamma/4$. Then, there is a $g_{*}>0$ such that, for all $g$ satisfying $0< \str g \str < g_*$, {\bf{ the following is ruled out}}: \\

\noindent There exist normalizable solutions to equation (\ref{eq:III.20bis}), for all $\str \vP \str \in I_g$, such that:
\begin{itemize}
\item[(i)]  $I_g$ is an interval of length larger than $|g|^{\gamma/2}$ ($|I_g|\geq |g|^{\gamma/2}$).
\item[(ii)] $I_g\subset I$ and $E_{\vP} \subset \Delta_I$, for all $|\vP| \in I_g$.
\item [(iii)] For all $|\vP| \in I_g$,
$$
\Big\|\vnabla_{\vP}\Psi_{\vP,E_{\vP}}\Big\|\,<\,C_I.
$$

\item[(iv)]  For all $|\vP| \in I_g$,
$$ \left |E_{\vP}-(|\vP|-\frac{1}{2}) \right |>c_I \,|g|^{\gamma/4-\epsilon}.
$$
\end{itemize}
\end{theorem}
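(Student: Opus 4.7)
My plan is to argue by contradiction: assuming (i)--(iv), I build a trial vector $\trialvector$ inside the fiber $\cH_{\vP_0}$ for a well-chosen $\vP_0\in I_g$ that witnesses a Cerenkov emission and strictly undercuts $E_{\vP_0}$ in energy; the regularity assumption (iii) is then used to convert this energy lowering into a genuine obstruction to the existence of the embedded eigenvalue. The first ingredient is the Hellmann--Feynman identity obtained by differentiating $H_{\vP}\Psi_{\vP}=E_{\vP}\Psi_{\vP}$ in $\vP$ -- permissible on $I_g$ thanks to (iii) -- using $\nabla_{\vP}H_{\vP}=\vP-\vP^f$:
$$
\nabla_{\vP}E_{\vP}\;=\;\frac{\la\Psi_{\vP},(\vP-\vP^f)\Psi_{\vP}\ra}{\|\Psi_{\vP}\|^2}.
$$
Combined with the energy balance $E_{\vP}=\la(\vP-\vP^f)^2\ra/2+\la H^f\ra+g\la\phi^b(\rho)\ra$ extracted from the eigenequation, with the ultraviolet/infrared bounds on $\rho$ provided by \eqref{assumption on form factor}, assumption (iv) is used to place the putative dressed particle in a quantitatively supersonic regime: $|\nabla_{\vP}E_{\vP_0}|-1$ is strictly positive along the radial direction, with a lower bound controlled from below by a suitable root of $|g|^{\gamma/4-\ep}$.

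With that in hand, take a Cerenkov wave vector $\vec{Q}$ along $\hat{\vP}_0$ of magnitude a suitable power of $|g|$ (chosen so that $\vP_0-\vec{Q}\,'$ stays in $I_g$ for all $\vec{Q}\,'$ in the support of a smooth bump $f$ centered at $\vec{Q}$, which is possible thanks to (i)) and set
$$
\trialvector\;=\;\int d^3Q'\,f(\vec{Q}\,')\,b^{*}_{\vec{Q}\,'}\Psi_{\vP_0-\vec{Q}\,'}.
$$
Applying the pull-through identity
$$
H_{\vP_0}\,b^{*}_{\vec{Q}\,'}\Psi_{\vP_0-\vec{Q}\,'}\;=\;(E_{\vP_0-\vec{Q}\,'}+|\vec{Q}\,'|)\,b^{*}_{\vec{Q}\,'}\Psi_{\vP_0-\vec{Q}\,'}+g\rho(\vec{Q}\,')\,\Psi_{\vP_0-\vec{Q}\,'}
$$
together with the CCR and a secondary pull-through applied to the overlap $b_{\vec{Q}\,'}\Psi_{\vP_0-\vec{Q}\,''}$ yields $\|\trialvector\|^2=\|f\|^2(1+O(g^2))$ and
$$
\frac{\la\trialvector,(H_{\vP_0}-E_{\vP_0})\trialvector\ra}{\|\trialvector\|^2}\;=\;-\bigl(|\nabla_{\vP}E_{\vP_0}|-1\bigr)\,|\vec{Q}|+O(|\vec{Q}|^2)+O(g^2),
$$
which, by the velocity bound of the previous paragraph, is strictly negative with a margin that exceeds the tolerances implied by (i) and (iv).

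The contradiction is then closed by combining this strict energy decrease with the once-differentiated eigenequation
$$
(H_{\vP_0}-E_{\vP_0})\,\nabla_{\vP}\Psi_{\vP_0}\;=\;(\nabla_{\vP}E_{\vP_0}-(\vP_0-\vP^f))\,\Psi_{\vP_0},
$$
which identifies $\nabla_{\vP}\Psi_{\vP_0}$ as the reduced resolvent $(H_{\vP_0}-E_{\vP_0})^{-1}_{\perp\Psi_{\vP_0}}$ applied to the right-hand side. Finiteness of $\|\nabla_{\vP}\Psi_{\vP_0}\|$ under (iii) forces the spectral measure of $(\vP_0-\vP^f)\Psi_{\vP_0}$ with respect to $H_{\vP_0}$ to vanish quadratically at $\lambda=E_{\vP_0}$; but the trial state $\trialvector$ -- which is orthogonal to $\Psi_{\vP_0}$ up to an $O(g)$ overlap, computed via a pull-through of $b_{\vec{Q}}$ on $\Psi_{\vP_0}$ and controlled by the infrared hypothesis $\beta>\betacrit$ -- samples the one-boson continuum of $H_{\vP_0}$ at energies just below $E_{\vP_0}$ with density of size $g\rho(\vec{Q})$. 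Under the tight hierarchy $\ep<\gamma/4$ and $\gamma<1/4$, this density dominates what the spectral vanishing permits, producing the contradiction. The main obstacle is precisely this last link: turning an inequality of expectation values into a rigorous statement about the absolutely continuous component of the spectral measure of $H_{\vP_0}$ near $E_{\vP_0}$, while keeping all small-$g$ error powers compatible with the delicate relation between $\gamma$ and $\ep$ built into the hypotheses.
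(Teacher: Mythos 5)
There are genuine gaps, and they sit exactly at the points where the paper has to work hardest. First, your opening reduction is unjustified: you claim that assumption (iv), combined with Hellmann--Feynman and the ``energy balance'' from the eigenvalue equation, places $|\vnabla E_{\vP_0}|-1$ quantitatively above zero. No such implication is available at that stage. The eigenvalue equation gives $E_{\vP}\geq \frac{1}{2}|\vnabla E_{\vP}|^2+\la H^f\ra-\cO(|g|)$, which only reproduces the spectral lower bound $E_{\vP}\geq|\vP|-\frac12-\cO(|g|)$ and yields no lower bound on the group velocity; conversely, an upper bound on $E_\vP$ in terms of $|\vnabla E_\vP|$ requires controlling the boson content $\la H^f\ra$, $\la(\vP^f)^2\ra$ of the putative eigenvector. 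The paper's logic is the reverse of yours: it splits into the near-sonic case $||\vnabla E_\vP|-1|\leq\frac32|g|^{\gamma/3}$, where cone-restricted boson-number bounds (Lemmas \ref{cor-III.6}, \ref{cor-III.7}) plus a fiber virial identity (Theorem \ref{theo-VI.4}) force $E_\vP=|\vP|-\frac12+\cO(|g|^{\gamma/4})$ and hence contradict (iv), and the complementary case, treated by the trial-state/Fermi-golden-rule argument of Theorem \ref{theo-VI.1}. Your proposal never addresses the near-sonic regime, and you cannot eliminate it by fiat from (iv).

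Second, all your quantitative claims about the trial vector ($\|\trialvector\|^2=\|f\|^2(1+\cO(g^2))$, overlaps of size $\cO(g)$, $\cO(g^2)$ errors in the energy expectation) rest on pull-through estimates of the form $b_{\vk}\Psi_{\vP_0}=-g\rho(\vk)\,(H_{\vP_0-\vk}+|\vk|-E_{\vP_0})^{-1}\Psi_{\vP_0}$. Precisely for the Cerenkov kinematics at issue, $E_{\vP_0}-|\vk|$ lies \emph{inside} the spectrum of $H_{\vP_0-\vk}$ (this is the resonance condition $(\vP-\vk)^2/2+|\vk|=E_\vP$), so the resolvent bound fails and the boson content of $\Psi_{\vP,E_\vP}$ cannot be bounded this way; circumventing exactly this failure is the raison d'\^etre of the paper's multiscale virial argument on wave packets $\Psi_{f^g_{\vec Q}}$ (Theorem \ref{theo-III.2}) and its conversion to fiber estimates via regularity (Theorem \ref{theo-III.3}), and even then the bound fails in the forward cone when the velocity is near-sonic. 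Finally, your closing step is incomplete by your own admission: an energy-lowering trial state does not by itself contradict the existence of an embedded eigenvalue, and the link you sketch between the quadratic vanishing of the spectral measure of $(\vP_0-\vP^f)\Psi_{\vP_0}$ and the ``density'' sampled by $\trialvector$ is not established. The paper instead closes the argument through the exact identity $(\trialvector,(H^{0}-E_{\vP})\Psi_{f^g_{\vec Q}})=-g(\trialvector,\phi(\rho_{\vx})\Psi_{f^g_{\vec Q}})$, a lower bound $\geq D_1|g|\epsilon^{1/2}\|f^g_{\vec Q}\|_2^2$ on the vacuum-sector contribution (available only after Corollary \ref{cor-V.6} shows $\Psi_{\vP,E_\vP}\approx\Omega_f$, $E_\vP\approx\vP^2/2$), and upper bounds on the remaining terms using the number estimates, with the window $c_1|g|^{1-2\gamma}<\epsilon^{1/2}<c_2|g|^{2\gamma}$. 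Without a substitute for the boson-number control, your scheme does not go through.
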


We note that it is an interesting open problem to understand whether single-particle states could emerge at the boundary of the energy-momentum spectrum, i.e.\ near $E_\vp= |\vP|-\frac{1}{2}$. Our results only rule out the existence of single particle states  whose energies are \emph{embedded} in the energy-momentum spectrum and with suitable regularity properties as far as their dependence on $\vP$ is concerned. \\

\noindent
{\bf{Remark}}
\noindent
In the following Theorems, Lemmas, and Corollaries, we  always assume  that the \emph{Main Hypothesis} in Section \ref{sec: main hypothesis} holds.  Furthermore,  $|g|$ ``sufficiently small" means  $0<|g|<g_*$,  where $g_*$ depends only on $I$, on $\Delta_I$, and on $\gamma$, but with the form factor $\rho$ and the ultraviolet cutoff $\Lambda$ kept fixed.
\\

\begin{figure}[h!]  \label{fig: result}
\psfrag{one}{$1$}
\psfrag{deltaone}{ $\delta$}
\psfrag{mom}{ $\str \vP \str$}
\psfrag{ene}{ $E$}
\psfrag{free}{ $\frac{1}{2}\str \vP\str^2$}
\psfrag{lineair}{ $\str \vP \str-\frac{1}{2}$}
\psfrag{gap}{ $\sim \str g \str^{\gamma/4-\epsilon}$}
\hspace{-1cm}
\includegraphics[width = 16cm, height=6cm]{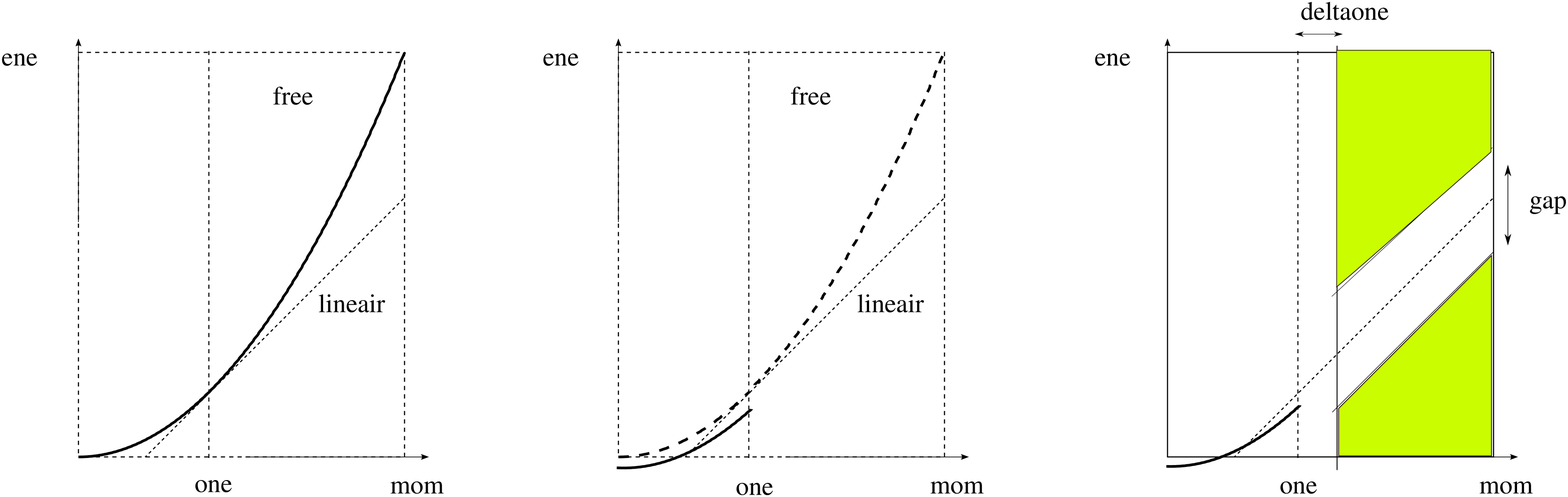}   
\caption{\footnotesize{The joint energy-momentum spectrum. By the rotation symmetry, it suffices to plot the $(E, \str \vP \str)$- plane.  In the leftmost figure, we have drawn the spectrum of the uncoupled system.  The parabola $\frac{1}{2}\str \vP \str^2$ (in boldface) is the mass shell and the spectrum  lies above the three-dimensional surface consisting of $\frac{1}{2}\str \vP \str^2$, for $\str\vP|<1$, and $\str \vP \str -1/2$, for
$\str \vP \str >1$. Hence, for $\str \vP \str >1$, the mass shell is \emph{embedded} in the continuum. 
 In the middle figure, we represent the situation when the coupling is switched on, \emph{according to formal perturbation theory}. The mass shell has dissappeared (drawn as a dashed line) for $\str \vP \str >1$. For $\str \vP \str <1$, the mass shell persists but gets deformed (mass renormalization). 
  In the rightmost figure, we represent what is know rigorously:  a regular mass shell is excluded in the coloured area (result of the present paper) and there is a renormalized mass shell for small $\str P \str $ (earlier works, see Section \ref{sec: intro} ). }
  }
\end{figure}

\subsubsection{Main ingredients of the proof}
\begin{itemize}
\item[(a)]
If $\Psi_{\vP,E_{\vP}}$ existed with properties (i)-(iv) above, and $||\vnabla E_\vP|-1|>\frac{3}{2}|g|^{\gamma/3}$, then 
\begin{equation}
\|\Psi_{\vP}^0-\Psi_{\vP,E_{\vP}}\|\leq \cO(|g|^{(1-2\gamma)/6}),
\end{equation}
where $\Psi_{\vP}^{0}$ is the bare one-particle state, (i.e., $\Psi_{\vP}^{0}=\Omega_{f})$, and
\begin{equation}
\big|\frac{\vP^2}{2}-E_{\vP}\big|\leq \cO(|g|^{(1-2\gamma)/6})\,.
\end{equation}
\item[(b)]
If  $\Psi_{\vP,E_{\vP}}$, as in (a), existed then it could decay into a state consisting of an unperturbed single particle state and a boson with momentum $\vk$ in a region of momentum space away from the ray $\{\lambda \vP\, |\, 0<\lambda \leq \infty \}$.
\item[(c)]
If $||\vnabla E_\vP|-1| \leq\frac{3}{2}|g|^{\gamma/3}$ then $|E_{\vP}-(|\vP|-\frac{1}{2})|<const\, |g|^{\gamma/4}$. In other words, a mass shell with group velocity close to one, necessarily lies near  the boundary of the energy momentum spectrum.   
\end{itemize}

\subsection{Notation} \label{sec: notation}

Here is a list of notations used in subsequent sections.

\begin{enumerate}
\item Given any vector $\vec{u}\in\RR^3$, $\hat{u}:=\frac{\vec{u}}{|\vec{u}|}$.

\item $\cF_{fin}$ is the dense subspace of $\cF$ obtained as the span of vectors containing  finitely many bosons.

\item $\one_{(a\,,\,b)}(\vk)$ is the characteristic function of the set $$\{\vk\in\RR^3\,\,:\,\,|\vk|\in(a\,,\,b)\}\,.$$

\item For any function $w\in \fh$, $\|w\|_{2}$ is the corresponding $L^2$-norm.

\item $d\Gamma(A)$ is the second quantization of an operator $A$ acting on $\fh$;
$d\Gamma(A)$ is an operator on $\cF$. Analogously, $d\Gamma^b(A)$ is defined
on $\cF^b$.

\item We define the (boson) number operators by
$N:=d\Gamma(\one(\vk))$ and  $N^b:=d\Gamma^b(\one(\vk))$, where $\one(\vk)$ is the identity operator on $L^2(\RR^3;d^3k)$.

\item We use the notation
$$
a^*(f_\vx) := \int d^3k \,  f_\vx(\vk)  a^*_\vk, \qquad     a(f_\vx) := \int d^3k  \,  \overline{f_\vx(\vk) }  a_\vk
$$
for smeared creation/annihilation operators, depending also on  the (electron) position $\vx$. 
\item Expressions like  $(\vP,E_{\vP}) \in I_g\times \Delta_I $ are interpreted as follows: $\vP\in \RR^3$ with $|\vP|\in I_g$,  and $E_{\vP}\in \Delta_{I}$.

\end{enumerate}

\subsection{Structure of the paper}

In Section \ref{Section-III} below, we state a \emph{Main Hypothesis} (Section \ref{subsec-II.1}).
The upshot of our analysis is Theorem \ref{theo-VI.4} in Section \ref{Section-VI}. This theorem describes the possible location of a mass shell, under the assumption that the \emph{Main Hypothesis}  holds true. 
In other words, the implication 
\begin{equation}
\textrm{\emph{Main Hypothesis}}   \mathop{\Longrightarrow}\limits_{\footnotesize{\left.\begin{array}{c} \textrm{Assumptions} \\ \textrm{ in Section \ref{sec: hamiltonians} }  \end{array} \right.  }}   \textrm{Theorem \ref{theo-VI.4} }
\end{equation}
is our main result, and this implication gives rise to Theorem \ref{thm: main}. 

In the remainder of Section \ref{subsec-II.1}, we state some immediate consequences of the \emph{Main Hypothesis}, and in Section  \ref{subsec-II.2},  we put the technical tools in place. 
Section \ref{SectionIII.2}  contains a rather detailed description of the strategy of our proofs. 
The proofs themselves are presented in Sections \ref{Section-V} and \ref{Section-VI}. 
 An appendix contains the proofs of some preliminary results used in Section   \ref{Section-V}.
\section{Strategy of the proof}\label{Section-III}
\resetequ

\subsection{Main Hypothesis and key properties\label{subsec-II.1}}

The proof of our result, Theorem \ref{thm: main}, is by contradiction. We will assume that  a regular mass shell exists, and subsequently, we derive that it cannot be located anywhere else than near the boundary of the energy-momentum spectrum.  Our assumption is stated in Section \ref{sec: main hypothesis} below and it will be referred to as the \emph{Main Hypothesis}.  Throughout the rest of the paper, we assume that the \emph{Main Hypothesis} holds. 
In Section \ref{sec: properties}, we derive some consequences of the \emph{Main Hypothesis}, namely Properties P1, P2 and P3. 

\subsubsection{Main Hypothesis} \label{sec: main hypothesis}

Let $R$ be a rotation matrix in $\RR^3$ and $U(R)$ the unitary operator implementing the transformation 
\begin{equation}
b_{\vk}\quad\rightarrow\quad b_{R^{-1}\vk}\,=:\,U^{*}(R)\,b_{\vk}\,U(R)\,.
\end{equation}
The identity
\begin{equation}
U^{*}(R)\,H_{R\vP}\,U(R)\,=\,H_{\vP}\,,
\end{equation}
implies that if $\Psi_{\vP,E_{\vP}}$ is a normalized eigenvector of $H_{\vP}$ with eigenvalue $E_{\vP}$ then
$U(R)\Psi_{\vP,E_{\vP}}$ is an eigenvector of $H_{R\vP}\,$ with the same eigenvalue, i.e.,
\begin{equation}
H_{R\vP}U(R)\Psi_{\vP,E_{\vP}}\,=\,E_{\vP}\,U(R)\Psi_{\vP,E_{\vP}}\,.
\end{equation}
In particular, the existence of an eigenvector, $\Psi_{\vP,E_{\vP}}$, of $H_{\vP}$ for all $\vP$ in a given direction, $\hat{P}$, yields a mass shell with energy function $E_{\vP}\equiv E_{|\vP|}$. 
\\

\noindent
\textbf{Main hypothesis}: \emph{We temporarily assume that single-particle states, $\Psi_{\vP,E_{\vP}}$, exist, i.e.,}
\begin{equation}
H_{\vP}\Psi_{\vP,E_{\vP}}\,=\,E_{\vP}\,\Psi_{\vP,E_{\vP}},\, \qquad \norm  \Psi_{\vP,E_{\vP}} \norm=1\,,
\end{equation}
\emph{such that the vector $\Psi_{\vP,E_{\vP}}$ is differentiable in $\vP$ with }
\begin{equation}\label{eq:II.4}
\Big\|\vnabla_{\vP}\Psi_{\vP,E_{\vP}}\Big\|\,<\,C_{I}\,,
\end{equation}
\emph{where the constant $C_I< \infty$, for all $\vP$ such that $|\vP|\in I_g\subset I$ and for $E_{\vP}\in \Delta_{I}$, where  $\Delta_I$ is a bounded interval. Here, $I_g$ is an open interval, $|I_g|>|g|^{\gamma/2}$, and $I:=(1+\delta, \sigma)\, , \,\sigma-1>\delta>0$.} 
\\

From the assumption in Eq. (\ref{eq:II.4}), the following properties follow  for $|\vP| \in I_g\,,\,E_{\vP}\in \Delta_{I} $.
\\

\subsubsection{Properties (P1), (P2) and (P3)} \label{sec: properties}
\begin{itemize}
\item[(P1)]
$E_{\vP}=E_{|\vP|}$ is differentiable and the Feynman-Hellman formula holds
\begin{equation}\label{eq:II.5.1}
\vnabla E_{\vP}\,=\,(\Psi_{\vP,E_{\vP}}\,,\,(\vP-\vP^f)\,\Psi_{\vP,E_{\vP}})\,.
\end{equation}
The expression on the R.H.S. (right-hand side)  of (\ref{eq:II.5.1}) is continuous in $\vP$. Thus $\vnabla E_{\vP}$ is a continuous function of $\vP$. Moreover, $|\vnabla E_{\vP}|<C'_{I}$ for some $C'_I <\infty$, and, because of rotation invariance, $\vnabla E_{\vP}$ and $\vP$ are colinear.
\item[(P2)]
For some $0<C_{I}''< \infty$,
\begin{equation}\label{prop-P2}
\Big|\frac{\partial_{|\vP|}^2 E_{\vP}}{\partial |\vP|^2}\Big|\leq C_{I}''\,.
\end{equation}
Starting from the derivative of the R.H.S. of (\ref{eq:II.5.1}), this bound can be easily obtained using  (\ref{eq:II.4})  and that $H^0_{\vP}$ is  $H_{\vP}$-bounded.
\item[(P3)]
From $H^f=\,d\Gamma^b(|\vk|)$ and Eq. (\ref{eq:II.4}), it follows that
\begin{equation}\label{eq:II.6}
|\vnabla_{\vP}\,(\Psi_{\vP,E_{\vP}}\,,\,d\Gamma^b (\one_{(\frac{1}{n+1},\frac{1}{n})}(\vk))\,\Psi_{\vP,E_{\vP}})|\,\leq \cO(n\,C_I\, [(\sup_{\vP\in I}|E_{\vP}|)+1]\,; 
\end{equation}
here we use the  inequality $$\|d\Gamma^b (\one_{(\frac{1}{n+1},\frac{1}{n})}(\vk))\,\psi\|\leq(n+1)\|H^f\,\psi\|\,,\quad \forall \psi\in \text{Dom}(H^f)\,,$$ and that $H^f$ is $H_{\vP}$-bounded.
\end{itemize}
\subsection{Technical Tools }\label{subsec-II.2}

\noindent
We will use two different virial arguments to expand $\Psi_{\vP,E_{\vP}}$ in the coupling constant $g$, $|g|\ll 1$. For this purpose, we must introduce single-particle ``wave packets", $\Psi_{f_{\vec{Q}}^g}$, defined below.

\begin{itemize}
\item[($\mathcal{I}$)]
\emph{Single-particle ``wave packets", $\Psi_{f_{\vec{Q}}^g}$, and the interval $I'_g$.}

For $|g|$ small enough, we define the open interval $I'_g$ such that
\begin{equation}\label{I-constraint}
(|I_g|>)|I'_g|>4|g|^{\gamma}\,,
\end{equation}
with the property that 
\begin{equation}\label{II-constraint}
|\vec{Q}|+|\vec{z}|\in I_g\,,
\end{equation}
for all $|\vec{Q}|\in I'_g$ and for all $\vec{z}$ such that $|\vec{z}|<|g|^{\gamma}$.

\noindent
We consider single-particle ``wave packets", $\Psi_{f_{\vec{Q}}^g}$,  with wave function, $f_{\vec{Q}}^g$,  centered around vectors $\vec{Q}$,  $|\vec{Q}|\in I'_g$. The vector  $\Psi_{f_{\vec{Q}}^g}$ is defined by
\begin{equation}\label{eq:II.8}
\Psi_{f_{\vec{Q}}^g}\,:=\,\int f_{\vec{Q}}^g(\vP)\,\Psi_{\vP,E_{\vP}}d^3P
\end{equation} 
where
$f_{\vec{Q}}^g(\vP):=\cR(\frac{|\vP|-|\vec{Q}|}{|g|^{\gamma}})\cA(\frac{\theta_{\hat{QP}}}{|g|^{\gamma}})$, $\theta_{\hat{QP}}$ is the angle between $\vec{Q}$ and $\vec{P}$, and $\cR(z)$, $\cA(\theta)$ are defined as follows.

\noindent
1) $\cR(z)$, $z\in\RR$,  is non-negative, smooth and compactly supported in  the interval $(-1,1)$, $\cR(z)=1$ for $z\in (-\frac{1}{2},\frac{1}{2})$, 

\noindent
2) $\cA(\theta)$, $\theta \in \RR$, is non-negative, smooth and compactly supported in  the interval $(-1,1)$,  $\cA(\theta)=1$ for $ \theta \in (-\frac{1}{2},\frac{1}{2})$.  Therefore, the angular restriction  $$\hat{P}\cdot\hat{P'}\geq \cos(|g|^{\gamma})$$  holds for any $\vP\,,\,\vP'\in\,supp\,f_{\vec{Q}}^g$.

\noindent
3) Since $|\vec{Q}|>1$, it follows  from the definitions of $\cR(z)$ and $\cA(\theta)$  that:
 $$|\vnabla_{\vP}f_{\vec{Q}}^g(\vP)|\leq \cO(|g|^{-\gamma})\,,$$  for any $\vP\in\,supp\,f_{\vec{Q}}^g$. 
\\

\item[($\mathcal{II}$)]
\emph{Multi-scale virial argument on the Hilbert space $\cH$ for the Hamiltonian $H$.}

We define dilatation operators on the one-particle space $\fh$,  constrained to a suitable range of frequencies and to a suitable angular sector around a direction $\hat{u}$. We introduce the conjugate operator 
\begin{equation}\label{eq:III.1}
D_{n,\perp}^{\hat{u},\hat{Q}}\,:=\,d\Gamma(d_{n,\perp}^{\hat{u},\hat{Q}})\,,
\end{equation}  
with 
\begin{equation}\label{eq:II.10}
d_{n,\perp}^{\hat{u},\hat{Q}}\,:=\,\chi_{n}(|\vk|)\xi_{\hat{u}}^g(\hat{k})\frac{1}{2}\,(\vk_{\perp}\cdot\,i\vnabla_{\vk_{\perp}}+i\vnabla_{\vk_{\perp}}\cdot\,\vk_{\perp})\xi_{\hat{u}}^g(\hat{k})\chi_{n}(|\vk|)\,,
\end{equation}
where:

\begin{itemize} 
\item [(a)]
$\vk_{\perp}$ is the component of the vector $\vk$ orthogonal to $\vec{Q}$, i.e.,  $\vk_{\perp}:=\vk-\frac{\vk\cdot\vec{Q}}{|\vec{Q}|^2}\,\vec{Q}$;

\noindent
$\chi_{n}(|\vk|)$, $n\in\NN$, are  non-negative, $C^{\infty}(\RR^+)$ functions with the properties:
\begin{itemize}
\item[(i)]
$\chi_{n}(|\vk|)=0$ for $|\vk|\leq\frac{1}{2(n+1)}$ and for $|\vk|\geq\frac{3}{2n}$;
\item[(ii)]
$\chi_{n}(|\vk|)=1$ for $\frac{1}{n+1}\leq|\vk|\leq\frac{1}{n}$;
\item[(iii)]
$|\chi_{n}'(|\vk|)|\leq C_{\chi}\,n$, for all $n\in\NN$, where the constant $C_{\chi}$ is independent of  $n$.
\end{itemize}
 
\item [(b)]
\noindent
 $\xi_{\hat{u}}^g(\hat{k})$ (see Figure 2), $0\leq \xi_{\hat{u}}^g(\hat{k})\leq 1$,  is a  smooth function with support in the $g$-dependent
cone
 \begin{equation}\label{eq:function-cone1}
\mathcal{C}_{\hat{u}}:= \{\hat{k}\,\,:\,\,\hat{k}\cdot\hat{u}\geq \cos(|g|^{\gamma})\}\,,
 \end{equation}
 such that:
 \begin{itemize}
\item[i)]
 \begin{equation}
 \xi_{\hat{u}}^g(\hat{k})=1\quad\text{for}\quad\{\hat{k}\,\,:\,\,\hat{k}\cdot\hat{u}\geq \cos(\frac{1}{2}|g|^{\gamma})\}\,;
 \end{equation}
\item[ii)] 
 \begin{equation}\xi_{\hat{u}}^g(\hat{k})=0\quad\text{for}\quad\{\hat{k}\,\,:\,\,\hat{k}\cdot\hat{u}< \cos(|g|^{\gamma})\}\,;
 \end{equation}
\item[iii)] 
\begin{equation}
|\partial_{\theta_{\hat{ku}}}\xi_{\hat{u}}^g(\hat{k})|\leq C_{\xi}\,|g|^{-\gamma}\,,\label{eq:function-cone4}
\end{equation}
where $\theta_{\hat{ku}}$ is the angle between $\hat{k}$ and $\hat{u}$, and the constant $C_{\xi}$ is independent of $g$.
\end{itemize}
\end{itemize}
\begin{figure} \label{fig: singlecone}
\includegraphics[width=1.0\textwidth,height=0.4\textheight]{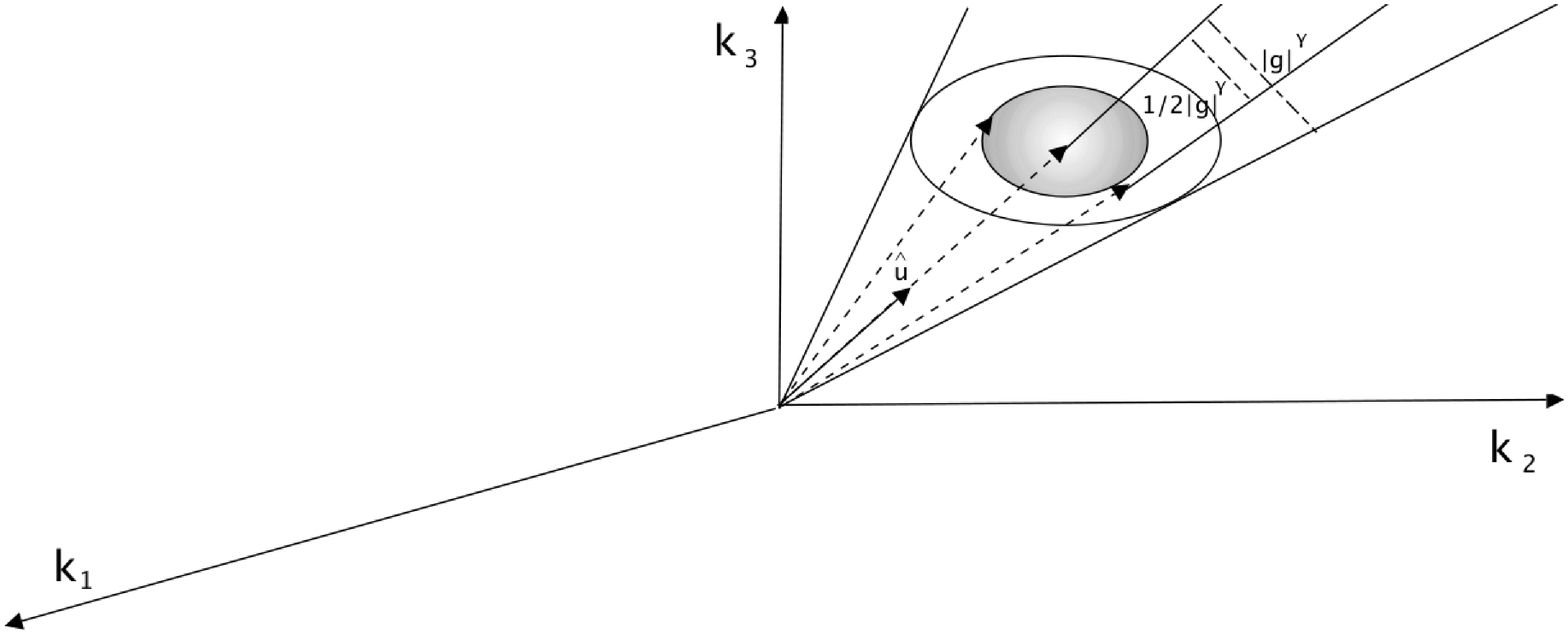}
\caption{The cone $\mathcal{C}_{\hat{u}}$ corresponding to the support of the smooth characteristic function $\xi_{\hat{u}}^g(\hat{k})$.}
\end{figure}

We also define 
\begin{equation}\label{eq:II.12}
d_{n}^{\hat{u}}\,:=\,\frac{1}{2}\chi_{n}(|\vk|)\xi_{\hat{u}}^g(\hat{k})[\,\vk\cdot\,i\vnabla_{\vk}+i\vnabla_{\vk}\cdot\,\vk\,]\chi_{n}(|\vk|)\xi_{\hat{u}}^g(\hat{k})\,,
\end{equation}
and we introduce the second quantized operator
\begin{equation}\label{eq:III.6}
D_{n}^{\hat{u}}\,:=\,d\Gamma(d_{n}^{\hat{u}})\,.
\end{equation}

Later on in the paper, when we implement the virial argument,  we will make use of the creation/annihilation operators
$$ a^*(id_n^{\hat{u}}\rho_\vx)=\int d^3k \, i (d_n^{\hat{u}}\rho_\vx)(\vk) a^*_{\vk} $$
$$ a(id_n^{\hat{u}}\rho_\vx)=\int d^3k  \,  \overline{ i (d_n^{\hat{u}}\rho_\vx)(\vk)} a_{\vk}, $$
and, analogously, $a(id_{n,\perp}^{\hat{u},\hat{Q}} \rho_{\vx})$, $a^{*}(id_{n,\perp}^{\hat{u},\hat{Q}} \rho_{\vx})$.  
In Lemma \ref{lem-III.1}, we show that  the vector $\Psi_{f_{\vec{Q}}^g}$ belongs to the form domain of these operators.
\item[ ($\mathcal{III}$)]
\emph{Virial argument in each fiber space $\cH_{\vP}$.}

\noindent
Here we consider
$$
D^b_{\frac{1}{\kappa},\kappa}\,:=\,d\Gamma^b(d_{\frac{1}{\kappa},\kappa})\,
$$
as  the conjugate operator, 
where 
\begin{equation}  \label{eq:III.7}
d_{\frac{1}{\kappa}, \kappa}\,:=\,\chi_{[\frac{1}{\kappa}, \kappa]}(|\vk|)\frac{1}{2}\,(\vk\cdot\,i\vnabla_{\vk}+i\vnabla_{\vk}\cdot\,\vk)\chi_{[\frac{1}{\kappa},\kappa]}(|\vk|)\,.
\end{equation}
$\chi_{[\frac{1}{\kappa},\kappa]}(|\vk|)$, $\infty>\kappa>\max\{\Lambda, 1\}$,  are  non-negative, $C^{\infty}(\RR^+)$ functions with the properties:
\begin{itemize}
\item[(i)]
$\chi_{[\frac{1}{\kappa},\kappa]}(|\vk|)=0$ for $|\vk|\geq 2\kappa $, $|\vk|\leq \frac{1}{2\kappa}$;
\item[(ii)]
$\chi_{[\frac{1}{\kappa},\kappa]}(|\vk|)=1$ for $\frac{1}{\kappa} \leq|\vk|\leq\kappa$;
\item[(iii)]
for some $C>0$,  $|\chi'_{[\frac{1}{\kappa},\kappa]}(|\vk|)|<C \kappa$;
\end{itemize}

Analogously to $a^{*}(id_{n}^{\hat{u}} \rho_{\vx}), a(id_{n}^{\hat{u}} \rho_{\vx})$, we will use
$$ b^*(id_{\frac{1}{\kappa},\kappa}\,\rho)=\int d^3k  \, (id_{\frac{1}{\kappa},\kappa}\,\rho)(\vk)  b^*_{\vk}$$ 
$$b(id_{\frac{1}{\kappa},\kappa}\,\rho)= \int d^3k  \,  \overline{(id_{\frac{1}{\kappa},\kappa}\,\rho)(\vk)}  b_{\vk}.$$


\end{itemize}
We will also consider the $g$-dependent  cones (see Figure 3),
\begin{equation}\label{eq:cone}
\mathcal{C}_{\hat{P}}^a:=\{\vk\,\,: \,\,|\hat{k}\cdot\hat{P}|\leq \cos(a|g|^{\gamma/8})\}\,,
\end{equation}
and use the smooth functions  $\xi_{\mathcal{C}_{\hat{P}}^a}^g(\hat{k})$, $a=\frac{1}{2},2$,  defined below.

\noindent
The functions $\xi_{\mathcal{C}_{\hat{P}}^a}^g(\hat{k})$, $0\leq \xi_{\mathcal{C}_{\hat{P}}^a}^g(\hat{k})\leq 1$, are chosen such that
\begin{itemize}
\item[i)]
\begin{equation}\label{eq:cone2-1}
\xi_{\mathcal{C}_{\hat{P}}^a}^g(\hat{k})=1\quad\text{for}\quad\{\hat{k}\,\,:\,\,|\hat{k}\cdot\hat{P}|\leq \cos(2a|g|^{\gamma/8})\}\,;
\end{equation}
\item[ii)]
\begin{equation}\label{eq:cone2-2}
\xi_{\mathcal{C}_{\hat{P}}^a}^g(\hat{k})=0\quad\text{for}\quad\{\hat{k}\,\,:\,\,|\hat{k}\cdot\hat{P}|> \cos(a|g|^{\gamma/8})\}\,;
\end{equation}
\item[iii)]
\begin{equation}\label{eq:cone2-3}
|\partial_{\theta_{\hat{kP}}}\xi_{\mathcal{C}_{\hat{P}}^a}^g(\hat{k})|\leq C_{\xi}\,|g|^{-\gamma/8}\,,
\end{equation}
 for a  constant $C_{\xi}$ independent of $g$, where  $\theta_{\hat{kP}}$  is the angle between $\hat{k}$ and $\hat{P}$. 
\end{itemize}
\begin{figure}\label{fig: doublecone}
\includegraphics[width=0.8\textwidth,height=0.4\textheight]{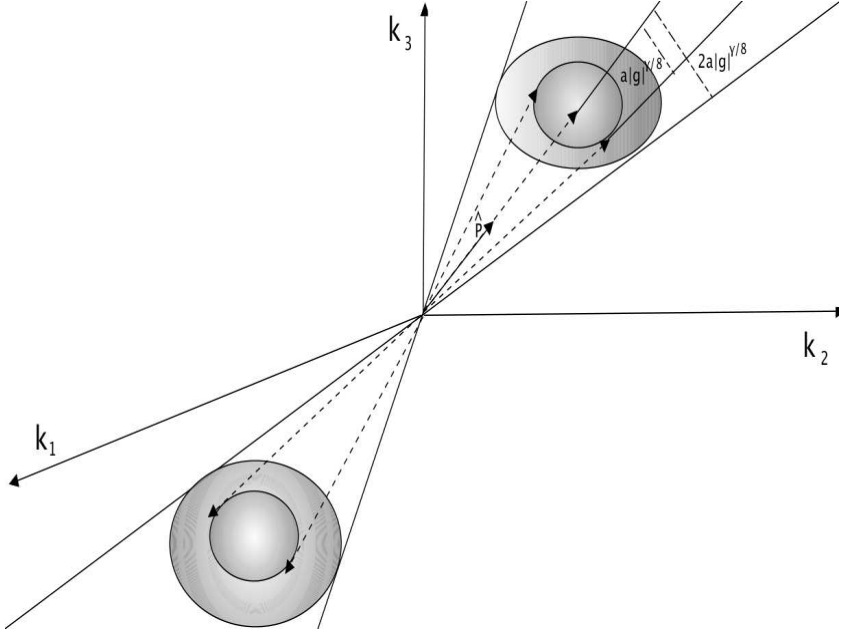}
\caption{The double cone $\mathcal{C}_{\hat{P}}^{a}$ is the complement in $\RR^3$ of the inner double cone around $\hat{P}$ of angular width $a|g|^{\gamma/8}$.}
\end{figure}
\newpage
\subsection{Description of  strategy}\label{SectionIII.2}

To exclude the existence of eigenvalues $E_{\vP}$, $|\vP|>1$,  we elaborate on  an argument introduced in \cite{FP}. 
The idea of the proof is as follows. One assumes that an
eigenvector $\Psi_{\vP,E_{\vP}}\in \cH_{\vP}$ of $H_{\vP}$ exists,  
for some energy $E_{\vP}$ in a compact set. Then, using a multiscale virial argument, one intends to prove  that
\begin{equation}
(\Psi_{\vP,E_{\vP}}, N^b \Psi_{\vP,E_{\vP}})\leq \cO(g^2)\,,\label{eq:III.24bis}
\end{equation}
where $N^{b}$ is the boson number operator in the fiber spaces. The
multiscale virial argument involves the dilatation operators
\begin{equation}
d_n\,:=\,\frac{1}{2}\chi_{n}(|\vk|)[\,\vk\cdot\,i\vnabla_{\vk}+i\vnabla_{\vk}\cdot\,\vk\,]\chi_{n}(|\vk|)\,,
\end{equation}
on the one-particle space $\fh$, where $\chi_{n}$ is a
suitable smooth approximation to the characteristic function of the interval
$[\frac{1}{n+1}\,,\,\frac{1}{n}]$ contained in the positive
frequency half axis, $n=1,2,...$. 
After introducing the second quantized dilatation operators $D_n^b:=d\Gamma^b(d_n)$, one starts
from the \emph{formal} virial
identity 
\begin{equation}\label{eq:III.25}
0=(\Psi_{\vP,E_{\vP}}\,,\,i[H_{\vP}\,,\,D_n^b]\,\Psi_{\vP,E_{\vP}})
\end{equation} 
to establish the scale-by-scale inequality below, in a rigorous way:
\begin{equation}\label{eq:III.25-1}
(\Psi_{\vP,E_{\vP}}\,,\,N^{b}_{n}\Psi_{\vP,E_{\vP}})\leq \cO(g^2n^2\||\vk|^{\beta}\,\one_{(\frac{1}{2(n+1)},\frac{3}{2n})}(\vk)\|_2^2)\,,
\end{equation}
where $N^{b}_{n}:=\int\,d^3k\,a^*(\vk)\chi_{n}^2(|\vk|)a(\vk)$, $n=1,2,3,\dots$. If (\ref{eq:III.25-1}) holds true, for sufficiently large values of the exponent $\beta$ in the form factor $\rho$, one can sum over $n$ and conclude that $(\Psi_{\vP, E_{\vP}}\,,\,N^{b}\Psi_{\vP,E_{\vP}})\leq \cO(g^2)$.
Next, the eigenvalue equation (\ref{eq:III.20bis}) and the inequality in Eq. 
(\ref{eq:III.24bis}) can be combined to conclude that the vector $\Psi_{\vP,E_{\vP}}$ and
the eigenvalue $E_{\vP}$ must fulfill the following estimates:
\begin{equation} \label{eq:asymptotic1}
\|\Psi_{\vP,E_{\vP}}-\Psi_{\vP}^{0}\|^2\leq\cO(g^2)\,,
\end{equation}
where $\Psi_{\vP}^{0}:=\Omega_f$ is the unperturbed eigenstate, and
\begin{equation} \label{eq:asymptotic2}
|E_{\vP}-\frac{\vP^2}{2}|\leq\cO(g^2)\,.
\end{equation}
This result would imply that putative eigenvalues of $H_{\vP}$ lie in an $\cO(g^2)$-neighborhood of
the eigenvalue of the Hamiltonian $H^{g=0}_{\vP}$. 

\noindent
Then the argument  proceeds with the construction of suitable trial states of the type
\begin{equation}
\eta_\vP\,:=\,\int\,d^3k\,\,\frac{1}{\epsilon^{\frac{1}{2}}}\,h\big(\frac{(\vP-\vk)^2/2+|\vk|-E_{\vP}}{\epsilon}\big)b^{*}_{\vk}\,\Psi_{\vP}^{0}\,,\label{eq:trial}
\end{equation}
where $\epsilon>0$ and $h(z)\in C_0^{\infty}(\RR)$, $h(z)\geq0$. 
One then exploits the identity
\begin{equation}\label{eq:III.23 intro}
(\eta_\vP\,,\,(H_{\vP}-E_{\vP})\Psi_{\vP,E_{\vP}})=0\,
\end{equation}
that must hold true if $\Psi_{\vP,E_{\vP}}$ is an eigenvector of $H_{\vP}$.
Starting from Eqs. (\ref{eq:asymptotic1})-(\ref{eq:asymptotic2}),  and using that  the equation
\begin{equation}\label{eq:root}
(\vP-\vk)^2/2+|\vk|-E_{\vP}=0
\end{equation}
has  solutions  for $|\vP|>1$,  provided $|g|$ is small enough, one arrives at a contradiction,  for $\epsilon$ and $|g|$ small enough. 

\noindent
However,  the procedure just  outlined (mimicking the treatment of atomic resonances in \cite{FP}) will not  work without some important modifications. We will therefore implement analogous, but more elaborate strategy.  

\noindent
The first problem ecountered is that  we cannot control the expectation value $$(\Psi_{\vP,E_{\vP}}, N^b \Psi_{\vP,E_{\vP}})$$ by a multiscale virial argument in the fiber space $\cH_{\vP}$,  because of the term $(\vP^f)^2$ in $H_{\vP}$.  The commutator of $(\vP^f)^2$  with $d\Gamma^b (d_{n})$, formally given by
\begin{equation}
\vP^f\cdot d\Gamma^b(\chi_n^2(\vk)\vk)+ d\Gamma^b(\chi_n^2(\vk)\vk)\cdot \vP^f\,,
\end{equation}
cannot be controlled in terms of  the commutator of $H^f$ with $d\Gamma^b (d_{n})$ . Consequently,  the estimate in Eq. (\ref{eq:III.25-1}) cannot be justified starting from the virial identity in Eq. (\ref{eq:III.25}).

\noindent
At the price of limiting our analysis to \emph{regular} mass shells (see \emph{Main Theorem} in Section \ref{Section-III.1}), this problem can be circumvented by implementing a multiscale virial argument in the full Hilbert space, by using single-particle ``wave packets"  rather than fiber eigenvectors, i.e., vectors in $\cH$ of the type
\begin{equation}
\Psi_{f}\,:=\,\int f(\vP)\,\Psi_{\vP,E_{\vP}}d^3P\,,
\end{equation}
where $f(\vP)$ is a smooth function  with support in $I_g$ (the region of  momenta for which an eigenstate was assumed to exist).  In practice, we choose  $f =f_{\vec{Q}}^g$  to be sharply peaked around a given momentum $\vec Q$, see definition below \eqref{eq:II.8}.  In the full Hilbert space, we can essentially mimick the treatement of atomic resonances to derive the following result (see  Section \ref{Section-V}).

\begin{blanktheorem}[\textrm{\ref{theo-III.2}}]
For $|g|$ sufficiently small,
\begin{equation}\label{eq:III.12a-intro}
\frac{( \Psi_{f_{\vec{Q}}^g}\,,\,N_{n,\mathcal{C}_{\hat Q}^{1/2}}\,\Psi_{f_{\vec{Q}}^g})}{( \Psi_{f_{\vec{Q}}^g}\,,\,\Psi_{f_{\vec{Q}}^g})}\leq\,\cO(g^{2(1-2\gamma)}n^2\||\vk|^{\beta}\,\one_{(\frac{1}{2(n+1)},\frac{3}{2n})}(\vk)\|_2^2) \,,
\end{equation}
where $N_{n,\mathcal{C}_{\hat Q}^{1/2}}:=d\Gamma(\chi_{n}^2(|\vk|)\xi_{\mathcal{C}_{\hat Q}^{1/2}}^{g\,2}(\hat{k}))$ and $|\vec{Q}|\in I'_g$.  

Furthermore, if for all  $\vP\in supp f_{\vec{Q}}^g$ the inequality $$||\vnabla E_{\vP}|-1|>|g|^{\gamma/3}$$ holds true, then
\begin{equation}\label{eq:III.14a-intro}
\frac{(\Psi_{f_{\vec{Q}}^g}\,,\,N_{n}\,\Psi_{f_{\vec{Q}}^g})}{( \Psi_{f_{\vec{Q}}^g}\,,\,\Psi_{f_{\vec{Q}}^g})}\leq  \, \cO(g^{2(1-2\gamma)}n^2\||\vk|^{\beta}\,\one_{(\frac{1}{2(n+1)},\frac{3}{2n})}(\vk)\|_2^2)\,,
\end{equation}
where $N_{n}:=d\Gamma(\chi_{n}^2(|\vk|)$. 

\noindent
The constants in (\ref{eq:III.12a-intro}), (\ref{eq:III.14a-intro})  can be chosen uniformly in $\vec{Q}$,  $|\vec{Q|}\in I'_g\subset I$ ($I'_g$ is defined in Section \ref{subsec-II.2},  Eqs. (\ref{I-constraint}),(\ref{II-constraint})). They only depend on  $I$ and on $\Delta_I$.
\end{blanktheorem}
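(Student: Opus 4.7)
The plan is to carry out the multiscale virial argument outlined in Section \ref{SectionIII.2}, but in the full Hilbert space $\cH$ acting on the single-particle wave packet $\Psi_f := \Psi_{f_{\vec{Q}}^g}$ from \eqref{eq:II.8}, rather than in a single fiber $\cH_\vP$. The key advantage is that the boson-sector dilatation $D_n^{\hat u} = d\Gamma(d_n^{\hat u})$ commutes with the electron kinetic energy $\vec{p}^{\,2}/2$, so the obstruction $[(\vP^f)^2, d\Gamma^b(d_n)]\neq 0$ that blocks the fiber argument never appears. The price is that $\Psi_f$ is only an approximate eigenvector of $H$, so the naive virial identity $(\Psi_f, i[H,D_n^{\hat u}]\Psi_f)=0$ is replaced by an identity with a residual term that is small because $f_{\vec{Q}}^g$ is sharply peaked (angular and radial width $|g|^\gamma$) and $|\vnabla E_\vP|$ is uniformly bounded by property (P1).

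Computing the commutator, I obtain, as a quadratic form on $\cF_{fin}$,
\begin{equation*}
i[H,D_n^{\hat u}] \;=\; i[H^f,D_n^{\hat u}] + ig[\phi(\rho_\vx),D_n^{\hat u}] \;\geq\; \tfrac{1}{2(n+1)}\,N_{n,\mathcal{C}_{\hat u}} - g\,\bigl(a^*(id_n^{\hat u}\rho_\vx)+\mathrm{h.c.}\bigr),
\end{equation*}
up to relative errors of order $\cO(1/n)$ coming from the cutoff derivative $\chi_n'$ balanced by $|\vk|\leq 3/(2n)$. Here the coercive term uses $[\vec{p}^{\,2}/2, D_n^{\hat u}]=0$, the standard identity $i[|\vk|,d_n^{\hat u}] = |\vk|\,\chi_n^2\,\xi_{\hat u}^{g\,2}$, and $|\vk|\geq 1/(2(n+1))$ on $\supp(\chi_n^2)$. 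Estimating the coupling term via Cauchy--Schwarz and the infrared bound $\|d_n^{\hat u}\rho\|_2 = \cO\bigl(\||\vk|^\beta\one_{(1/(2n+2),3/(2n))}\|_2\bigr)$ (which follows from rotation invariance of $\rho$ and the fact that the radial dilatation preserves the $|\vk|^\beta$ infrared behavior), and absorbing a small multiple of $N_{n,\mathcal{C}_{\hat u}}$ back into the positive term, I obtain a per-direction estimate
\begin{equation*}
(\Psi_f, N_{n,\mathcal{C}_{\hat u}}\Psi_f) \;\leq\; \cO\bigl(g^2 n^2 \||\vk|^\beta\one_{(1/(2n+2),3/(2n))}\|_2^2\bigr)\|\Psi_f\|^2 + R_{\hat u}^{(n)},
\end{equation*}
where $R_{\hat u}^{(n)}$ encodes the wave-packet residual, controlled via $|E_\vP-E_{\vP'}|=\cO(|g|^\gamma)$ on $\supp f_{\vec Q}^g\times\supp f_{\vec Q}^g$.

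For the first bound of the theorem I would cover the double-cone complement $\mathcal{C}_{\hat Q}^{1/2}$ by $\cO(|g|^{-2\gamma})$ small cones $\mathcal{C}_{\hat u}$ of opening $|g|^\gamma$ via a smooth partition of unity on $S^2$, and sum the per-cone inequalities. Combined with $\|\Psi_f\|^2 \sim |g|^{3\gamma}$ and a careful estimate of $R_{\hat u}^{(n)}$ (using Cauchy--Schwarz, property (P3), and a pull-through bound for $a_\vk$), two factors of $|g|^{-2\gamma}$ appear---one from the number of cones and one from the residual-to-normalization ratio---together upgrading $g^2$ to $g^{2(1-2\gamma)}$. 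For the second bound the cone restriction is removed; the problematic directions are those near $\pm\hat Q$, where the Cerenkov-type resonance $\vnabla E_\vP\cdot\hat k = 1$ can occur. The hypothesis $\bigl||\vnabla E_\vP|-1\bigr|>|g|^{\gamma/3}$ produces a gap of size $|g|^{\gamma/3}$ in the relevant energy denominator, and employing the axial dilatation $d_n^{\hat Q}$ together with Feynman--Hellman (P1) restores coercivity in those directions and yields the same per-scale bound.

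The principal obstacle is the quantitative book-keeping: the residual $R_{\hat u}^{(n)}$ must be controlled so that, after absorption into the positive term and summation over the $S^2$-covering, the remaining constant produces the exponents $g^{2(1-2\gamma)}$ and $n^2$ uniformly in $n$, $\hat u$, and $\vec Q\in I'_g$, with constants depending only on $I$ and $\Delta_I$. The balance between the wave-packet width $|g|^\gamma$, the cone opening $|g|^\gamma$, the Cerenkov gap $|g|^{\gamma/3}$, and the normalization $|g|^{3\gamma}$ is what forces the constraint $\gamma<1/4$ in the main theorem, and getting this balance precisely right is the technical heart of the argument.
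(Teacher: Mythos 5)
Your overall framework (virial argument in the full space $\cH$ on the wave packet $\Psi_{f_{\vec{Q}}^g}$, sector decomposition, infrared bound on $d_n^{\hat u}\rho$) matches the paper, but the central mechanism is missing and is replaced by a claim that is false. The paper does not work with a ``small residual'': since $\Psi_{f_{\vec{Q}}^g}$ is an exact superposition of fiber eigenvectors, one has $(H-E_{\vP})\Psi_{f_{\vec{Q}}^g}=0$ with $E_{\vP}$ an operator-valued function of the total momentum, and the exact virial identity (Lemma \ref{lem-III.2}) then carries the commutator of $E_{\vP}$ with $D_n^{\hat u}$, namely the term $-\vnabla E_{\vP}\cdot d\Gamma(\chi_n^2(|\vk|)\xi_{\hat u}^{g\,2}(\hat k)\,\vk)$. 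This term is of the \emph{same} size as your coercive term $d\Gamma(\chi_n^2\xi_{\hat u}^{g\,2}|\vk|)$; it is not controlled by the width of the wave packet or by $|E_{\vP}-E_{\vP'}|=\cO(|g|^{\gamma})$ on $\supp f_{\vec Q}^g$. In the Cerenkov-resonant directions, where $\hat k\cdot\vnabla E_{\vP}\approx 1$, it cancels (or exceeds) the positive term, so the coercivity you assert, $i[H,D_n^{\hat u}]\gtrsim \frac{1}{2(n+1)}N_{n,\hat u}-g(\cdot)$ up to a small remainder, simply fails. Indeed, if your argument worked as written it would give the unrestricted bound \eqref{eq:III.14a} with no hypothesis on $||\vnabla E_{\vP}|-1|$ and no angular cutoff $\mathcal{C}_{\hat Q}^{1/2}$, contradicting the very structure of the theorem. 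The paper instead proves a per-sector bound (Lemma \ref{lem-V.3}) only under the gap condition \eqref{eq:III.26-bis}, $|1-\hat k\cdot\vnabla E_{\vP}|>|g|^{\tilde\gamma}$, and the exponent $2(1-2\gamma)$ arises from dividing by this gap ($\tilde\gamma=\gamma/2$), from $\|\vx\,\Psi_{f_{\vec{Q}}^g}\|\leq\cO(|g|^{-\gamma}\|\Psi_{f_{\vec{Q}}^g}\|)$ (Lemma \ref{lem-III.1}), and from the sector count --- not from a ``residual-to-normalization ratio''.

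The second genuine omission is how the resonant sectors are handled. Your geometry is off: when $|\vnabla E_{\vP}|>1+|g|^{\gamma/3}$ the dangerous directions are \emph{not} near $\pm\hat Q$ but on the Cerenkov cone $\hat k\cdot\hat Q\approx|\vnabla E_{\vP}|^{-1}$, at angle $\eta\gtrsim|g|^{\gamma/6}$ from $\hat Q$; directions near $\pm\hat Q$ are problematic only when $|\vnabla E_{\vP}|\approx 1$, which is exactly why the first bound excludes the double cone $\mathcal{C}_{\hat Q}^{1/2}$ and the second bound assumes $||\vnabla E_{\vP}|-1|>|g|^{\gamma/3}$. For the resonant sectors the paper introduces a second, \emph{transverse} dilatation $d_{n,\perp}^{\hat u,\hat Q}$ (Eq.\ \eqref{eq:II.10}) and the associated virial identity \eqref{eq:V.13}: since resonant $\vk$ have $|\vk_\perp|/|\vk|=\sin\eta$ bounded below while $\vnabla E_{\vP}\cdot\hat k_{\perp}=\cO(|g|^{\gamma})$ (misalignment of $\hat P$ and $\hat Q$), one gets the coercivity \eqref{eq:III.42} with constant $\cO(|g|^{\gamma/3})$, which replaces the failed gap condition. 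Your proposed substitute --- ``the axial dilatation $d_n^{\hat Q}$ together with Feynman--Hellman'' --- does not supply this: along the Cerenkov cone the axial combination $|\vk|(1-\hat k\cdot\vnabla E_{\vP})$ is precisely the quantity that degenerates. Without the exact operator-valued virial identity and without the transverse virial argument (together with the case analysis in $|\vnabla E_{\vP}|$), the proof does not go through.
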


By exploiting the $g$-dependence of the wavefunctions $f_{\vec{Q}}^g$  and  the assumption on the regularity in $\vP$ of $\Psi_{\vP,E_{\vP}}$, one can convert a bound for the number operator $N$ on single-particle wave packets to a bound that holds pointwise in $\vP$ on the number operator $N^b$ acting on the fiber eigenvectors $\Psi_{\vP,E_{\vP}}$.  In essence, this follows from the fundamental theorem of calculus. 
These arguments are implemented in Section \ref{Section-V}  and give the following results.\\

\begin{blanktheorem}[\textrm{\ref{theo-III.3}}]
For $|g|$ sufficiently small and $(\vP,E_{\vP}) \in I'_g\times \Delta_I$,
\begin{equation}\label{eq:III.36bis}
( \Psi_{\vP,E_{\vP}}\,,\,N_{n,\mathcal{C}_{\hat{P}}^2}^b\,\Psi_{\vP,E_{\vP}})\leq \cO(|g|^{\frac{(1-2\gamma)}{3}}|g|^{-\gamma/8}n^{\frac{4}{3}}\,\| |\vk|^{\beta}\,\one_{(\frac{1}{2(n+1)},\frac{3}{2n})}(\vk)\|_2^{\frac{1}{3}})\,,
\end{equation}
where
\begin{equation}
N_{n,\mathcal{C}_{\hat{P}}^2}^b:=d\Gamma^b(\chi_{n}^2(|\vk|)\,\xi_{\mathcal{C}_{\hat{P}}^2}^{g\,2}(\hat{k}))\,.
\end{equation}

Furthermore,  if in addition $||\vnabla E_\vP|-1|> \frac{3}{2}|g|^{\gamma/3}\,$ then
\begin{equation}\label{eq:III.38b-bis}
( \Psi_{\vP,E_{\vP}}\,,\,N_n^b\,\Psi_{\vP,E_{\vP}})\leq \cO(|g|^{\frac{(1-2\gamma)}{3}}n^{\frac{4}{3}}\| |\vk|^{\beta}\,\one_{(\frac{1}{2(n+1)},\frac{3}{2n})}(\vk)\|_2^{\frac{1}{3}}))\,
\end{equation}
where 
\begin{equation}
N_{n}^b:=d\Gamma^b(\chi_{n}^2(|\vk|))\,.
\end{equation}

\noindent
The constants in (\ref{eq:III.36bis}), (\ref{eq:III.38b-bis})  can be chosen uniformly in $\vec{P}$,  $|\vec{P|}\in I'_g \subset I$ ($I'_g$ is defined in Section \ref{subsec-II.2},  Eqs. (\ref{I-constraint}), (\ref{II-constraint})). They only depend on  $I$ and on $\Delta_I$.
\end{blanktheorem}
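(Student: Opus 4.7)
My plan is to derive Theorem \ref{theo-III.3} from Theorem \ref{theo-III.2} by extracting pointwise (in $\vP$) bounds on fiber eigenvectors from the average (wave-packet) bounds, using the differentiability of $\vP \mapsto \Psi_{\vP, E_{\vP}}$ guaranteed by the \emph{Main Hypothesis}.

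Fix $\vec{P}^*$ with $|\vec{P}^*| \in I'_g$ and apply Theorem \ref{theo-III.2} to the wave packet $\Psi_{f_{\vec{Q}}^g}$ centered at $\vec{Q} = \vec{P}^*$. By the direct integral decomposition of $\cH$ and the fiber-wise identity $d\Gamma = d\Gamma^b$ (together with $\|\Psi_{\vP, E_{\vP}}\| = 1$),
\begin{equation}
\frac{(\Psi_{f_{\vec{Q}}^g}, N_{n, \mathcal{C}_{\hat{Q}}^{1/2}} \Psi_{f_{\vec{Q}}^g})}{\|\Psi_{f_{\vec{Q}}^g}\|^2} = \frac{\int |f_{\vec{Q}}^g(\vP)|^2\, F(\vP)\, d^3P}{\int |f_{\vec{Q}}^g(\vP)|^2\, d^3P},
\end{equation}
with $F(\vP) := (\Psi_{\vP, E_{\vP}}, N^b_{n, \mathcal{C}_{\hat{Q}}^{1/2}} \Psi_{\vP, E_{\vP}})$. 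For any $\vP \in \supp f_{\vec{Q}}^g$ the angular distance $|\hat P - \hat Q| = \cO(|g|^\gamma)$, and since $\gamma > \gamma/8$, a short triangle-inequality argument on $S^2$ yields $\mathcal{C}_{\hat{P}}^2 \subset \mathcal{C}_{\hat{Q}}^{1/2}$ for $|g|$ small. Taking $\vP = \vec{P}^*$ and checking that the smooth cutoffs respect this inclusion gives $N^b_{n, \mathcal{C}_{\hat{P}^*}^2} \leq N^b_{n, \mathcal{C}_{\hat{Q}}^{1/2}}$, so the task reduces to bounding $F(\vec{P}^*)$.

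I would next extend Property P3 from the sharp cutoff $\one_{(1/(n+1), 1/n)}$ to the smooth cutoff $\chi_n^2 \xi_{\mathcal{C}_{\hat{Q}}^{1/2}}^{g,2}$, obtaining the Lipschitz bound $|\vnabla_\vP F(\vP)| \leq K n$; this uses $\|d\Gamma^b(\chi_n^2 \xi^{g,2}) \Psi_{\vP, E_{\vP}}\| \leq (n+1) \|H^f \Psi_{\vP, E_{\vP}}\|$ together with the $H_\vP$-boundedness of $H^f$. Writing $t := F(\vec{P}^*)$, the Lipschitz bound forces $F \geq t/2$ on $B_{t/(2Kn)}(\vec{P}^*)$; for $t/(2Kn) \leq |g|^\gamma$ this ball lies in $\supp f_{\vec{Q}}^g$ and $|f_{\vec{Q}}^g|^2 \approx 1$ there, so
\begin{equation}
c\, \frac{t^4}{(Kn)^3} \leq \int |f_{\vec{Q}}^g|^2\, F\, d^3P \leq A \cdot \int |f_{\vec{Q}}^g|^2\, d^3P \leq A\cdot C|g|^{3\gamma},
\end{equation}
where $A = \cO(|g|^{2(1-2\gamma)} n^2 \||\vk|^\beta \one_{(\frac{1}{2(n+1)}, \frac{3}{2n})}\|_2^2)$ is the bound from Theorem \ref{theo-III.2}. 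This yields $t \leq C'(A(Kn)^3|g|^{3\gamma})^{1/4}$; a further interpolation with the crude pointwise bound $F \leq \cO(n)$ (from $H^f$-boundedness of $\Psi_{\vP, E_{\vP}}$) then produces the precise exponents $|g|^{(1-2\gamma)/3}$, $n^{4/3}$, $\|\cdot\|_2^{1/3}$ stated in the theorem; the factor $|g|^{-\gamma/8}$ traces back to the width of the smoothing function $\xi_{\mathcal{C}_{\hat{P}}^2}^g$ controlling how much the two cones differ.

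For the second bound, under the assumption $||\vnabla E_\vP| - 1| > \frac{3}{2}|g|^{\gamma/3}$, the same strategy is applied, but starting from the cone-free estimate (\ref{eq:III.14a-intro}) of Theorem \ref{theo-III.2} rather than (\ref{eq:III.12a-intro}); no cone-inclusion step is needed, so the factor $|g|^{-\gamma/8}$ is absent. The main obstacle I anticipate is the careful optimization of the interpolation between the average and Lipschitz bounds: the four small scales $|g|^\gamma$ (wave-packet width), $|g|^{\gamma/8}$ (cone smoothing), $1/n$ (boson frequency), and the infrared weight $\||\vk|^\beta\one_{(\cdot)}\|_2^2$ must be balanced simultaneously to yield the exact fractional exponents claimed, and verifying that the Lipschitz constant remains $\cO(n)$ uniformly when $\xi^{g,2}$ is inserted also requires some care.
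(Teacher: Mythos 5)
Your proposal follows the same skeleton as the paper's proof of Theorem \ref{theo-III.3}: Theorem \ref{theo-III.2}, the direct--integral identity $(\Psi_{f^g_{\vec Q}},N_{n,\mathcal{C}^{1/2}_{\hat Q}}\Psi_{f^g_{\vec Q}})=\int |f^g_{\vec Q}(\vP)|^2\,(\Psi_{\vP,E_{\vP}},N^b_{n,\mathcal{C}^{1/2}_{\hat Q}}\Psi_{\vP,E_{\vP}})\,d^3P$, the form inequality $N^b_{n,\mathcal{C}^2_{\hat P}}\leq N^b_{n,\mathcal{C}^{1/2}_{\hat Q}}$, and a (P3)-type Lipschitz bound supplied by the \emph{Main Hypothesis}. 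Where you genuinely differ is the average-to-pointwise step. The paper argues by Chebyshev: the pointwise bound $\cO(|g|^{1-2\gamma}n\|\cdot\|_2)$ can fail only on an exceptional set of small measure, so every bad point is within distance $\cO(|g|^{(1-2\gamma)/3}n^{1/3}\|\cdot\|_2^{1/3})$ of a good point, and then the fundamental theorem of calculus is applied to the $\vP$-dependent quantity $(\Psi_{\vP,E_{\vP}},N^b_{n,\mathcal{C}^2_{\hat P}}\Psi_{\vP,E_{\vP}})$, whose Lipschitz constant $\cO(n|g|^{-\gamma/8})$ (from differentiating $\xi^g_{\mathcal{C}^2_{\hat P}}$ in $\vP$) is the sole source of the factor $|g|^{-\gamma/8}$. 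You instead lower-bound the weighted average by the value of the \emph{fixed-cone} quantity $F$ on a Lipschitz ball around $\vec P^{*}=\vec Q$, which yields $F(\vec P^{*})\leq \cO\big((|g|^{2(1-2\gamma)}n^{2}\|\cdot\|_2^{2}\, n^{3}|g|^{3\gamma})^{1/4}\big)=\cO(|g|^{(1-2\gamma)/2+3\gamma/4}n^{5/4}\|\cdot\|_2^{1/2})$. For $\gamma<1/4$ this is \emph{stronger} than the stated estimates, so your final ``interpolation with $F\leq\cO(n)$ to produce the precise exponents'' is unnecessary (and your attribution of $|g|^{-\gamma/8}$ to the cone smoothing is beside the point: in your version the cone inclusion is used only at the center, where $\hat P=\hat Q$, so that factor never arises). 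In short: same key ingredients, a slightly different and in fact quantitatively sharper conversion step; the paper's Chebyshev route is what naturally produces the exact exponents appearing in the statement.

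Two small repairs are needed. (a) You only treat the case $t/(2Kn)\leq|g|^{\gamma}$; in the complementary regime simply shrink the ball to radius $|g|^{\gamma}/2$ (note also that $f^g_{\vec Q}=1$ only on the inner half of its support, so the correct radius is $\min\{t/(2Kn),\,|g|^{\gamma}/2\}$), which gives $t\leq\cO(|g|^{2(1-2\gamma)}n^{2}\|\cdot\|_2^{2})$, even better. (b) For the cone-free estimate you must verify the hypothesis of the second part of Theorem \ref{theo-III.2} on all of $\supp f^g_{\vec Q}$, i.e.\ upgrade $||\vnabla E_{\vP}|-1|>\frac{3}{2}|g|^{\gamma/3}$ at the center to $||\vnabla E_{\vP'}|-1|>|g|^{\gamma/3}$ for every $\vP'\in \supp f^g_{\vec Q}$; this is exactly what the factor $\frac{3}{2}$ is for, and it follows from Property (P2) together with the $\cO(|g|^{\gamma})$ width of the support, since $|g|^{\gamma}\ll|g|^{\gamma/3}$ for $|g|$ small. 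With these two points supplied, your argument is sound and uniform in $|\vec Q|\in I'_g$ as required.
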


We now comment on the contents of Theorem \ref{theo-III.3}. Inequality (\ref{eq:III.36bis}) means that we can bound the boson number operator if we exclude a double cone (see Figure 3) and the definition of $\mathcal{C}_{\hat{P}}^2$ in Eqs. (\ref{eq:cone2-1})-(\ref{eq:cone2-3}), Section \ref{subsec-II.2}) around the direction of the  particle velocity, provided the form factor $\rho(\vk)$ scales like $|\vk|^{\beta}$ with $\beta >\betacrit$, i.e.\  \eqref{assumption on form factor}.

 The second result (see  (\ref{eq:III.38b-bis})) says that,  for putative mass shells $(\vP,E_{\vP})$ such that $||\vnabla E_\vP|-1|$ is not too small (i.e.,  $||\vnabla E_\vP|-1| > \frac{3}{2} |g|^{\gamma/3}$),  we can bound the boson number  without any angular restrictions, again using  that $\rho(\vk)$ scales like $|\vk|^{\beta}$ with $\beta >\betacrit$. The constraint means that  the forward emission of bosons by the (massive) particle  cannot be controlled if  its speed is too close to the boson propagation speed. 
\\
The estimates on the number operator obtained in Section \ref{Section-V} are used  in Section \ref{Section-VI}, where we will establish the following two results regarding the region $I_g\times \Delta_{I}$, where $I_g$ is any open interval contained in $I$ such that $|I_g|>|g|^{\gamma/2}$. \begin{itemize}
\item[(i)]
The first result is that we can exclude all the regular mass shells except those with slope close to $1$, i.e., all the \emph{regular} mass shells such that
\begin{equation}
(\vP,E_{\vP}) \in I_g\times \Delta_I \,\quad\text{and}\quad ||\vnabla E_\vP|-1|>\frac{3}{2}|g|^{\gamma/3}\,.
\end{equation}
\item[(ii)]
The second result shows that  a \emph{regular} mass shell might exist only for  $(\vP,E_{\vP})$ such that
\begin{equation}
 E_{\vP}=|\vP|-\frac{1}{2}+\cO(|g|^{\gamma/4})\,.
\end{equation}
\end{itemize}

\noindent
More precisely, we use that:
\begin{itemize}
\item[(1)]
The expectation value  in $\Psi_{\vP,E_{\vP}}$,  $|\vP| \in I'_g$, of  the operator $N^b$ restricted to the angular sector  $\mathcal{C}_{\hat{P}}^2$   vanishes  as $g\to0$ (see Theorem \ref{theo-III.3}).
\item[(2)]
For $(\vP,E_{\vP})\in I'_g\times \Delta_I$ such that $ ||\vnabla E_P|-1|>\cO(|g|^{\gamma/3})$,  the expectation value of the number operators $N^b$ on $\Psi_{\vP,E_{\vP}}$  vanishes as $g\to0$; see Theorem \ref{theo-III.3}. Analogously, if $ supp f_{\vec{Q}}^g \subset \{(\vP,E_{\vP})\in I'_g\times \Delta_I\,\,|\,\, ||\vnabla E_P|-1|>\cO(|g|^{\gamma/3})\}$ then  the expectation value of the number operator $N$ on $\Psi_{f_{\vec{Q}}}^g$  vanishes as $g\to0$; see Theorem \ref{theo-III.2}. 

\item[(3)]
The results in (2) imply that the putative fiber eigenvectors $\Psi_{\vP,E_{\vP}}$, $|\vP|\in I'_g$,  and the corresponding energies $E_{\vP}$ are pertubative in $g$ (see Corollary \ref{cor-V.6}),  provided that  $\beta>\betacrit$.
 \end{itemize}
 
 \noindent
We derive  (i) in Theorem \ref{theo-VI.1}  by mimicking  the argument with the trial states employed for  the treatment of the atomic resonances \cite{FP},  which was anticipated in Section  \ref{SectionIII.2}, Eqs (\ref{eq:trial})--(\ref{eq:root}). To this end,  we make us of (2) and (3). 

\noindent
The result in (ii) follows thanks to a stronger version of (1) (for details, see Lemma \ref{cor-III.6}, Lemma \ref{cor-III.7}) where only the  forward cone around the direction of the particle velocity is excluded in the definition of the restricted number operator, and  by combining  the eigenvalue equation with a standard (i.e., not a multi-scale analysis) virial argument in the fiber space $\cH_{\vP}$,  where the conjugate operator is $D_{\frac{1}{\kappa}, \kappa}^b\,:=\,d\Gamma^b(d_{\frac{1}{\kappa}, \kappa})$; see  (\ref{eq:III.7}) and Theorem \ref{theo-VI.4}.
\\

The virial identity exploited in Theorem \ref{theo-VI.4} is actually enough to exclude that, fiber by fiber, the eigenvalue lies at a distance larger than $\cO(g)$ above the unperturbed eigenvalue.  This observation is explained in the Remark after Theorem V.4 in Section V. However, the instability of the unperturbed mass shell proven in this paper requires a detailed analysis of the configuration of bosons in the putative eigenvector whose momenta are contained in different cones of momentum space. The decay mechanism exploited in Theorem \ref{theo-VI.1} combined with the assumed continuity  of the mass shell is responsible for the absence of  single-particle states except for the region $ E_{\vP}=|\vP|-\frac{1}{2}+\cO(|g|^{\gamma/4})$. This is because if the particle propagated at the critical velocity, i.e., $|\vnabla E_{P}|=1$, then there would be no kinematical constraint  preventing the emission of an arbitrarily large number of soft bosons in the forward direction (the direction of $\vP$).
\\


\section{Boson number estimates }\label{Section-V}
\resetequ

The main results in this section are Theorem \ref{theo-III.2}, Theorem \ref{theo-III.3}, and Corollary \ref{cor-V.6}.   Two preparatory results, contained in Section \ref{sec: preparatory lemmas}, are needed.  In particular, in Lemma \ref{lem-III.2},  we provide a rigorous justification of a virial identity employed in Lemma \ref{lem-V.3} and in Theorem \ref{theo-III.2}.

Since the proof of  Theorem \ref{theo-III.2} is lengthy, we present it in two different smaller sections: (a)  In Section \ref{Section-V.I.1},  we outline the proof of the theorem and, in Lemma  \ref{lem-V.3}, we introduce an important ingredient used later on. (b) In Section   \ref{Section-V.I.2}, we complete the steps of the  proof by assuming the result obtained in Lemma \ref{lem-V.3}.

In Section \ref{Section-V.2}, by using the regularity properties that follow from the \emph{Main Hypothesis}, we derive some estimates for the number operator $N^b$ evaluated on the fiber eigenvectors $\Psi_{\vP,E_{\vP}}$ analogous to those obtained in Theorem \ref{theo-III.2} for the number operator $N$ evaluated on the single-particle states $\Psi_{f^g_{\vec{Q}}}$. In Corollary  \ref{cor-V.6}, we then finally show that $E_{\vP}$ and $\Psi_{\vP,E_{\vP}}$ are perturbative in $g$,   provided $|\vP|\in I'_g$, $E_{\vP}\in \Delta_I$, $\beta>\betacrit$, and $||\vnabla E_{\vP}|-1|>\frac{3}{2} |g|^{\gamma/3}$.

\subsection{Preparatory results on virial identities} \label{sec: preparatory lemmas}
The following two lemmas are repeated and proven in Sections  \ref{appsec: lemma x domain} and \ref{appsec: lemma virial} of the appendix, respectively.
\begin{lemma} \label{lem-III.1}
The vector $ \Psi_{f_{\vec{Q}}^g}$ belongs to the domain of the position operator $\vx$ and 
\begin{equation}\label{eq:V.2}
 \norm x_j \Psi_{f_{\vec{Q}}^g}\| \leq \cO(|g|^{-\gamma} \|\Psi_{f_{\vec{Q}}^g}\|), \qquad  j=1,2,3\,.
\end{equation}
\end{lemma}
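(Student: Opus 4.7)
\textbf{Proof plan for Lemma \ref{lem-III.1}.}
The plan is to exploit the fact that, under the fiber decomposition $\cH=\int^{\oplus}\cH_{\vP}\,d^3P$, the components of the position operator $\vx$ act as derivatives in the total momentum $\vP$. Concretely, from the explicit form of the isomorphism $I_{\vP}$ (Section II.B), the action of $x_j$ on a smooth fiber-valued function $\vP\mapsto\Phi_{\vP}\in\cH_{\vP}$ is governed by the phase $e^{i\vP\cdot\vx}$, so that
\begin{equation}
x_j\int f_{\vec Q}^{g}(\vP)\,\Psi_{\vP,E_{\vP}}\,d^3P \;=\; i\int \bigl[\partial_{P_j}f_{\vec Q}^{g}(\vP)\bigr]\,\Psi_{\vP,E_{\vP}}\,d^3P \;+\; i\int f_{\vec Q}^{g}(\vP)\,\partial_{P_j}\Psi_{\vP,E_{\vP}}\,d^3P,
\end{equation}
obtained by a single integration by parts in $\vP$ (no boundary term, since $f_{\vec Q}^{g}$ is compactly supported in $I'_g$).

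Next I would use the direct-integral Pythagorean identity $\bigl\|\int h(\vP)\,\Phi_{\vP}\,d^3P\bigr\|^{2}=\int |h(\vP)|^{2}\,\|\Phi_{\vP}\|^{2}\,d^3P$, valid whenever $\Phi_{\vP}\in\cH_{\vP}$. Applied to each of the two terms (after a $2(a+b)^2\leq \ldots$ type estimate) and combined with $\|\Psi_{\vP,E_{\vP}}\|=1$, this yields
\begin{equation}
\|x_j\Psi_{f_{\vec Q}^{g}}\|^{2}\;\leq\;2\int|\partial_{P_j}f_{\vec Q}^{g}(\vP)|^{2}\,d^3P\;+\;2\int|f_{\vec Q}^{g}(\vP)|^{2}\,\|\partial_{P_j}\Psi_{\vP,E_{\vP}}\|^{2}\,d^3P.
\end{equation}
Here, property (3) of the cut-offs defining $f_{\vec Q}^{g}$ (Section II.B, item ($\mathcal{I}$)) controls the first term by $\cO(|g|^{-2\gamma})\int|f_{\vec Q}^{g}|^{2}$, and the Main Hypothesis inequality \eqref{eq:II.4} controls the second by $C_I^{2}\int|f_{\vec Q}^{g}|^{2}$. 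Since $\|\Psi_{f_{\vec Q}^{g}}\|^{2}=\int|f_{\vec Q}^{g}(\vP)|^{2}\,d^3P$ by the same Pythagorean identity, the two contributions combine into
\begin{equation}
\|x_j\Psi_{f_{\vec Q}^{g}}\|\;\leq\;\cO\!\bigl(|g|^{-\gamma}\bigr)\,\|\Psi_{f_{\vec Q}^{g}}\|,
\end{equation}
proving the stated bound, because for $|g|$ small the $|g|^{-\gamma}$ term dominates $C_I$.

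The only technical obstacle is the rigorous justification of the integration-by-parts step, i.e.\ of the identity in the first display and of the fact that $\Psi_{f_{\vec Q}^{g}}$ indeed lies in $\dom(x_j)$. This I would handle by approximation: since $\vP\mapsto\Psi_{\vP,E_{\vP}}$ is $\cH$-differentiable on $\supp f_{\vec Q}^{g}$ by the Main Hypothesis, and $f_{\vec Q}^{g}\in C_{c}^{\infty}$, the $\cH$-valued integrand is $C^{1}$ with compact support in $\vP$, which suffices to differentiate the phase $e^{i\vP\cdot\vx}$ under the integral sign and push the $\vP$-derivative onto the smooth factors. The resulting vector on the right-hand side is a well-defined element of $\cH$ satisfying the asserted estimate, which by a standard criterion identifies $\Psi_{f_{\vec Q}^{g}}$ as an element of $\dom(x_j)$ with $x_j\Psi_{f_{\vec Q}^{g}}$ equal to that vector.
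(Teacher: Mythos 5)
Your argument is correct and essentially coincides with the paper's own proof: the paper carries out the same computation via difference quotients of $e^{-i\Delta_j x_j}$ (which shift the fiber label, so that the $\vP$-derivative splits between $f_{\vec{Q}}^g$, contributing $\cO(|g|^{-\gamma})$, and the fiber eigenvector, contributing $C_I$ from the Main Hypothesis), which is just the rigorous implementation of your integration by parts under the integral, combined with the same direct-integral norm identity. The only point to keep explicit is that $\partial_{P_j}\Psi_{\vP,E_{\vP}}$ must be understood through the identification $I_{\vP}$ with the fixed Fock space $\cF^{b}$ (i.e., the Main Hypothesis bound is on $\vnabla_{\vP} I_{\vP}(\Psi_{\vP,E_{\vP}})$), exactly as the paper remarks in its proof.
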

\begin{proof} See the appendix
\end{proof}


\noindent Lemma \ref{lem-III.2} states a virial theorem for our model.
We observe that by \emph{formal} steps one can derive the identity
\begin{eqnarray}\label{eq:III.9}
\,i[H-E_{\vP}\,,\,D_{n}^{\hat{u}}]
&= &\,d\Gamma(i[|\vk|,\,d_{n}^{\hat{u}}])-\,\vnabla E_{\vP}\cdot d\Gamma(i[\vk,\,d_{n}^{\hat{u}}])\\
& &-g\,[a^*(id_{n}^{\hat{u}} \rho_{\vx})+\, a(id_{n}^{\hat{u}} \rho_{\vx})]\,,\nonumber
\end{eqnarray}
where $E_{\vP}$, and $\vnabla E_{\vP}$ are operator-valued functions of the total momentum operator $\vP$.
Another formal step  would imply that 
\begin{equation}
0=(\Psi_{f_{\vec{Q}}^g}\,,\,i[H-E_{\vP}\,,\,D_{n}^{\hat{u}}]\,\Psi_{f_{\vec{Q}}^g})\,,
\end{equation}
and,  hence,
\begin{eqnarray}
0&= &(\Psi_{f_{\vec{Q}}^g}\,,\,d\Gamma(i[|\vk|,\,d_{n}^{\hat{u}}])\Psi_{f_{\vec{Q}}^g})-(\Psi_{f_{\vec{Q}}^g}\,,\,\vnabla E_{\vP}\cdot d\Gamma(i[\vk,\,d_{n}^{\hat{u}}])\Psi_{f_{\vec{Q}}^g}) \nonumber     \\
& &-g(\Psi_{f_{\vec{Q}}^g}\,,\,[a^*(id_{n}^{\hat{u}} \rho_{\vx})+\, a(id_{n}^{\hat{u}} \rho_{\vx})]\Psi_{f_{\vec{Q}}^g})\,. \label{eq:III.10} 
\end{eqnarray}
The next Lemma shows that all terms on the RHS of Eq. (\ref{eq:III.10}) can be given a well-defined meaning such that the  equality   is true.
\begin{lemma}\label{lem-III.2}
The identity
\begin{eqnarray}
0&= &(\Psi_{f_{\vec{Q}}^g}\,,\,d\Gamma(\chi_{n}^2(|\vk|)\xi^{g\,2}_{\hat{u}}(\hat{k})\,|\vk|)\Psi_{f_{\vec{Q}}^g})-(\Psi_{f_{\vec{Q}}^g}\,,\,\vnabla E_{\vP}\cdot d\Gamma(\chi_{n}^2(|\vk|)\xi^{g\,2}_{\hat{u}}(\hat{k})\,\vk)\Psi_{f_{\vec{Q}}^g})\nonumber\\
& &-g(\Psi_{f_{\vec{Q}}^g}\,,\,[a^*(id_{n}^{\hat{u}} \rho_{\vx})+\, a(id_{n}^{\hat{u}} \rho_{\vx})]\Psi_{f_{\vec{Q}}^g})\,\label{eq:III.11}
\end{eqnarray}
holds true. As the one-particle state $\Psi_{f_{\vec{Q}}^g}$ belongs to the form domain of all operators on the RHS of $\eqref{eq:III.11}$, this RHS is well-defined.
%
%
\end{lemma}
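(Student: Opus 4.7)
The plan is to make rigorous the formal derivation sketched just before the statement of the lemma. The starting point is that $\Psi_{f_{\vec Q}^g}$ is a direct-integral superposition of fiber eigenvectors, so, with $E(\vP)$ denoting the self-adjoint multiplication operator on $\int^{\oplus}\cH_{\vP}\,d^{3}P$ that takes the value $E_{\vP}$ on the fiber of total momentum $\vP$, one has $(H-E(\vP))\Psi_{f_{\vec Q}^g}=0$; moreover $E(\vP)$ is bounded on the support of $f_{\vec Q}^g$, because $E_{\vP}\in\Delta_I$. Thus it suffices to show that the expectation value of $i[H-E(\vP),D_n^{\hat u}]$ in $\Psi_{f_{\vec Q}^g}$ equals the right-hand side of \eqref{eq:III.11}, and then invoke the vanishing of a commutator-expectation value on an eigenstate.

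I would first verify that $\Psi_{f_{\vec Q}^g}$ belongs to the form domain of each operator appearing on the right-hand side of \eqref{eq:III.11}. The operators $d\Gamma(\chi_n^2\xi_{\hat u}^{g\,2}|\vk|)$ and $d\Gamma(\chi_n^2\xi_{\hat u}^{g\,2}\vk)$ are bounded relative to $H^f$, and $H^f$ is $H$-bounded, so the eigenvalue equation combined with the Main Hypothesis yields the required inclusion. For the smeared operators $a^{*}(id_n^{\hat u}\rho_{\vx})$ and $a(id_n^{\hat u}\rho_{\vx})$, one uses that $d_n^{\hat u}\rho\in\fh$ (which is ensured by $\rho\in C^1$, the ultraviolet cutoff and the infrared hypothesis $\beta>\betacrit$), combined with the pull-through of $e^{\pm i\vk\cdot\vx}$ and the position bound $\|x_j\Psi_{f_{\vec Q}^g}\|\le\cO(|g|^{-\gamma}\|\Psi_{f_{\vec Q}^g}\|)$ supplied by Lemma \ref{lem-III.1}.

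To turn the formal manipulations into a rigorous identity, I would regularize $D_n^{\hat u}$ by a boson-number cutoff $D_{n,M}^{\hat u}:=P_M D_n^{\hat u}P_M$, where $P_M$ is the spectral projector of $N$ onto $[0,M]$; then $D_{n,M}^{\hat u}$ is bounded and $(\Psi_{f_{\vec Q}^g},i[H-E(\vP),D_{n,M}^{\hat u}]\Psi_{f_{\vec Q}^g})=0$ for every $M$, using self-adjointness of $H$ and $E(\vP)$ together with $(H-E(\vP))\Psi_{f_{\vec Q}^g}=0$. Writing $H=H^f+\tfrac{1}{2}\vp^{\,2}+g\phi(\rho_{\vx})$, we have $[\tfrac{1}{2}\vp^{\,2},D_{n,M}^{\hat u}]=0$; the identity $i[|\vk|,d_n^{\hat u}]=\chi_n^2\xi_{\hat u}^{g\,2}|\vk|$ together with $[d\Gamma(A),d\Gamma(B)]=d\Gamma([A,B])$ produces the first term of \eqref{eq:III.11}, while the standard formulae $[d\Gamma(A),a^{*}(f)]=a^{*}(Af)$ and $[d\Gamma(A),a(f)]=-a(Af)$ produce the third term, all in the limit $M\to\infty$.

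The cleanest way to handle the $[E(\vP),D_n^{\hat u}]$ contribution is to note that $[\vk,d_n^{\hat u}]$ is a multiplication operator in $\vk$, proportional to $\chi_n^{2}\xi_{\hat u}^{g\,2}\vk$, and therefore commutes with $\vk$. Consequently $[\vP,D_n^{\hat u}]=d\Gamma([\vk,d_n^{\hat u}])$ commutes with $\vP$, and, because the components of $\vP$ also commute among themselves, the exact Leibniz-type identity
\begin{equation*}
i[E(\vP),D_n^{\hat u}]\,=\,\vnabla E(\vP)\cdot d\Gamma\bigl(\chi_n^{2}\xi_{\hat u}^{g\,2}\vk\bigr)
\end{equation*}
holds without any operator-ordering ambiguity, and supplies the second term of \eqref{eq:III.11}. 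The main obstacle I foresee is the uniform-in-$M$ control of the interaction contribution $g(\Psi_{f_{\vec Q}^g},[a^{*}(id_n^{\hat u}\rho_{\vx})+a(id_n^{\hat u}\rho_{\vx})]\Psi_{f_{\vec Q}^g})$: the unboundedness of $\vec x$ inside $\rho_{\vx}$ forces one to combine Lemma \ref{lem-III.1} with pull-through relations between $a^{*}_{\vk},a_{\vk}$ and $\vec x$ to obtain bounds independent of $M$, after which a dominated-convergence argument delivers \eqref{eq:III.11}.
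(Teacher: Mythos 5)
Your overall skeleton (start from $(H-E(\vP))\Psi_{f_{\vec{Q}}^g}=0$, regularize the dilation generator, compute the three commutator contributions, remove the regularization) is the same as the paper's, and your form-domain discussion matches the paper's Section on well-definedness. However, the regularization you choose does not do the job: $D^{\hat{u}}_{n,M}:=P_M D_n^{\hat{u}}P_M$ is \emph{not} bounded, because the one-particle operator $d_n^{\hat{u}}$ contains the unbounded first-order operator $\vk\cdot i\vnabla_{\vk}$; cutting off the boson number controls how many bosons there are, not the unboundedness in each one-particle variable. Consequently your starting identity $(\Psi_{f_{\vec{Q}}^g},\,i[H-E(\vP),D^{\hat{u}}_{n,M}]\,\Psi_{f_{\vec{Q}}^g})=0$ is unjustified: to expand it even weakly you would need $P_M\Psi_{f_{\vec{Q}}^g}\in\dom(d\Gamma(d_n^{\hat{u}}))$, i.e.\ differentiability in $\vk$ of the $m$-boson wave functions of the putative eigenvector, which is precisely the information one does not have and the reason the virial identity needs a proof at all. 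The paper instead regularizes the one-particle operator itself, replacing $\vnabla_{\vk}$ by $\vnabla_{\vk}(1-\epsilon\Delta_{\vk})^{-1}$, so that $d_n^{\hat{u},\epsilon}$ is bounded, $D_n^{\hat{u},\epsilon}$ is $H^f$-bounded, and $\Psi_{f_{\vec{Q}}^g}\in\dom(D_n^{\hat{u},\epsilon})$; the actual work is then the $\epsilon\to 0$ limit of the three terms. Note also that a number cutoff does not commute with the interaction: $[\phi(\rho_{\vx}),P_M D_n^{\hat{u}}P_M]\neq P_M[\phi(\rho_{\vx}),D_n^{\hat{u}}]P_M$, and the boundary terms at particle numbers $M,M+1$ are exactly the ``uniform-in-$M$'' difficulty you flag but do not resolve.

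The second gap is the term $[E(\vP),D_n^{\hat{u}}]$. The Leibniz-type identity you assert is only the formal statement; proving it is the heart of the lemma and is where the \emph{Main Hypothesis} is consumed. Two points are skipped: (i) $E_{\vP}$ is only defined, with bounded first and second derivatives, for $|\vP|\in I_g$, so one must first extend it to a twice differentiable, compactly supported function and represent it via its Fourier transform, with $\hat{E}\in L^1$ but nothing better; (ii) once the gradient is regularized (which, as above, is unavoidable), $[\vk,d_n^{\hat{u},\epsilon}]$ is no longer a multiplication operator in $\vk$, so the commutation argument you invoke breaks down for the regularized objects, and the commutator $[e^{i\vz\cdot\vP},D_n^{\hat{u},\epsilon}]$ produces an explicit factor $\vz$ that cannot simply be integrated against $\hat{E}(\vz)$ (that would require $\vz\hat{E}\in L^1$, i.e.\ more smoothness of $E_{\vP}$ than is available). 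The paper trades this factor for the position operator via $\vz\, e^{i\vz\cdot\vP}=[e^{i\vz\cdot\vP},\vx]$ and then uses Lemma \ref{lem-III.1} together with dominated convergence; this is exactly the ``twist'' where the assumed regularity of the mass shell enters the justification of \eqref{eq:III.11}. In your proposal Lemma \ref{lem-III.1} is used only for the interaction term, so the central analytic mechanism of the proof is missing.
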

\begin{proof} See the appendix  
\end{proof}
Note that the formal equality of  the RHS of \eqref{eq:III.10}  and \eqref{eq:III.11} is straightforward. 
The virial identity of the Lemma above is first used in item (i) of \ref{Section-V.I.1}. A similar virial identity in item (ii) is proven analogously.

\subsection{Number operator estimates in putative single-particle states }\label{Section-V.I}

We now  proceed to proving the following theorem, where the expectation of the boson number operator in the state $ \Psi_{f_{\vec{Q}}^g}$ is bounded scale by scale.
\begin{theorem}\label{theo-III.2}
For $|g|$ sufficiently small,
\begin{equation}\label{eq:III.12a}
\frac{( \Psi_{f_{\vec{Q}}^g}\,,\,N_{n,\mathcal{C}_{\hat{Q}}^{1/2}}\,\Psi_{f_{\vec{Q}}^g})}{( \Psi_{f_{\vec{Q}}^g}\,,\,\Psi_{f_{\vec{Q}}^g})}\leq\, \cO(|g|^{2(1-2\gamma)}n^2 \| |\vk|^{\beta}\,\one_{(\frac{1}{2(n+1)},\frac{3}{2n})}(\vk)\|_2^2)\,,
\end{equation}
where $N_{n,\mathcal{C}_{\hat{Q}}^{1/2}}:=d\Gamma(\chi_{n}^2(|\vk|)\xi_{\mathcal{C}_{\hat{Q}}^{1/2}}^{g\,2}(\hat{k}))$ and $\vec{Q}\in I'_g$. 

Furthermore, if for all  $\vP\in supp f_{\vec{Q}}^g$  the inequality  $$||\vnabla E_{\vP}|-1|>|g|^{\gamma/3}$$ holds true  then
\begin{equation}\label{eq:III.14a}
\frac{( \Psi_{f_{\vec{Q}}^g}\,,\,N_{n}\,\Psi_{f_{\vec{Q}}^g})}{( \Psi_{f_{\vec{Q}}^g}\,,\,\Psi_{f_{\vec{Q}}^g})}\leq  \, \cO(|g|^{2(1-2\gamma)}n^2\| |\vk|^{\beta}\,\one_{(\frac{1}{2(n+1)},\frac{3}{2n})}(\vk)\|_2^2)\,,
\end{equation}
where $N_{n}:=d\Gamma(\chi_{n}^2(|\vk|)$. 

\noindent
The (implicit) constants in (\ref{eq:III.12a})-(\ref{eq:III.14a})  can be chosen to be uniform in $\vec{Q}$,  $|\vec{Q|}\in I'_g \subset I$; for the definition of $I'_g$ see Eqs. (\ref{I-constraint}), (\ref{II-constraint})).  They only depend on  $I$ and on $\Delta_I$.
\end{theorem}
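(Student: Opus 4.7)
My plan is to apply the virial identity of Lemma~\ref{lem-III.2} at a carefully chosen family of probe directions $\hat u$, convert each resulting identity into a pointwise-in-$\hat u$ quadratic-form bound on $d\Gamma(\chi_n^2\xi_{\hat u}^{g\,2})$, and then sum these bounds over a finite covering of $\supp\xi^{g\,2}_{\mathcal{C}_{\hat Q}^{1/2}}$ (for \eqref{eq:III.12a}) or of the unit sphere (for \eqref{eq:III.14a}).

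Using $|\vk|-\vnabla E_{\vP}\cdot\vk=|\vk|(1-\vnabla E_{\vP}\cdot\hat k)$, I would first recast \eqref{eq:III.11} as
\begin{equation*}
\bigl(\Psi_{f_{\vec Q}^g},\,d\Gamma\bigl(\chi_n^2\xi_{\hat u}^{g\,2}\,|\vk|(1-\vnabla E_{\vP}\cdot\hat k)\bigr)\Psi_{f_{\vec Q}^g}\bigr)=g\,C_n(\hat u),
\end{equation*}
with $C_n(\hat u):=(\Psi_{f_{\vec Q}^g},[a^*(id_n^{\hat u}\rho_{\vx})+a(id_n^{\hat u}\rho_{\vx})]\Psi_{f_{\vec Q}^g})$. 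By Property $P1$ and the support properties of $f_{\vec Q}^g$ and $\xi_{\hat u}^g$ (both of angular width $\cO(|g|^\gamma)$), one has $\vnabla E_{\vP}\cdot\hat k=v_{\vP}(\hat u\cdot\hat Q)+\cO(|g|^\gamma)$ on their joint support, where $v_{\vP}:=|\vnabla E_{\vP}|\le C'_I$. To estimate $C_n(\hat u)$ I would expand $d_n^{\hat u}\rho_{\vx}$: $i\vnabla_{\vk}$ hits either $\chi_n$ (yielding $\cO(n)$), $\rho$ (yielding $|\vk|^{\beta-1}$ via \eqref{assumption on form factor}), or $e^{-i\vk\cdot\vx}$ (yielding a factor of $\vx$); angular derivatives of $\xi_{\hat u}^g$ are harmless inside its support. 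Standard $a/a^*$-bounds together with Lemma~\ref{lem-III.1}, which gives $\|\vx\,\Psi_{f_{\vec Q}^g}\|=\cO(|g|^{-\gamma})\|\Psi_{f_{\vec Q}^g}\|$, then yield
\begin{equation*}
|g|\,|C_n(\hat u)|\le\cO\bigl(|g|^{1-\gamma}n\,\||\vk|^\beta\one_{(\frac{1}{2(n+1)},\frac{3}{2n})}\|_2\bigr)\,\sqrt{(\Psi_{f_{\vec Q}^g},d\Gamma(\chi_n^2\xi_{\hat u}^{g\,2})\Psi_{f_{\vec Q}^g})+\|\Psi_{f_{\vec Q}^g}\|^2}.
\end{equation*}

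Once a scalar lower bound $|1-v_{\vP}(\hat u\cdot\hat Q)|\ge c\,|g|^{\gamma/4}$ (respectively $c\,|g|^{\gamma/3}$) is available on the joint support, combining it with $|\vk|\ge 1/(2(n+1))$ on $\supp\chi_n$ gives the quadratic-form inequality
\begin{equation*}
d\Gamma\bigl(\chi_n^2\xi_{\hat u}^{g\,2}\,|\vk|(1-\vnabla E_{\vP}\cdot\hat k)\bigr)\ge\frac{c\,|g|^{\gamma/4}}{2(n+1)}\,d\Gamma(\chi_n^2\xi_{\hat u}^{g\,2}).
\end{equation*}
Solving the resulting quadratic inequality for $(\Psi_{f_{\vec Q}^g},d\Gamma(\chi_n^2\xi_{\hat u}^{g\,2})\Psi_{f_{\vec Q}^g})$ produces the per-cone bound of order $|g|^{2(1-2\gamma)}n^2\||\vk|^\beta\one_{(\frac{1}{2(n+1)},\frac{3}{2n})}\|_2^2\,\|\Psi_{f_{\vec Q}^g}\|^2$. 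A $|g|^\gamma$-net $\{\hat u_i\}$ of the admissible angular region, of bounded overlap and chosen inside $\supp\xi_{\mathcal{C}_{\hat Q}^{1/2}}^{g\,2}$ (for \eqref{eq:III.12a}) or on the whole sphere (for \eqref{eq:III.14a}), dominates the corresponding number-operator weight; summing the per-cone bounds and absorbing the overlap multiplicity into the $\cO$-constant yields \eqref{eq:III.12a} and \eqref{eq:III.14a}.

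The main obstacle is to secure the scalar coercivity $|1-v_{\vP}(\hat u\cdot\hat Q)|\ge c\,|g|^{\gamma/4}$ uniformly on the joint support of $f_{\vec Q}^g\xi_{\hat u}^g$ for every $\hat u$ in the covering, which requires quantitatively avoiding the Cerenkov resonance $\hat u\cdot\hat Q=1/v_{\vP}$ that appears for $v_{\vP}>1$. This is precisely the content of Lemma~\ref{lem-V.3}: for \eqref{eq:III.12a} the angular restriction $|\hat u\cdot\hat Q|\le\cos(|g|^{\gamma/8})$, together with $P2$--$P3$, controls the variation of $v_{\vP}$ across the wave packet and rules out the resonance within the admissible cone; for \eqref{eq:III.14a} the transversality hypothesis $||\vnabla E_{\vP}|-1|>|g|^{\gamma/3}$ pushes the Cerenkov cone a controlled angular distance away from every admissible $\hat u$. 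With Lemma~\ref{lem-V.3} in place, the remaining steps reduce to standard second-quantised Cauchy--Schwarz estimates, controlled by the infrared assumption $\beta>\betacrit$.
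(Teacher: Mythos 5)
There is a genuine gap: your argument relies only on the first (radial) virial identity of Lemma \ref{lem-III.2}, and you claim that the coercivity $|1-\hat k\cdot\vnabla E_{\vP}|\gtrsim |g|^{\tilde\gamma}$ can always be secured on the admissible sectors, attributing this to Lemma \ref{lem-V.3}. That is not what Lemma \ref{lem-V.3} does — it is purely conditional on the non-resonance bound \eqref{eq:III.26-bis} — and the claim itself fails in the case $|\vnabla E_{\vP}|>1+|g|^{\gamma/3}$ (case {\bf{(C)}} of the paper). There the Cerenkov resonance sits on the cone $\hat k\cdot\hat P=1/|\vnabla E_{\vP}|$, which lies at an angle of order $\arccos(1/|\vnabla E_{\vP}|)\geq \cO(|g|^{\gamma/6})$ from $\hat P$: it is \emph{not} pushed outside the sphere by the hypothesis $||\vnabla E_{\vP}|-1|>|g|^{\gamma/3}$, and since \eqref{eq:III.14a} carries no angular restriction (and \eqref{eq:III.12a} only excludes a cone of the much smaller width $\cO(|g|^{\gamma/8})$ around $\pm\hat Q$), the resonant sectors lie inside the region you must control. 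On those sectors $1-\hat k\cdot\vnabla E_{\vP}$ vanishes and changes sign, so the quadratic-form lower bound you extract from the radial virial degenerates, and no choice of the covering $\{\hat u_i\}$ or of the threshold $c|g|^{\gamma/4}$ can repair this.

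The missing ingredient is the second virial identity, built from the perpendicular dilation generator $d_{n,\perp}^{\hat u,\hat Q}$ of Eq. \eqref{eq:II.10} (Eqs. \eqref{eq: second virial 1}--\eqref{eq: second virial 3}). The paper's proof splits into cases {\bf{(A)}}, {\bf{(B)}}, {\bf{(C)}}: in {\bf{(A)}} ($|\vnabla E_{\vP}|$ within $\cO(|g|^{\gamma/3})$ of $1$) the excluded cone $\mathcal{C}_{\hat Q}^{1/2}$ does remove the resonance and your radial argument works; in {\bf{(B)}} ($|\vnabla E_{\vP}|<1-|g|^{\gamma/3}$) the radial coercivity holds on all sectors; but in {\bf{(C)}} the resonant momenta necessarily have a perpendicular component of size $\sin\eta\gtrsim |g|^{\gamma/6}$, which yields the coercivity estimate \eqref{eq:III.42} for the \emph{perpendicular} commutator, $\big|\tfrac{|\vk_\perp|^2}{|\vk|^2}-\tfrac{|\vk_\perp|}{|\vk|}\hat k_\perp\cdot\vnabla E_{\vP}\big|>\cO(|g|^{\gamma/3})$, and the second virial identity then produces the per-sector bound on exactly those sectors where the radial one fails. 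Your treatment of the interaction term (splitting $d_n^{\hat u}\rho_\vx$, using Lemma \ref{lem-III.1} for the $\vx$-factor, and the $\cO(|g|^{1-\gamma}n\||\vk|^\beta\one_{(\frac{1}{2(n+1)},\frac{3}{2n})}\|_2)$ estimate) matches the paper, but without the perpendicular virial argument your proof covers only cases {\bf{(A)}} and {\bf{(B)}} and therefore establishes neither \eqref{eq:III.12a} nor \eqref{eq:III.14a} in full.
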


\subsubsection{Outline of the proof of Theorem \ref{theo-III.2}}\label{Section-V.I.1}

\noindent
To prove inequalities (\ref{eq:III.12a}),  (\ref{eq:III.14a}) we exploit two different virial arguments and properties (P1), (P2), and (P3) of Sect. \ref{sec: properties}. More precisely,  we employ both conjugate operators $D_{n}^{\hat{u}}\,:=\,d\Gamma(d_{n}^{\hat{u}})$ and $D_{n,\perp}^{\hat{u},\hat{Q}}\,:=\,d\Gamma(d_{n,\perp}^{\hat{u},\hat{Q}})$, with $d_{n}^{\hat{u}}$ and $d_{n,\perp}^{\hat{u},\hat{Q}}$  defined in Eqs. (\ref{eq:II.12})  and (\ref{eq:II.10}), respectively. The  virial identities (see Lemma \ref{lem-III.2} for a rigorous treatment of the identities below) corresponding to $D_{n}^{\hat{u}}$ and $D_{n,\perp}^{\hat{u},\hat{Q}}$ are:
\begin{itemize}
\item[i)]
\begin{eqnarray}
\label{eq:III.12}
0&= &(\Psi_{f_{\vec{Q}}^g}\,,\,d\Gamma(i[|\vk|,\,d_{n}^{\hat{u}}])\Psi_{f_{\vec{Q}}^g})\\
& &-(\Psi_{f_{\vec{Q}}^g}\,,\,\vnabla E_{\vP}\cdot d\Gamma(i[\vk,\,d_{n}^{\hat{u}}])\Psi_{f_{\vec{Q}}^g})\\
& &-g(\Psi_{f_{\vec{Q}}^g}\,,\,[a^*(id_{n}^{\hat{u}} \rho_{\vx})+\, a(id_{n}^{\hat{u}} \rho_{\vx})]\Psi_{f_{\vec{Q}}^g})\,\\
&= &(\Psi_{f_{\vec{Q}}^g}\,,\,d\Gamma(\chi_{n}^2(|\vk|)\xi^{g\,2}_{\hat{u}}(\hat{k})\,|\vk|)\Psi_{f_{\vec{Q}}^g})\\
& &-(\Psi_{f_{\vec{Q}}^g}\,,\,\vnabla E_{\vP}\cdot d\Gamma(\chi_{n}^2(|\vk|)\xi^{g\,2}_{\hat{u}}(\hat{k})\,\vk)\Psi_{f_{\vec{Q}}^g})\label{eq:III.20a}\\
& &-g(\Psi_{f_{\vec{Q}}^g}\,,\,[a^*(id_{n}^{\hat{u}} \rho_{\vx})+\, a(id_{n}^{\hat{u}} \rho_{\vx})]\Psi_{f_{\vec{Q}}^g})\,;\label{eq:III.20c}
\end{eqnarray}
\item[ii)]
\begin{eqnarray}
\label{eq:V.13}
0&= &(\Psi_{f_{\vec{Q}}^g}\,,\,d\Gamma(i[|\vk|,\,d_{n,\perp}^{\hat{u},\hat{Q}}])\Psi_{f_{\vec{Q}}^g})\\
& &-(\Psi_{f_{\vec{Q}}^g}\,,\,\vnabla E_{\vP}\cdot d\Gamma(i[\vk,\, d_{n,\perp}^{\hat{u},\hat{Q}}])\Psi_{f_{\vec{Q}}^g})\\
& &-g(\Psi_{f_{\vec{Q}}^g}\,,\,[a^*(id_{n,\perp}^{\hat{u},\hat{Q}} \rho_{\vx})+\, a(id_{n,\perp}^{\hat{u},\hat{Q}} \rho_{\vx})]\Psi_{f_{\vec{Q}}^g})  \\
&=&(\Psi_{f_{\vec{Q}}^g}\,,\,d\Gamma( \chi_n^2(|\vk|) \xi^{g\,2}_{\hat{u}}(\hat{k})\,\frac{|\vk_{\perp}|^2}{|\vk|})\Psi_{f_{\vec{Q}}^g}) \label{eq: second virial 1} \\
& &-(\Psi_{f_{\vec{Q}}^g}\,,\,\vnabla E_{\vP}\cdot d\Gamma(\chi_{n}^2(|\vk|)\xi^{g\,2}_{\hat{u}}(\hat{k})\,\vk_{\perp})\Psi_{f_{\vec{Q}}^g}) \label{eq: second virial 2} \\
& &-g(\Psi_{f_{\vec{Q}}^g}\,,\,[a^*(id_{n,\perp}^{\hat{u},\hat{Q}} \rho_{\vx})+\, a(id_{n,\perp}^{\hat{u}, \hat{Q}} \rho_{\vx})]\Psi_{f_{\vec{Q}}^g}) \,.\label{eq: second virial 3}
\end{eqnarray}
\end{itemize}
(For the definition of the functions $\chi_{n}(|\vk|)$, $\xi^{g}_{\hat{u}}(\hat{k})$ see (a) and (b), in Section \ref{subsec-II.2}) 

\noindent
Next, we explain in detail the key role of the virial identities.
In order to arrive at inequalities (\ref{eq:III.12a}), (\ref{eq:III.14a}), we  study (see Lemma \ref{lem-V.3}) the number operator restricted to the sector associated with the unit vector $\hat{u}$, and derive  the estimate
\begin{equation}\label{eq:inequality}
( \Psi_{f_{\vec{Q}}^g}\,,\,N_{n,\hat{u}}\,\Psi_{f_{\vec{Q}}^g})\leq ( \Psi_{f_{\vec{Q}}^g}\,,\,\Psi_{f_{\vec{Q}}^g}) \, \cO(|g|^{2(1-\gamma-\tilde{\gamma})}n^2\||\vk|^{\beta}\,\one_{(\frac{1}{2(n+1)},\frac{3}{2n})}(\vk)\|_2^2)\,,
\end{equation}
where $N_{n,\hat{u}}:=d\Gamma(\chi_{n}^2(|\vk|)\xi_{\hat{u}}^{g\,2}(\hat{k}))$, for some $\tilde{\gamma}$,  $0<\tilde{\gamma}<\gamma$; we will eventually choose $\tilde{\gamma}=\gamma/2$.
In doing this, we start from the bound
\begin{eqnarray}\label{eq:bound}
& &|(\Psi_{f_{\vec{Q}}^g}\,,\,d\Gamma(\chi_{n}^2(|\vk|)\xi^{g\,2}_{\hat{u}}(\hat{k})\,|\vk|)\Psi_{f_{\vec{Q}}^g})-(\Psi_{f_{\vec{Q}}^g}\,,\,\vnabla E_{\vP}\cdot d\Gamma(\chi_{n}^2(|\vk|)\xi^{g\,2}_{\hat{u}}(\hat{k})\,\vk)\Psi_{f_{\vec{Q}}^g})| \nonumber \\
&& \qquad  \geq \cO(|g|^{\tilde{\gamma}})(\Psi_{f_{\vec{Q}}^g}\,,\,d\Gamma(\chi_{n}^2(|\vk|)\xi^{g\,2}_{\hat{u}}(\hat{k})\,|\vk|)\Psi_{f_{\vec{Q}}^g})\quad\quad\quad\quad\quad\quad\quad\quad\quad
\end{eqnarray}
that holds if, for all $\vP \in supp f_{\vec{Q}}^g$ and for all $\hat{k}$ in the sector,
\begin{equation}\label{eq:crucial}
|1-\hat{k}\cdot \vnabla E_{\vP}|>|g|^{\tilde{\gamma}}>0\,.
\end{equation}

\noindent
Given (\ref{eq:bound}), it is straightfoward to control the term (see (\ref{eq:III.20c})) associated with the interaction part of the Hamiltonian,  and to derive the inequality in (\ref{eq:inequality}). Therefore, the bound in  (\ref{eq:crucial}) is crucial, and we must identify the sectors where it is violated. We recall that $\vnabla E_{\vP}$ is collinear to $\vP$, and  we may assume that they are parallel; the other case can be treated in the same way. \\

First, note that the angle between $\vP \in supp f_{\vec{Q}}^g$ and $\vec{Q}$, as well as the angle between $\hat{u}$ and a vector $\vk$ that belongs to the sector associated with $\hat{u}$, are  $\cO(|g|^\gamma)$. This follows from the definitions of the function $ f_{\vec{Q}}^g$ and the cones $\mathcal{C}_{\hat u}$, given in Section \ref{subsec-II.2}. It implies that, roughly speaking, we can identify $\hat P=\hat{Q}$ and $\hat{k}=\hat u $, since, for $|g|$ small enough,  $|g|^\gamma$ is much smaller than  $|g|^{\tilde{\gamma}}$ in \eqref{eq:crucial}.  \\
The vectors $\vk$ for which \eqref{eq:crucial} fails, satisfy
\begin{equation} 
\left\str \hat{k}\cdot \hat{P}- | \vnabla E_{\vP}|^{-1} \right\str \leq \cO( |g|^{\tilde{\gamma}(=\gamma/2)}),\, \qquad  |\vnabla E_{\vP}|>0\,.
\end{equation}
Hence, if $| \vnabla E_{\vP}| $ is bounded away from $1$, either - for $|\vnabla E_{\vP}|<1$; see also {\bf{(B)}} in Section  \ref{Section-V.I.2} - the condition \eqref{eq:crucial} is always satisfied, or - for  $|\vnabla E_{\vP}|>1$; see also {\bf{(C)}} in Section  \ref{Section-V.I.2} - such $\vk$  have a nonvanishing component, $\vk_{\perp}$, (of order $\sqrt{1-| \vnabla E_{\vP}|^{-2}} $) in the orthogonal complement of $\vec{Q} (=\vec{P})$. 
In particular, they satisfy
\begin{equation}\label{eq: inequality for difficult sectors}
\big|\frac{|\vk_{\perp}|^2}{|\vk|^2}-\frac{|\vk_{\perp}|}{|\vk|}\hat{k}_{\perp}\cdot \vnabla E_{\vP}\big| > \cO(|g|^{\gamma/3})>0\,,
\end{equation}
 Note that the second term on the LHS of \eqref{eq: inequality for difficult sectors} actually vanishes if our approximation $\hat P=\hat Q$ were to hold exactly.
In   Section \ref{Section-V.I.2}, we  establish \eqref{eq: inequality for difficult sectors} rigorously. 
  The bound \eqref{eq: inequality for difficult sectors}  immediately implies that, for the sectors $\hat{u}$  for which \eqref{eq:crucial} fails,  the following bound holds true
\begin{eqnarray}
& &|(\Psi_{f_{\vec{Q}}^g}\,,\,d\Gamma(\chi_n^2(|\vk|) \xi^{g\,2}_{\hat{u}}(\hat{k})\,\frac{|\vk_{\perp}|^2}{|\vk|})\Psi_{f_{\vec{Q}}^g})\nonumber \\
& &\,\,-(\Psi_{f_{\vec{Q}}^g}\,,\,\vnabla E_{\vP}\cdot d\Gamma(\chi_{n}^2(|\vk|)\xi^{g\,2}_{\hat{u}}(\hat{k})\,\vk_{\perp})\Psi_{f_{\vec{Q}}^g})| \nonumber \\
&\geq  &\cO(|g|^{\gamma/3})(\Psi_{f_{\vec{Q}}^g}\,,\,d\Gamma(\chi_n^2(|\vk|) \xi^{g\,2}_{\hat{u}}(\hat{k})\,|\vk|)\Psi_{f_{\vec{Q}}^g})\,,
\end{eqnarray}
Starting from this bound, we can use the second virial identity  (\ref{eq: second virial 1}, \ref{eq: second virial 2}, \ref{eq: second virial 3})  to derive the inequality (\ref{eq:inequality})  for the sectors $\hat u$ for which  \eqref{eq:crucial} fails. 

The conclusion is that, under the condition that $|\vnabla E_{\vP}|$, $\vP\in supp f_{\vec{Q}}^g$, differs from $1$ by a quantity $>\cO(|g|^{\gamma/3})$, we can cover all the sectors by the two virial identities above. Without the restriction on $|\vnabla E_{\vP}|$, these arguments only show  that Eq. (\ref{eq:inequality}) holds for all $\hat{u}$-dependent sectors contained in the  cone $\mathcal{C}^{1/2}_{\hat{Q}}$.\\

In implementing this strategy, we make use of the following lemma.

\begin{lemma} \label{lem-V.3}
Fix a unit vector $\hat{u}$ and assume that,
 for all $\vP \in supp f_{\vec{Q}}^g$ and for all $\hat{k}\in supp\, \xi_{\hat{u}}^g$, 
\begin{equation}\label{eq:III.26-bis}
|1-\hat{k}\cdot \vnabla E_{\vP}|>|g|^{\tilde{\gamma}}>0\,,\quad\quad 0<\tilde{\gamma}<\gamma\,,
\end{equation}
where $ \tilde{\gamma}$ is $g$- and $\vec{Q}$-independent.
Then, for $|g|$ small enough,  the following bound holds true
\begin{equation}\label{eq:V.16}
( \Psi_{f_{\vec{Q}}^g}\,,\,N_{n,\hat{u}}\,\Psi_{f_{\vec{Q}}^g})\leq ( \Psi_{f_{\vec{Q}}^g}\,,\,\Psi_{f_{\vec{Q}}^g}) \, \cO(|g|^{2(1-\gamma-\tilde{\gamma})}n^2\| |\vk|^{\beta}\,\one_{(\frac{1}{2(n+1)},\frac{3}{2n})}(\vk)\|_2^2)\,,
\end{equation}
where $N_{n,\hat{u}}:=d\Gamma(\chi_{n}^2(|\vk|)\xi_{\hat{u}}^{g\,2}(\hat{k}))$.
\end{lemma}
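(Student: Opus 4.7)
The plan is to combine the first virial identity of Lemma~\ref{lem-III.2}, applied with direction $\hat u$, with a lower bound on its principal part coming from the non-degeneracy hypothesis \eqref{eq:III.26-bis}, and with an upper bound on the coupling term based on Lemma~\ref{lem-III.1}. Writing the identity as $0 = A - B - gC$, with
\begin{equation*}
A := (\Psi_{f_{\vec{Q}}^g}, d\Gamma(\chi_n^2\xi_{\hat{u}}^{g\,2}|\vk|)\Psi_{f_{\vec{Q}}^g}), \quad B := (\Psi_{f_{\vec{Q}}^g}, \vnabla E_{\vP}\cdot d\Gamma(\chi_n^2\xi_{\hat{u}}^{g\,2}\vk)\Psi_{f_{\vec{Q}}^g}),
\end{equation*}
and $C := (\Psi_{f_{\vec{Q}}^g}, [a^*(id_n^{\hat{u}}\rho_{\vx}) + a(id_n^{\hat{u}}\rho_{\vx})]\Psi_{f_{\vec{Q}}^g})$, I would use the fiber decomposition and the fact that $\vnabla E_{\vP}$ is a multiplication operator on $\cH_{\vP}$ to rewrite
\begin{equation*}
A - B = \int d^3P\, |f_{\vec{Q}}^g(\vP)|^2 \bigl(\Psi_{\vP, E_\vP}, d\Gamma^b\bigl(\chi_n^2\xi_{\hat u}^{g\,2}|\vk|(1 - \hat k \cdot \vnabla E_{\vP})\bigr)\Psi_{\vP, E_\vP}\bigr).
\end{equation*}
By \eqref{eq:III.26-bis}, $|1-\hat k \cdot \vnabla E_\vP|> |g|^{\tilde\gamma}$ for every $\vP \in \supp f_{\vec Q}^g$ and $\hat k \in \supp \xi_{\hat u}^g$; since both supports have angular width $\cO(|g|^\gamma)$ and, by property (P1), $\vnabla E_\vP$ is continuous and collinear with $\vP$, the sign of $(1-\hat k\cdot \vnabla E_\vP)$ is constant throughout (using $\tilde\gamma < \gamma$). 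Hence
\begin{equation*}
|A - B| \geq |g|^{\tilde\gamma}\bigl(\Psi_{f_{\vec Q}^g}, d\Gamma(\chi_n^2\xi_{\hat u}^{g\,2}|\vk|)\Psi_{f_{\vec Q}^g}\bigr) \geq \frac{|g|^{\tilde\gamma}}{2(n+1)}\bigl(\Psi_{f_{\vec Q}^g}, N_{n,\hat u}\Psi_{f_{\vec Q}^g}\bigr).
\end{equation*}

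\textbf{Coupling estimate.} To bound $|gC|$, I would expand $id_n^{\hat u}\rho_{\vx}$ with $\rho_\vx(\vk) = \rho(\vk) e^{-i\vk\cdot\vx}$, using the homogeneity identity $\vk\cdot\nabla_\vk\xi_{\hat u}^g(\hat k) = 0$ (degree zero in $\vk$) to eliminate the angular derivative of the cutoff $\xi_{\hat u}^g$, and split the result as $id_n^{\hat u}\rho_\vx = h_1(\vk)e^{-i\vk\cdot \vx} + (\vk\cdot\vx)\,h_2(\vk)e^{-i\vk\cdot\vx}$ with $h_1, h_2$ independent of $\vx$. Using \eqref{assumption on form factor} and the properties of $\chi_n$, one obtains
\begin{equation*}
\|h_1\|_2 \leq \cO\bigl(\||\vk|^\beta \one_{(\tfrac{1}{2(n+1)},\tfrac{3}{2n})}(\vk)\|_2\bigr), \quad \|h_2\|_2 \leq \cO\bigl(n^{-1}\||\vk|^\beta \one_{(\tfrac{1}{2(n+1)},\tfrac{3}{2n})}(\vk)\|_2\bigr).
\end{equation*}
The $\vx$-independent piece is controlled by the standard bound $\|a^\#(h_1)\Psi\| \leq \|h_1\|_2 \|(N_{n,\hat u}+1)^{1/2}\Psi\|$. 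For the $\vx$-linear piece, I would exploit $[x_j, a^\#(h_2)] = 0$ to move $\vx$ past the creation/annihilation operators and then apply Lemma~\ref{lem-III.1}, which gives $\|x_j\Psi_{f_{\vec Q}^g}\| \leq \cO(|g|^{-\gamma})\|\Psi_{f_{\vec Q}^g}\|$. Combining,
\begin{equation*}
|gC| \leq C_1\bigl(|g| + |g|^{1-\gamma}/n\bigr)\,\||\vk|^\beta \one_{(\tfrac{1}{2(n+1)},\tfrac{3}{2n})}(\vk)\|_2\,\|(N_{n,\hat u}+1)^{1/2}\Psi_{f_{\vec Q}^g}\|\,\|\Psi_{f_{\vec Q}^g}\|.
\end{equation*}

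\textbf{Conclusion and main obstacle.} Setting $X := (\Psi_{f_{\vec Q}^g}, N_{n,\hat u}\Psi_{f_{\vec Q}^g})/\|\Psi_{f_{\vec Q}^g}\|^2$ and combining the two preceding bounds yields
\begin{equation*}
\frac{|g|^{\tilde\gamma}}{n}\, X \leq C_2\, |g|^{1-\gamma}\bigl(1 + |g|^\gamma n\bigr)\,\||\vk|^\beta \one_{(\tfrac{1}{2(n+1)},\tfrac{3}{2n})}(\vk)\|_2\, (X+1)^{1/2},
\end{equation*}
and using $(X+1)^{1/2}\leq X^{1/2}+1$ to solve the resulting quadratic inequality gives \eqref{eq:V.16}. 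The hardest step is the lower bound on $|A-B|$: the sign-definiteness of $(1-\hat k\cdot\vnabla E_\vP)$ across the two narrow cones hinges on the strict inequality $\tilde\gamma<\gamma$ provided by the hypothesis, together with property (P1). Moreover, the weight $|\vk|$ from the commutator $[|\vk|,d_n^{\hat u}]$ forces an unavoidable $1/n$ loss on the left-hand side; this loss is only tolerable because the degree-$0$ homogeneity trick in the coupling estimate suppresses the most dangerous term in $id_n^{\hat u}\rho_\vx$ and produces the compensating $1/n$ factor in $\|h_2\|_2$.
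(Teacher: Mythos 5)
Your argument is in substance the paper's own proof: the same virial identity from Lemma \ref{lem-III.2}, the same lower bound $|A-B|\geq |g|^{\tilde{\gamma}}A\geq \frac{|g|^{\tilde{\gamma}}}{2(n+1)}(\Psi_{f_{\vec{Q}}^g},N_{n,\hat{u}}\Psi_{f_{\vec{Q}}^g})$ extracted from \eqref{eq:III.26-bis} (the paper simply fixes the sign of $1-\hat{k}\cdot\vnabla E_{\vP}$ and treats the other case symmetrically, which is your sign-constancy remark), and the same treatment of the coupling term by splitting $i d_{n}^{\hat{u}}\rho_{\vx}$, using the infrared bounds on $\rho$ and Lemma \ref{lem-III.1} for the $\vx$-factor. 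One bookkeeping point falls short of the statement as written: estimating the creation part by $\|a^{*}(h)\Psi\|\leq \|h\|_2\|(N_{n,\hat{u}}+1)^{1/2}\Psi\|$ leaves you with $(X+1)^{1/2}$ (your notation), and after solving the quadratic inequality the resulting additive term, of order $(n+1)\,(|g|^{1-\tilde{\gamma}}+|g|^{1-\gamma-\tilde{\gamma}}/n)\,\| |\vk|^{\beta}\one_{(\frac{1}{2(n+1)},\frac{3}{2n})}(\vk)\|_2$, is \emph{not} dominated by the right-hand side of \eqref{eq:V.16} uniformly in $n$ (the claimed bound decays like $n^{-2\beta-1}$, the additive term only like $n^{-\beta-1/2}$). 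The paper avoids this entirely by estimating the expectation value directly, writing $(\Psi,a^{*}(f_{\vx})\Psi)=\int d^3k\,(a_{\vk}\Psi, f_{\vx}(\vk)\Psi)$ and applying Cauchy--Schwarz in $\vk$, so that only $\|N_{n,\hat{u}}^{1/2}\Psi_{f_{\vec{Q}}^g}\|$, i.e.\ $X^{1/2}$ with no $+1$, appears and the quadratic inequality closes exactly to \eqref{eq:V.16}; with that one-line change your proof coincides with the paper's. Finally, the ``compensating $1/n$'' you obtain from the $\vx$-linear term is a genuine (small) refinement but is not needed: the paper's coupling estimate carries no $1/n$, and the factor $n^2$ in \eqref{eq:V.16} is precisely the uncompensated loss from $\chi_{n}^2\xi_{\hat{u}}^{g\,2}|\vk|\geq \frac{1}{2(n+1)}\chi_{n}^2\xi_{\hat{u}}^{g\,2}$.
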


\noindent
\emph{Proof}

\noindent
We assume that (\ref{eq:III.26-bis}) holds with $$1-\hat{k}\cdot \vnabla E_{\vP}<0;$$ the other case, $1-\hat{k}\cdot \vnabla E_{\vP}>0$, can be treated  similarly.  We get
\begin{eqnarray}
0&=&(\Psi_{f_{\vec{Q}}^g}\,,\,d\Gamma(\chi_{n}^2(|\vk|)\xi_{\hat{u}}^{g\,2}(\hat{k})\,|\vk|)\Psi_{f_{\vec{Q}}^g})\quad\,\, \label{eq:III.43bis}\\
& &-(\Psi_{f_{\vec{Q}}^g}\,,\,\vnabla E_{\vP}\cdot d\Gamma(\chi_{n}^2(|\vk|)\xi_{\hat{u}}^{g\,2}(\hat{k})\,\vk)\Psi_{f_{\vec{Q}}^g})\quad\quad\quad\,\,  \nonumber \\
& &-g(\Psi_{f_{\vec{Q}}^g}\,,\,[a^*(id_{n}^{\hat{u}} \rho_{\vx})+\, a(id_{n}^{\hat{u}} \rho_{\vx})]\Psi_{f_{\vec{Q}}^g})\,\nonumber \\
& \leq &-|g|^{\tilde{\gamma}}(\Psi_{f_{\vec{Q}}^g}\,,\,d\Gamma(\chi_{n}^2(|\vk|)\xi_{\hat{u}}^{g\,2}(\hat{k})|\vk|)\Psi_{f_{\vec{Q}}^g})\label{eq:III.43bisbis} \\
& &
+c|g|^{1-\gamma}\|\Psi_{f_{\vec{Q}}^g}\|\|d\Gamma(\chi_{n}^2(|\vk|)\xi_{\hat{u}}^{g\,2}(\hat{k}))^{1/2}\Psi_{f_{\vec{Q}}^g} \|\times\\
& &\quad \times(\int  |\vk|^{2\beta}\,\one_{(\frac{1}{2(n+1)},\frac{3}{2n})}(\vk)\,d^3k)^{1/2}\nonumber 
\end{eqnarray}
for some constant $c$, $c>0$, uniform in $\vec{Q}$,  $|\vec{Q}|\in I'_g$. To do the step from (\ref{eq:III.43bis}) to (\ref{eq:III.43bisbis}), we split
\begin{eqnarray}
i(d_{n}^{\hat{u}} \rho_{\vx})(\vk)
&= &-\,\chi_{n}(|\vk|)\,\xi_{\hat{u}}^g(\hat{k})\,\big(\vk\cdot\vnabla_{\vk}\big(\chi_n(|\vk|)\xi_{\hat{u}}^g(\hat{k})\big) \big)\, {\rho}(\vk)\,e^{-i  \vk \cdot \vx} \,\quad\quad\quad    \nonumber\\
& &-\,\chi_{n}^2(|\vk|)\xi_{\hat{u}}^{g\,2}(\hat{k})\,\big(\vk\cdot\vnabla_{\vk}
{\rho}(\vk)\big) e^{-i  \vk \cdot \vx}   \nonumber\\
& &-\frac{1}{2}\,\chi_{n}^2(|\vk|)\xi_{\hat{u}}^{g\,2}(\hat{k})\,\big(\vnabla_{\vk}\cdot\vk\big)\,{\rho}(\vk)\; e^{-i \vk \cdot \vx}  \nonumber\\
& &-\,\chi_{n}^2(|\vk|)\xi_{\hat{u}}^{g\,2}(\hat{k})\, {\rho}(\vk)\, \big(\vk\cdot\vnabla_{\vk}e^{-i  \vk \cdot \vx}\big)\,  \label{eq: splitting of form factor}
\end{eqnarray} 
and we may justify this step for each of the four terms separately, using the Schwarz inequality and
\begin{itemize}
\item[i)] The assumption $\left\str 1-\hat{k}\cdot \vnabla E_{\vP}\right\str>|g|^{\tilde{\gamma}}$ for  $\vP \in supp f_{\vec{Q}}^g$;
\item[ii)] The infrared behavior of $\rho(\vk)$ as assumed in \eqref{assumption on form factor}, i.e., $|\rho(\vk)| \leq\cO( |\vk|^{\beta})$ and $|\vnabla_{\vk} {\rho}(\vk)| \leq\cO( |\vk|^{\beta-1})$;
\item[iii)] Lemma \ref{lem-III.1}.
\end{itemize}
As an example, for the term proportional to $$ -\,\chi_{n}^2(|\vk|)\xi_{\hat{u}}^{g\,2}(\hat{k})\, {\rho}(\vk)\, \big(\vk\cdot\vnabla_{\vk}e^{-i  \vk \cdot \vx}\big)$$ we proceed as follows:
\begin{eqnarray}
\lefteqn{ \left|(\Psi_{f_{\vec{Q}}^g}\,,\,a^*\big(-\,\chi_{n}^2(|\vk|)\xi_{\hat{u}}^{g\,2}(\hat{k})\, \rho(\vk)\, \big(\vk\cdot\vnabla_{\vk}e^{-i  \vk \cdot \vx}\big)\big)\Psi_{f_{\vec{Q}}^g}) \right| } \nonumber \\
&= &\left|\int \,\chi_{n}^2(|\vk|)\xi_{\hat{u}}^{g\,2}(\hat{k})\, \rho(\vk)\, \vk\cdot(a_{\vk}\Psi_{f_{\vec{Q}}^g}\,,\,\vnabla_{\vk}e^{-i  \vk \cdot \vx}\Psi_{f_{\vec{Q}}^g})d^3k \right| \nonumber 
\\
&\leq& \int \,\chi_{n}^2(|\vk|)\xi_{\hat{u}}^{g\,2}(\hat{k})\, |\rho(\vk)| \, |\vk| \, |(a_{\vk}\Psi_{f_{\vec{Q}}^g}\,,\,\vnabla_{\vk}e^{-i  \vk \cdot \vx}\Psi_{f_{\vec{Q}}^g})|  d^3k\nonumber \\
&\leq&\int \,\chi_{n}^2(|\vk|)\xi_{\hat{u}}^{g\,2}(\hat{k})\, |\rho(\vk)|\, |\vk| \, \|a_{\vk}\Psi_{f_{\vec{Q}}^g}\| \|\vnabla_{\vk}e^{-i  \vk \cdot \vx}\Psi_{f_{\vec{Q}}^g}\|d^3k \nonumber \\
&\leq&\big(\int \,\chi_{n}^2(|\vk|)\xi_{\hat{u}}^{g\,2}(\hat{k})\,\|a_{\vk}\Psi_{f_{\vec{Q}}^g}\|^2 d^3k\big)^{1/2}\times \nonumber \\
& &\times\big(\int\|\vnabla_{\vk}e^{-i \vk \cdot \vx}\Psi_{f_{\vec{Q}}^g}\|^2\, |\vk|^{2\beta}\,\one_{(\frac{1}{2(n+1)},\frac{3}{2n})}(\vk)\, |\vk|^2d^3k\big)^{1/2}\,.
\end{eqnarray} 
We notice that
\begin{equation}
\big(\int \,\chi_{n}^2(|\vk|)\xi_{\hat{u}}^{g\,2}(\hat{k})\,\|a_{\vk}\Psi_{f_{\vec{Q}}^g}\|^2 d^3k\big)^{1/2}=\|d\Gamma(\chi_{n}^2(|\vk|)\xi_{\hat{u}}^{g\,2}(\hat{k}))^{1/2}\Psi_{f_{\vec{Q}}^g} \|
\end{equation}
and, since $\|\vnabla_{\vk}e^{-i  \vk \cdot \vx}\Psi_{f_{\vec{Q}}^g}\|\leq \cO(|g|^{-\gamma}\|\Psi_{f_{\vec{Q}}^g}\|)$ (Lemma \ref{lem-III.1}),
\begin{eqnarray}
\lefteqn{\big(\int\|\vnabla_{\vk}e^{-i  \vk \cdot \vx}\Psi_{f_{\vec{Q}}^g}\|^2|\vk|^{2\beta}\,\one_{(\frac{1}{2(n+1)},\frac{3}{2n})}(\vk)\, |\vk|^2d^3k\big)^{1/2}}\\
& \leq& C |g|^{-\gamma} (\int  |\vk|^{2\beta}\,\one_{(\frac{1}{2(n+1)},\frac{3}{2n})}(\vk)\,d^3k)^{1/2}\|\Psi_{f_{\vec{Q}}^g}\|\,.
\end{eqnarray}

\noindent
Then, starting from (\ref{eq:III.43bisbis}), the bound from above takes the form
\begin{equation}
( \Psi_{f_{\vec{Q}}^g}\,,\,N_{n,\hat{u}}\,\Psi_{f_{\vec{Q}}^g})\leq ( \Psi_{f_{\vec{Q}}^g}\,,\,\Psi_{f_{\vec{Q}}^g}) \, \cO(|g|^{2(1-\gamma-\tilde{\gamma})}n^2\||\vk|^{\beta}\,\one_{(\frac{1}{2(n+1)},\frac{3}{2n})}(\vk)\|_2^2)\,,
\end{equation}
where $N_{n,\hat{u}}:=d\Gamma(\chi_{n}^2(|\vk|)\xi_{\hat{u}}^{g\,2}(\hat{k}))$, because
\begin{equation}
\chi_{n}^2(|\vk|)\xi_{\hat{u}}^{g\,2}(\hat{k})|\vk| \geq \frac{1}{2(n+1)}\chi_{n}^2(|\vk|)\xi_{\hat{u}}^{g\,2}(\hat{k})\,.
\end{equation}
\QED
\subsubsection{Proof of Theorem \ref{theo-III.2}} \label{Section-V.I.2}
Notice that, starting from Lemma  \ref{lem-V.3}, we can fill the region 
\begin{equation}
\{\hat{k}:|1-\hat{k}\cdot \vnabla E_{\vP}|>|g^{\tilde{\gamma}}|\quad\forall \vP\in supp f_{\vec{Q}}^g\}
\end{equation}
with sectors corresponding to functions $\xi_{\hat{u}_j}^g$ where  $1\leq j \leq \bar{j}\leq  \cO(|g|^{-\gamma})$, so that we obtain
\begin{equation}\label{eq:III.30}
\sum_{j=1}^{\bar{j}}( \Psi_{f_{\vec{Q}}^g}\,,\,N_{n,\hat{u}_j}\,\Psi_{f_{\vec{Q}}^g})\leq ( \Psi_{f_{\vec{Q}}^g}\,,\,\Psi_{f_{\vec{Q}}^g}) \cO(|g|^{2(1-\frac{3\gamma}{2}-\tilde{\gamma})}n^2\||\vk|^{\beta}\,\one_{(\frac{1}{2(n+1)},\frac{3}{2n})}(\vk)\|_2^2)\,.
\end{equation}

We observe that if,  for some $\vP\in  supp f_{\vec{Q}}^g$,  $$||\vnabla E_{\vP}|-1|\leq |g|^{\gamma/3}\,$$then, for $|g|$ small enough,   $$||\vnabla E_{\vP'}|-1|\leq2 |g|^{\gamma/3}$$  for all $\vP'\in  supp f^g_{\vec{Q}}$.
This holds because
\begin{itemize}
\item
of the constraints on the support of $f_{\vec{Q}}^g$ (see Section \ref{subsec-II.2});
\item
$|\frac{\partial^2 E_{\vP}}{\partial |\vP|^2}|\leq C''_{I}$; (see Property  (P2) in Section \ref{subsec-II.1}).
\end{itemize}
After the  result in Eq. (\ref{eq:III.30}),  which holds for sectors  such that (\ref{eq:III.26-bis}) (Lemma (\ref{lem-V.3})) is fulfilled,  we may distinguish three possible situations,  {\bf{(A)}}, {\bf{(B)}},  and {\bf{(C)}}, depending on the length of the vector $\vnabla E_{\vP}$, $\vP \in supp f_{\vec{Q}}^g$.

\begin{itemize}
\item[{\bf{(A)}}]

\noindent
\emph{For some $\vP \in supp f_{\vec{Q}}^g$,  $||\vnabla E_{\vP}|-1|\leq |g|^{\gamma/3}$.}

\noindent
In this case, $\forall \hat{k} \in \mathcal{C}_{\hat{Q}}^{1/2}$,  the inequality in  (\ref{eq:III.26-bis})  holds true for $\tilde{\gamma}=\gamma/2$ and  $\str g \str$ small enough, because
\begin{itemize}
\item[i)]
$||\vnabla E_{\vP'}|-1|\leq 2 |g|^{\gamma/3}$, $\forall \,\vP' \in \, supp f^g_{\vec{Q}}$.
\item[ii)]
by definition $$\mathcal{C}_{\hat{Q}}^{1/2}:=\{\hat{k}\,\,:\,\,|\hat{k}\cdot\hat{Q}| \leq \cos(\frac{1}{2}|g|^{\gamma/8})\}\,.$$
\end{itemize}
Thus, we can use the estimate in Eq.  (\ref{eq:III.30}) with $\tilde{\gamma}=\gamma/2$, 
\begin{equation}
\frac{( \Psi_{f_{\vec{Q}}^g}\,,\,N_{n,\mathcal{C}_{\hat{Q}}^{1/2}}\,\Psi_{f_{\vec{Q}}^g})}{( \Psi_{f_{\vec{Q}}^g}\,,\,\Psi_{f_{\vec{Q}}^g})}\leq\, \cO(|g|^{2(1-2\gamma)}n^2\| |\vk|^{\beta}\,\one_{(\frac{1}{2(n+1)},\frac{3}{2n})}(\vk)\|_2^2)\,,
\end{equation}
where $N_{n,\mathcal{C}_{\hat{Q}}^{1/2}}:=d\Gamma(\chi_{n}^2(|\vk|)\xi_{\mathcal{C}_{\hat{Q}}^{1/2}}^{g\,2}(\hat{k}))$ ($\chi_{n}(|\vk|)$ and $\xi_{\mathcal{C}_{\hat{Q}}^{1/2}}^g(\hat{k})$ are defined in (a) and (b) of Section (\ref{subsec-II.2})).
\item[{\bf{(B)}}]

\noindent
\emph{For some $\vP \in supp f_{\vec{Q}}^g$,  $|\vnabla E_{\vP}|<1-|g|^{\gamma/3}$.}

\noindent
The constraint (\ref{eq:III.26-bis}) with $\tilde{\gamma}=\gamma/2$ is  fulfilled  for all angular sectors.
\item[{\bf{(C)}}]

\noindent
\emph{For all $\vP\in supp f_{\vec{Q}}^g$
$,  |\vnabla E_{\vP}|> 1+|g|^{\gamma/3}$.}

\noindent
First we notice that  we can restrict our analysis to an angular sector labeled by a direction $\hat{u}$ such that, for some $\vP\in supp f_{\vec{Q}}^g$, the inequality
\begin{equation}\label{eq:III.31b}
|1-\hat{k}\cdot \vnabla E_{\vP}|\leq |g|^{\tilde{\gamma}(=\gamma/2)}
\end{equation}
holds true for some $\hat{k}$ belonging to the sector under consideration. This is because, if \begin{equation}\label{eq:III.31c}
|1-\hat{k}\cdot \vnabla E_{\vP}|>|g|^{\tilde{\gamma}(=\gamma/2)}\,,
\end{equation}
for all $\hat{k}$ belonging to the given sector,
 then the result in (\ref{eq:V.16}) holds,  as we have proven in Lemma \ref{lem-V.3}.  

\noindent
We now show that the combination of $|\vnabla E_{\vP}|> 1+|g|^{\gamma/3}$ and (\ref{eq:III.31b}) yields the useful inequality (\ref{eq:III.42}) below. 

\noindent
We notice that, assuming the bound in Eq. (\ref{eq:III.31b}) for some $\hat{k}$ belonging to the sector,  for $|g|$ small enough,
 \begin{equation}
|1-\hat{k}\cdot \vnabla E_{\vP}|\leq2 |g|^{\tilde{\gamma}(=\gamma/2)}\,, \label{eq: gradient close to one}
\end{equation}
for all $\hat{k}$ in the sector labeled by $\hat{u}$.
\noindent
Furthermore, $\hat{P}\cdot \hat{Q}\geq \cos(|g|^{\gamma})$, for all $\vP \in supp f_{\vec{Q}}^g$, by construction, and we may assume that $\vnabla E_{\vP}$ is parallel to $\vP$; the other case, $\vP\cdot \vnabla E_{\vP}=-|\vP| | \vnabla E_{\vP}|$,  can be treated in an analogous way. Let $\eta$ be  the angle between $\hat{k}$ and $\hat{Q}$. Then,  (\ref{eq: gradient close to one}) means  that
\begin{equation}
-2|g|^{\gamma/2}\leq 1 -\cos(\eta+\epsilon)|\vnabla E_{\vP}|\leq 2|g|^{\gamma/2}\,,
\end{equation}
where $\epsilon=\cO(|g|^{\gamma})$ and, for $|g|$ small enough,  
\begin{equation}
1-c'|g|^{\gamma/3}\geq\frac{[1+2|g|^{\gamma/2}]}{|\vnabla E_{\vP}|}\geq \cos(\eta+\epsilon)\geq \frac{[1-2|g|^{\gamma/2}]}{|\vnabla E_{\vP}|}
\end{equation}
for some constant $c'>0$. Hence we have $\eta \geq c''|g|^{\gamma/6}>0$ where $c''>0$, and we find that
\begin{equation}\label{eq:III.42}
\big|\frac{|\vk_{\perp}|^2}{|\vk|^2}-\frac{|\vk_{\perp}|}{|\vk|}\hat{k}_{\perp}\cdot \vnabla E_{\vP}\big|>\cO(\sin^2(\eta))\geq \cO(|g|^{\gamma/3})>0\,,
\end{equation}
for all $\vk$ in the sector, where  $\vk_{\perp}:=\vk-\frac{\vk\cdot\vec{Q}}{|\vec{Q}|^2}\,\vec{Q}$, because
\begin{itemize}
\item[i)]
by assumption, $\hat{P}\cdot \hat{Q}>\cos(|g|^{\gamma})$; 
\item[ii)]
$\vnabla E_{\vP}$ is parallel (or antiparallel) to $\vP$ and and $|\vnabla E_{\vP}|<C'_{I}$;
\item[iii)]
$\frac{|\vk_{\perp}|}{|\vk|}=\sin(\eta)$;
 \item[iv)]
$|\hat{k}_{\perp}\cdot \vnabla E_{\vP}|\leq|\vnabla E_{\vP}|\times \cO(|g|^{\gamma})$ using that $\hat{P}\cdot \hat{Q}>\cos(|g|^{\gamma})$.
\end{itemize}

\noindent
Assuming, for example, that (\ref{eq:III.42}) holds, because
\begin{equation} \label{eq:III.32}
\frac{|\vk_{\perp}|^2}{|\vk|^2}-\frac{|\vk_{\perp}|}{|\vk|}\hat{k}_{\perp}\cdot \vnabla E_{\vP}<-c|g|^{\gamma/3}\,,
\end{equation}
where $c>0$, we use the second virial identity (see Eq. (\ref{eq:V.13})) to obtain 
\begin{eqnarray}
0&=&(\Psi_{f_{\vec{Q}}^g}\,,\,d\Gamma(\chi_{n}^2(|\vk|)\xi_{\hat{u}}^{g\,2}(\hat{k})\,\frac{|\vk_{\perp}|^2}{|\vk|})\Psi_{f_{\vec{Q}}^g})\\
& &-(\Psi_{f_{\vec{Q}}^g}\,,\,\vnabla_{\vP}E_{\vP}\cdot d\Gamma(\chi_{n}^2(|\vk|)\xi_{\hat{u}}^{g\,2}(\hat{k})\,\vk_{\perp})\Psi_{f_{\vec{Q}}^g})\nonumber \\
& &-g(\Psi_{f_{\vec{Q}}^g}\,,\,[a^*(id_{n,\perp}^{\hat{u},\hat{P'}} \rho_{\vx})+\, a(id_{n,\perp}^{\hat{u},\hat{P'}} \rho_{\vx})]\Psi_{f_{\vec{Q}}^g})\,,\nonumber \\
& \leq &-c|g|^{\gamma/3}(\Psi_{f_{\vec{Q}}^g}\,,\,d\Gamma(\chi_{n}^2(|\vk|)\xi_{\hat{u}}^{g\,2}(\hat{k})|\vk|)\Psi_{f_{\vec{Q}}^g})\label{eq:V.67b} \\
& &
+c'|g|^{1-\gamma}\|\Psi_{f_{\vec{Q}}^g}\|\|d\Gamma(\chi_{n}^2(|\vk|)\xi_{\hat{u}}^{g\,2}(\hat{k}))^{1/2}\Psi_{f_{\vec{Q}}^g} \|\times\\
& &\quad \times(\int  |\vk|^{2\beta}\,\one_{(\frac{1}{2(n+1)},\frac{3}{2n})}(\vk)\,d^3k)^{1/2} \nonumber
\end{eqnarray}
for some $c'>0$. Now we estimate (\ref{eq:V.67b}) similarly to (\ref{eq:III.43bisbis}) in Lemma \ref{lem-V.3}.
\end{itemize}
\noindent
\emph{Conclusions}

\noindent
For $|g|$ small enough, we have proven that:
\begin{itemize}
\item[i)] By combining cases {\bf{(A)}}, {\bf{(B)}}, and {\bf{(C)}}, 
 \begin{equation}
\frac{( \Psi_{f_{\vec{Q}}^g}\,,\,N_{n,\mathcal{C}_{\hat{Q}}^{1/2}}\,\Psi_{f_{\vec{Q}}^g})}{( \Psi_{f_{\vec{Q}}^g}\,,\,\Psi_{f_{\vec{Q}}^g})}\leq\, \cO(|g|^{2(1-2\gamma)}n^2\| |\vk|^{\beta}\,\one_{(\frac{1}{2(n+1)},\frac{3}{2n})}(\vk)\|_2^2)\,.
\end{equation}
Note that, in cases {\bf{(B)}} and {\bf{(C)}}, the angular restriction was not used.
\item [ii)] Under the assumption that $$||\vnabla E_{\vP}|-1|>|g|^{\gamma/3},\quad \text{for all}\quad \vP\in f^g_{\vec{Q}},$$
we have that
\begin{equation}
\frac{( \Psi_{f_{\vec{Q}}^g}\,N_{n}\,\Psi_{f_{\vec{Q}}^g})}{( \Psi_{f_{\vec{Q}}^g},\Psi_{f_{\vec{Q}}^g})}\leq  \, \cO(|g|^{2(1-2\gamma)}n^2\| |\vk|^{\beta}\,\one_{(\frac{1}{2(n+1)},\frac{3}{2n})}(\vk)\|_2^2)\,,
\end{equation}
where $N_{n}:=d\Gamma(\chi_{n}^2(|\vk|)$. This follows from cases {\bf{(B)}} and {\bf{(C)}}.
\end{itemize}
\QED

\subsection{Number operator estimates in putative fiber eigenvectors }\label{Section-V.2}

Using the results in Theorem \ref{theo-III.2} and Property (P3), we are now in a position to state some bounds on the expectation value of the boson number operator restricted to the  fiber spaces.  These bounds hold pointwise in $\vP$,  for $|\vP|$ in  the open interval  $I'_g\subset I_g$ introduced in Section  \ref{subsec-II.2}; see Eqs. (\ref{I-constraint}, \ref{II-constraint}).

\begin{theorem}\label{theo-III.3}
For $|g|$ sufficiently small, and $(\vP,E_{\vP}) \in I'_g\times \Delta_I$:
\begin{equation}\label{eq:III.36}
( \Psi_{\vP,E_{\vP}}\,,\,N_{n,\mathcal{C}_{\hat{P}}^2}^b\,\Psi_{\vP,E_{\vP}})\leq \cO(|g|^{\frac{(1-2\gamma)}{3}}|g|^{-\gamma/8}n^{\frac{4}{3}}\| |\vk|^{\beta}\,\one_{(\frac{1}{2(n+1)},\frac{3}{2n})}(\vk)\|_2^{\frac{1}{3}})\,,
\end{equation}
where
\begin{equation}
N_{n,\mathcal{C}_{\hat{P}}^2}^b:=d\Gamma^b(\chi_{n}^2(|\vk|)\,\xi_{\mathcal{C}_{\hat{P}}^{2}}^{g \,2}(\hat{k}))\,.
\end{equation}
\\

\noindent
Furthermore,  if  in addition  $||\vnabla E_\vP|-1|> \frac{3}{2}|g|^{\gamma/3}\,$  then
\begin{equation}\label{eq:III.38b}
( \Psi_{\vP,E_{\vP}}\,,\,N_n^b\,\Psi_{\vP,E_{\vP}})\leq \cO(|g|^{\frac{(1-2\gamma)}{3}}n^{\frac{4}{3}}\| |\vk|^{\beta}\,\one_{(\frac{1}{2(n+1)},\frac{3}{2n})}(\vk)\|_2^{\frac{1}{3}})\,
\end{equation}
where 
\begin{equation}
N_{n}^b:=d\Gamma^b(\chi_{n}^2(|\vk|))\,.
\end{equation}
\noindent
The constants in (\ref{eq:III.36}), (\ref{eq:III.38b})  can be chosen uniformly in $\vec{P}$,  $|\vec{P|}\in I'_g \subset I$ ($I'_g$ is defined in Section \ref{subsec-II.2},  Eqs. (\ref{I-constraint}), (\ref{II-constraint})). They only depend on  $I$ and on $\Delta_I$.
\end{theorem}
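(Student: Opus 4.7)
The strategy is to convert the wave-packet bound of Theorem~\ref{theo-III.2} into a pointwise fiber bound via a fundamental-theorem-of-calculus argument that exploits the $C^{1}$-regularity of $\vP\mapsto\Psi_{\vP,E_\vP}$ granted by the Main Hypothesis. Fix $\vP_0$ with $|\vP_0|\in I'_g$ and set $\vec Q:=\vP_0$, so that the wave packet $\Psi_{f^g_{\vec Q}}$ is well-defined and, by \eqref{I-constraint}--\eqref{II-constraint}, $\supp f^g_{\vec Q}\subset\{\vP:|\vP|\in I_g\}$. The fiber decomposition of the cut-off number operator on the wave packet reads
\begin{equation*}
\bigl(\Psi_{f^g_{\vec Q}},\,N_{n,\mathcal{C}_{\hat Q}^{1/2}}\Psi_{f^g_{\vec Q}}\bigr)\;=\;\int|f^g_{\vec Q}(\vP)|^{2}\,F(\vP)\,d^{3}P,\quad F(\vP):=\bigl(\Psi_{\vP,E_\vP},\,N^{b}_{n,\mathcal{C}_{\hat Q}^{1/2}}\,\Psi_{\vP,E_\vP}\bigr),
\end{equation*}
so Theorem~\ref{theo-III.2} yields $\int|f^g_{\vec Q}|^{2}F\,d^{3}P\le A\cdot \int|f^g_{\vec Q}|^{2}\,d^{3}P$, with $A:=\cO(|g|^{2(1-2\gamma)}n^{2}G)$ and $G:=\||\vk|^{\beta}\one_{(\frac{1}{2(n+1)},\frac{3}{2n})}\|_{2}^{2}$.

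To pass from this average to a pointwise value I bound $\nabla_\vP F$ by Leibniz and Cauchy--Schwarz: $|\nabla_\vP F(\vP)|\leq 2\,\|\nabla_\vP\Psi_{\vP,E_\vP}\|\cdot\|d\Gamma^{b}(\chi_n^{2}\xi^{g\,2})\Psi_{\vP,E_\vP}\|\leq\cO(n)$. The first factor is bounded by $C_I$ (Main Hypothesis), and the second by $\cO(n)$ using the pointwise estimate $\chi_n^{2}\xi^{g\,2}\leq 2(n+1)|\vk|$ on the support, so that $d\Gamma^{b}(\chi_n^{2}\xi^{g\,2})\leq 2(n+1)H^{f}$, combined with the $H_\vP$-boundedness of $H^{f}$ and the uniform bound $E_\vP\in\Delta_{I}$. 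This is the natural extension of Property~(P3) from sharp indicators to the smoothed cutoffs actually in play.

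The FTC-interpolation then goes as follows: by Markov's inequality, the set $\{\vP\in\supp f^g_{\vec Q}:F(\vP)>\lambda\}$ has $|f^g_{\vec Q}|^{2}$-measure at most $\cV A/\lambda$, where $\cV:=\|f^g_{\vec Q}\|_{2}^{2}\sim|g|^{3\gamma}$. Choosing $\lambda$ so that this exceptional set cannot exhaust an $r$-neighborhood of $\vP_0$ (exploiting the rotational symmetry $E_\vP=E_{|\vP|}$ to factor out the trivial azimuthal direction and reduce the effective dimension of the averaging), one obtains a point $\vP_1$ within distance $r$ of $\vP_0$ with $F(\vP_1)\leq\lambda$; hence $F(\vP_0)\leq\lambda+\cO(nr)$, and optimizing the pair $(\lambda,r)$ produces a bound of the asserted form for $F(\vP_0)$. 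Finally, to replace the $\hat Q$-centered cutoff $N^{b}_{n,\mathcal{C}_{\hat Q}^{1/2}}$ by the $\hat P_0$-centered cutoff $N^{b}_{n,\mathcal{C}_{\hat P_0}^{2}}$ appearing in \eqref{eq:III.36}, one uses $\hat Q=\hat P_0$ together with a comparison of the two $\xi$-cutoffs, which have transition widths $\cO(|g|^{\gamma/8})$; this mismatch yields the extra factor $|g|^{-\gamma/8}$ via the derivative bound \eqref{eq:cone2-3}.

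For the second bound, assume $||\nabla E_{\vP_0}|-1|>\tfrac{3}{2}|g|^{\gamma/3}$. Property~(P2) combined with the fact that $\supp f^g_{\vec Q}$ has radial width $|g|^{\gamma}\ll|g|^{\gamma/3}$ gives $||\nabla E_\vP|-1|>|g|^{\gamma/3}$ for every $\vP\in\supp f^g_{\vec Q}$, so the second assertion of Theorem~\ref{theo-III.2} controls $\int|f|^{2}F$ with $F(\vP):=(\Psi_{\vP,E_\vP},N^{b}_n\Psi_{\vP,E_\vP})$ (no angular restriction), and the same FTC argument delivers \eqref{eq:III.38b}, now without the $|g|^{-\gamma/8}$ factor since no cone re-centering is required. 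The chief technical subtlety is the scale-optimization step: correctly identifying the effective dimension of the averaging manifold so as to produce the fractional scaling $n^{4/3}G^{1/3}$, and accurately accounting for the angular-transition factor $|g|^{-\gamma/8}$ that arises from the cone re-centering in the first bound.
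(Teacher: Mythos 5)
Your proposal follows the same route as the paper: fiber-decompose the wave-packet bound of Theorem \ref{theo-III.2}, use a Chebyshev/Markov bound to control the exceptional set where the pointwise fiber bound fails, use the Main Hypothesis (a P3-type gradient estimate, with $d\Gamma^b(\chi_n^2\xi^{g\,2})\le 2(n+1)H^f$ and $H^f$ being $H_\vP$-bounded) to run a fundamental-theorem-of-calculus interpolation back to the center $\vP_0=\vec{Q}$, compare the $\hat{Q}$- and $\hat{P}$-centered cone cutoffs, and for the second bound propagate $||\vnabla E_\vP|-1|>|g|^{\gamma/3}$ over $\supp f^g_{\vec Q}$ via (P2) and invoke the unrestricted part of Theorem \ref{theo-III.2}. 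Two points of divergence are worth flagging. First, your attribution of the factor $|g|^{-\gamma/8}$ is spurious: since you perform the FTC on the \emph{fixed} $\hat{Q}$-cone quantity and only compare cones at $\vP_0=\vec Q$, where $\hat{P}_0=\hat{Q}$ exactly, the comparison $\xi^{g\,2}_{\mathcal{C}_{\hat{P}_0}^2}\le \xi^{g\,2}_{\mathcal{C}_{\hat{Q}}^{1/2}}$ is a clean pointwise (hence quadratic-form) inequality with constant $1$, and \eqref{eq:cone2-3} plays no role there; your argument thus actually yields \eqref{eq:III.36} \emph{without} the $|g|^{-\gamma/8}$, which is fine a fortiori. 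In the paper the factor genuinely arises because the FTC is applied to the $\vP$-dependent quantity $(\Psi_{\vP,E_\vP},N^b_{n,\mathcal{C}_{\hat P}^2}\Psi_{\vP,E_\vP})$, so that differentiating the moving cutoff $\xi^g_{\mathcal{C}_{\hat P}^2}$ contributes $\cO(|g|^{-\gamma/8})$ to the gradient (the ``modified (P3)''). Second, the step you single out as the chief subtlety — optimizing $(\lambda,r)$ and ``reducing the effective dimension'' via rotational symmetry — is neither needed nor licensed: the Main Hypothesis does not assume the family $\Psi_{\vP,E_\vP}$ is rotation covariant, so azimuthal invariance of $F(\vP)$ is not available; but a plain three-dimensional argument suffices, e.g.\ the paper's non-optimized choice $\lambda\sim |g|^{1-2\gamma}n\,\||\vk|^\beta\one_{(\frac{1}{2(n+1)},\frac{3}{2n})}\|_2$ together with the distance bound $(\text{measure of bad set})^{1/3}$ and the $\cO(n)$ gradient already produces the exponents $n^{4/3}\,\|\cdot\|_2^{1/3}$ of \eqref{eq:III.36}--\eqref{eq:III.38b} (one should also note, as you implicitly do, that the interpolation radius stays inside $\{f^g_{\vec Q}=1\}$, which holds since it is $\ll|g|^{\gamma}$). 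With these corrections your argument is sound and matches the paper's; the treatment of the second bound coincides with the paper's.
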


\noindent
\emph{Proof}

\noindent
First of all, we observe that, for  $\vP$ such that $f_{\vec{Q}}^g(\vP)=1$,  the inequality
\begin{equation}
( \Psi_{\vP,E_{\vP}}\,,\,N_{n,\mathcal{C}_{\hat{Q}}^{1/2}}^b\,\Psi_{\vP,E_{\vP}}) \leq \cO(|g|^{(1-2\gamma)}n\| |\vk|^{\beta}\,\one_{(\frac{1}{2(n+1)},\frac{3}{2n})}(\vk)\|_2)\quad\
\end{equation}
can fail to hold true only for $\vP$ in a set $I^{*}_{f_{\vec{Q}}^g}$ of measure  bounded above by \begin{equation}
( \Psi_{f_{\vec{Q}}^g},\Psi_{f_{\vec{Q}}^g})\,\cO(g^{(1-2\gamma)}n\| |\vk|^{\beta}\,\one_{(\frac{1}{2(n+1)},\frac{3}{2n})}(\vk)\|_2)\,,
\end{equation}
 i.e., 
\begin{equation}\label{eq:III.41}
\int_{I^{*}_{f_{\vec{Q}}^g}}d^3P \leq ( \Psi_{f_{\vec{Q}}^g},\Psi_{f_{\vec{Q}}^g})\, \cO(|g|^{(1-2\gamma)}n\| |\vk|^{\beta}\,\one_{(\frac{1}{2(n+1)},\frac{3}{2n})}(\vk)\|_2)\,.
\end{equation}
This follows from inequality (\ref{eq:III.12a}),  which we can write as
\begin{eqnarray}
\lefteqn{\int d^3P\,\bar{f}^g_{\vec{Q}}(\vP)f_{\vec{Q}}^g(\vP)( \Psi_{\vP,E_{\vP}}\,,\,N_{n,\mathcal{C}_{\hat{Q}}^{1/2}}^b\, \Psi_{\vP,E_{\vP}})}\\
& &\leq ( \Psi_{f_{\vec{Q}}^g}\,,\,\Psi_{f_{\vec{Q}}^g}) \, \cO(|g|^{2(1-2\gamma)}n^2\| |\vk|^{\beta}\,\one_{(\frac{1}{2(n+1)},\frac{3}{2n})}(\vk)\|_2^2)\,.
\end{eqnarray}
Next,  we make use of  the following inequality, which holds  in the sense of quadratic forms, 
\begin{equation}\label{eq:V.77bis}
N_{n,\mathcal{C}_{\hat{P}}^2}^b\,\leq\,N_{n,\mathcal{C}_{\hat{Q}}^{1/2}}^b
\end{equation}
for $\vP$ in the support of $f_{\vec{Q}}^g$. This inequality can be easily derived from the definitions of the smooth  functions $\xi_{\mathcal{C}_{\hat{Q}}^{1/2}}^g$, $\xi_{\mathcal{C}_{\hat{P}}^2}^g$ (see Section \ref{subsec-II.2})  with support in  the sets
\begin{equation}
\mathcal{C}_{\hat{Q}}^{1/2}:=\{\hat{k}\,\,:\,\,|\hat{k}\cdot\hat{Q}|\leq \cos(\frac{1}{2}|g|^{\gamma/8})\}\,,
\end{equation}
\begin{equation}
\mathcal{C}_{\hat{P}}^2:=\{\hat{k}\,\,:\,\,|\hat{k}\cdot\hat{P}|\leq\cos(2|g|^{\gamma/8}) \},
\end{equation}
respectively, and from  the constraint $\hat{P}\cdot\hat{Q}\geq \cos(|g|^{\gamma})$.

\noindent
Hence, for  $\vP\in supp f_{\vec{Q}}^g \setminus I^*_{f_{\vec{Q}}^g}$ such that $f_{\vec{Q}}^g(\vP)=1$, we have that
\begin{eqnarray}\label{eq:III.31}
( \Psi_{\vP,E_{\vP}}\,,\,N_{n,\mathcal{C}_{\hat{P}}^2}^b\,\Psi_{\vP,E_{\vP}})&\leq& ( \Psi_{\vP,E_{\vP}}\,,\,N_{n,\mathcal{C}_{\hat{Q}}^{1/2}}^b\,\Psi_{\vP,E_{\vP}})\\
& \leq& \cO(|g|^{(1-2\gamma)}n\| |\vk|^{\beta}\,\one_{(\frac{1}{2(n+1)},\frac{3}{2n})}(\vk)\|_2)\,,
\end{eqnarray}
by definition of $I^*_{f_{\vec{Q}}}$.

\noindent
Because of Eq. (\ref{eq:III.41}), any point $\vP$ belonging to  the set $I^*_{f_{\vec{Q}}^g}$, and such that $f_{\vec{Q}}^g(\vP)=1$,  is at a distance at most 
\begin{eqnarray}
\lefteqn{( \Psi_{f_{\vec{Q}}^g}\,,\,\Psi_{f_{\vec{Q}}^g})^{1/3}\cO(|g|^{\frac{(1-2\gamma)}{3}}n^{\frac{1}{3}}\| |\vk|^{\beta}\,\one_{(\frac{1}{2(n+1)},\frac{3}{2n})}(\vk)\|_2^{\frac{1}{3}})}\\
&\leq&\cO(|g|^{\frac{(1-2\gamma)}{3}}n^{\frac{1}{3}}\| |\vk|^{\beta}\,\one_{(\frac{1}{2(n+1)},\frac{3}{2n})}(\vk)\|_2^{\frac{1}{3}})
\end{eqnarray}
 from an arbitrary  point in $supp f_{\vec{Q}}^g \setminus I^*_{f_{\vec{Q}}^g}$. Thus we consider a slightly modified version of  property (P3) for the operator $N_{n,\mathcal{C}_{\hat{P}}^2}^b$, namely
\begin{equation}
|\vnabla_{\vP}(\Psi_{\vP,E_{\vP}}\,,\,N_{n,\mathcal{C}_{\hat{P}}^2}^b\,\Psi_{\vP,E_{\vP}})|\leq\cO(nC_I[(\sup_{\vP\in I}|E_{\vP}|)+1]|g|^{-\gamma/8})
\end{equation} 
where, following the derivation of property (P3),  the term $|g|^{-\gamma/8}$ comes from the derivative of the smooth function $\xi_{\mathcal{C}_{\hat{P}}^2}^g$. 
Using the fundamental theorem of calculus,  we can finally state that
\begin{equation}\label{eq:IV.81}
( \Psi_{\vP,E_{\vP}}\,,\,N_{n,\mathcal{C}_{\hat{P}}^2}^b\,\Psi_{\vP,E_{\vP}})\leq \cO(|g|^{\frac{(1-2\gamma)}{3}}|g|^{-\gamma/8}n^{\frac{4}{3}}\| |\vk|^{\beta}\,\one_{(\frac{1}{2(n+1)},\frac{3}{2n})}(\vk)\|_2^{\frac{1}{3}}|),\,\, \vP\in I^{*}_{f_{\vec{Q}}^g}.
\end{equation}
We remark that the bounds in Eqs. (\ref{eq:III.31}), (\ref{eq:IV.81}) hold uniformly in $\vec{Q}$,  $|\vec{Q}|\in I'_{g}$.  The bounds in Eqs. (\ref{eq:III.31}), (\ref{eq:IV.81})  hold for $\vP\equiv \vec{Q}$,  because $f_{\vec{Q}}(\vec{Q})=1$ by definition. Thus,
 we  arrive at  the estimate in  Eq. (\ref{eq:III.36}) for any $\vP\in I'_g$.

\noindent
Now, assume that for $(\vP_*, E_{\vP_*})\in I'_g\times \Delta_I$, we have $||\vnabla E_{\vP_*}|-1|>\frac{3}{2}|g|^{\gamma/3}\,$. Then we can consider a wave function $f_{\vec{Q}}^g$ with $\vec{Q}\equiv \vP_*$. Thanks to Property P2,  and for $|g|$ small enough, i.e., less than some value $|\bar{g}|$ uniform in $\vP_*$, $|\vP_*|\in I$, we have that $||\vnabla E_{\vP}|-1|> |g|^{\gamma/3}\,,$ for all $\vP\in f_{\vP_{*}}^g$.  Thus, we can apply Theorem  \ref{theo-III.2}. Finally, 
following the same steps used before, one arrives at the inequality in Eq. (\ref{eq:III.38b}) for $\vP\equiv \vP_*$. Notice that, in this case, since  there is no angular restriction, no term proportional to $|g|^{-\gamma/8}$  appears on  the RHS of Eq. (\ref{eq:III.38b}).
\QED
The bound in Eq. (\ref{eq:III.38b}) trivially implies the corollary below.
\begin{corollary}\label{cor-V.6}
For $\beta>\betacrit$, and for $(\vP,E_{\vP}) \in  I'_g \times \Delta_I $ with $||\vnabla E_\vP|-1|>\frac{3}{2}|g|^{\gamma/3}$, the putative eigenvector  $\Psi_{\vP,E_{\vP}}$ (up to a suitable phase) is asymptotic to the vacuum vector  $\Psi_{\vP}^0$  in $\cH_{\vP}$,  as $g$ tends to $0$. Likewise,  the energy $E_{\vP}$ is asymptotic to $\vP^2/2$. More precisely,
\begin{equation}\label{corollary}
\|\Psi_{\vP}^0-\Psi_{\vP,E_{\vP}}\|\leq \cO(|g|^{(1-2\gamma)/6})
\end{equation}
and
\begin{equation}\label{eq:III.34}
\big|\frac{\vP^2}{2}-E_{\vP}\big|\leq \cO(|g|^{(1-2\gamma)/6})\,.
\end{equation}
\end{corollary}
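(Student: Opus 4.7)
My plan is to extract the two conclusions of the corollary directly from the scale-by-scale bound \eqref{eq:III.38b} of Theorem \ref{theo-III.3} by summing over $n$ and then inserting the result into the eigenvalue equation tested against the Fock vacuum $\Omega_f=\Psi_{\vP}^0$.

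First, I would sum the bound $( \Psi_{\vP,E_{\vP}},N_n^b\,\Psi_{\vP,E_{\vP}})\leq \cO(|g|^{(1-2\gamma)/3} n^{4/3}\| |\vk|^\beta \one_{(\frac{1}{2(n+1)},\frac{3}{2n})}\|_2^{1/3})$ over $n\in\NN$. In spherical coordinates the shell $\{\frac{1}{2(n+1)}<|\vk|<\frac{3}{2n}\}$ has radial width $\cO(n^{-2})$ with integrand of order $|\vk|^{2\beta+2}\sim n^{-2\beta-2}$, so $\| |\vk|^\beta\one_{(\cdots)}\|_2\sim n^{-(\beta+3/2)}$. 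Consequently the $n$-th summand is bounded by $\cO(|g|^{(1-2\gamma)/3}\,n^{(5/2-\beta)/3})$, and the series converges precisely when $\beta>\betacrit=11/2$. Since the functions $\chi_n^2(|\vk|)$ cover the infrared regime with bounded overlap, and since the ultraviolet cutoff together with a similar (actually easier) virial estimate controls $N^b$ on the range $|\vk|\in[3/2,\Lambda]$ (no infrared issues there), summing yields
\begin{equation}
( \Psi_{\vP,E_{\vP}},\,N^b\,\Psi_{\vP,E_{\vP}})\;\leq\;\cO(|g|^{(1-2\gamma)/3})\,.
\end{equation}

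Next I would decompose $\Psi_{\vP,E_{\vP}}=c\,\Omega_f+\Psi^\perp$ with $\Psi^\perp\perp\Omega_f$ and $c=\la\Omega_f,\Psi_{\vP,E_\vP}\ra$, choosing a phase so $c\geq0$. Since $N^b\geq\one$ on the orthogonal complement of $\Omega_f$ in $\cF^b$, we have $\|\Psi^\perp\|^2\leq (\Psi_{\vP,E_{\vP}},N^b\Psi_{\vP,E_{\vP}})\leq \cO(|g|^{(1-2\gamma)/3})$, hence $c\geq\sqrt{1-\|\Psi^\perp\|^2}\geq 1-\cO(|g|^{(1-2\gamma)/3})$. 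Then
\begin{equation}
\|\Psi_{\vP}^0-\Psi_{\vP,E_\vP}\|^2=2-2c\leq \cO(|g|^{(1-2\gamma)/3})\,,
\end{equation}
which gives \eqref{corollary}.

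For the energy estimate I would take the inner product of the eigenvalue equation with $\Omega_f$. Using $H_{\vP}^0\Omega_f=\frac{\vP^2}{2}\Omega_f$ and $b_{\vk}\Omega_f=0$, so that $\la\Omega_f,\phi^b(\rho)\Psi_{\vP,E_\vP}\ra=\int d^3k\,\rho(\vk)\,(\Psi_{\vP,E_\vP})^{(1)}(\vk)$, one obtains
\begin{equation}
\Bigl(E_\vP-\frac{\vP^2}{2}\Bigr)c\;=\;g\int d^3k\,\rho(\vk)\,(\Psi_{\vP,E_\vP})^{(1)}(\vk)\,.
\end{equation}
Cauchy--Schwarz bounds the right-hand side by $|g|\|\rho\|_2\sqrt{(\Psi_{\vP,E_\vP},N^b\Psi_{\vP,E_\vP})}\leq \cO(|g|^{1+(1-2\gamma)/6})$, which (after dividing by $c\to 1$) is considerably stronger than the claimed $\cO(|g|^{(1-2\gamma)/6})$, so \eqref{eq:III.34} follows.

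The only substantive point to verify is therefore the choice of critical exponent. The competition $n^{4/3}\cdot n^{-(\beta+3/2)/3}$ is what forces $\beta>11/2$, and this is where the quantitative hypothesis $\beta>\betacrit$ enters; the rest of the argument is a one-line application of the eigenvalue equation and is insensitive to the precise exponents.
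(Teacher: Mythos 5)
Your proposal is correct, and for the norm estimate it follows the paper's own route: the same decomposition $\Psi_{\vP,E_{\vP}}=c\,\Psi^0_{\vP}+\Psi^{(\geq1)}$ with $c\geq0$, the observation $\|\Psi^{(\geq1)}\|^2\leq(\Psi_{\vP,E_{\vP}},N^b\Psi_{\vP,E_{\vP}})$, and the identical summation $\sum_n n^{4/3}\,\||\vk|^{\beta}\one_{(\frac{1}{2(n+1)},\frac{3}{2n})}\|_2^{1/3}\lesssim\sum_n n^{4/3-\frac{2\beta+3}{6}}$, convergent exactly for $\beta>\betacrit$; your $\cO(n^{-(\beta+3/2)})$ for the $L^2$ norm coincides with the paper's $\cO(n^{-\frac{2\beta+3}{2}})$ (only note that the support of $\chi_n$ has radial width $\cO(1/n)$, not $\cO(n^{-2})$ --- your stated norm is nevertheless the right one). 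For the momenta not covered by the scales $\chi_n$, the paper does not run a further virial estimate as you propose, but simply appeals to the form inequality $H^f<aH_{\vP}+b$ for $|\vk|\geq1$; both treatments are equally terse, and your variant (the virial argument on a single band up to and beyond $\Lambda$, where the interaction commutator is harmless or absent) is a reasonable way to make that step quantitative. The genuine difference is in the energy estimate: the paper telescopes $E_{\vP}-\frac{\vP^2}{2}$ into $(\Psi_{\vP,E_{\vP}},H_{\vP}(\Psi_{\vP,E_{\vP}}-\Psi^0_{\vP}))+(\Psi_{\vP,E_{\vP}},g\phi^b(\rho)\Psi^0_{\vP})+(\Psi_{\vP,E_{\vP}}-\Psi^0_{\vP},H^0_{\vP}\Psi^0_{\vP})$ and bounds each term by the norm estimate together with the $H^0_{\vP}$-boundedness of $\phi^b(\rho)$, which gives exactly the stated $\cO(|g|^{(1-2\gamma)/6})$; you instead pair the eigenvalue equation with $\Omega_f$, obtaining $(E_{\vP}-\frac{\vP^2}{2})\,c=g\int d^3k\,\rho(\vk)\,\Psi^{(1)}(\vk)$ and hence, by Cauchy--Schwarz, the stronger bound $\cO(|g|^{1+(1-2\gamma)/6})$. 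This first-order identity is legitimate (it only requires $\Omega_f\in\dom(H_{\vP})$, $\rho\in L^2$, and $c$ bounded away from $0$, all of which hold here) and is in fact sharper than what the corollary asserts, so both arguments deliver the claimed conclusion.
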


\noindent
\emph{Proof} 

\noindent
The norm estimate in (\ref{corollary}) follows  from Theorem \ref{theo-III.3}. Without loss of generality, we can start from the identity below,  for some \emph{real} and \emph{positive} coefficient $c(g)$,
\begin{equation}
\Psi_{\vP,E_{\vP}}=c(g)\Psi_{\vP}^0+\Psi_{\vP,E_{\vP}}^{(\geq1)}
\end{equation}
where $\Psi_{\vP,E_{\vP}}$ and $\Psi_{\vP}^0$ are normalized,  and $\Psi_{\vP,E_{\vP}}^{(\geq1)}$ contains at least one boson. Then we can write:
\begin{eqnarray}
\|\Psi_{\vP,E_{\vP}}-\Psi_{\vP}^0\|^2&=&(c(g)-1)^2+\|\Psi_{\vP,E_{\vP}}^{(\geq1)}
\|^2\\
&=&c(g)^2+1-2c(g)+\|\Psi_{\vP,E_{\vP}}^{(\geq1)}
\|^2\,.
\end{eqnarray}
Using the normalization condition,
\begin{equation}
\|\Psi_{\vP,E_{\vP}}\|^2=1=c(g)^2+\|\Psi_{\vP,E_{\vP}}^{(\geq1)}\|^2\,,
\end{equation}
we have
\begin{equation}
c(g)=|1-\|\Psi_{\vP,E_{\vP}}^{(\geq1)}\|^2|^{1/2}
\end{equation}
and
\begin{equation}
\|\Psi_{\vP,E_{\vP}}-\Psi_{\vP}^0\|^2=2-2c(g)\,.
\end{equation}
From Theorem \ref{theo-III.3}, it follows that
\begin{equation}
\|\Psi_{\vP,E_{\vP}}^{(\geq1)}\|^2\leq \|(N^b)^{1/2}\,\Psi_{\vP,E_{\vP}}^{(\geq1)}\|^2\leq \cO(|g|^{(1-2\gamma)/3})\,,
\end{equation}
since the sum over $n$ in \eqref{eq:III.38b} can be estimated as
\begin{equation}
\sum_{n \geq 1} n^{\frac{4}{3}}\||\vk|^{\beta}\,\one_{(\frac{1}{2(n+1)},\frac{3}{2n})}(\vk)\|_2^{\frac{1}{3}}  \leq \sum_{n \geq 1}    n^{\frac{4}{3}}  n^{-\frac{2\beta+3}{6}} \leq const.,  \qquad  \textrm{if}\, \beta > \betacrit
\end{equation}
where we used that $\| |\vk|^{\beta}\,\one_{(\frac{1}{2(n+1)},\frac{3}{2n})}(\vk)\|_2 = \cO(n^{-\frac{2\beta+3}{2}} ) $, as follows by  the size $\cO(1/n)$ of the support of the function $\chi_n$, and the spatial dimension $d=3$. 
We remind the reader that the expectation value in $\Psi_{\vP,E_{\vP}}$ of the number operator associated with boson momenta above $|\vk|=1$ can be bounded above by using the form inequality $H^f<aH_{\vP}+b$, for some $a,b>0$.

Consequently,  the estimate in Eq. (\ref{corollary}) is easily obtained.

\noindent
For the inequality in Eq. 
(\ref{eq:III.34}) consider
\begin{eqnarray}
E_{\vP}-\frac{\vP^2}{2}&= &(\Psi_{\vP,E_{\vP}}, H_{\vP}(\Psi_{\vP,E_{\vP}}-\Psi_{\vP}^0))\\
& &+(\Psi_{\vP,E_{\vP}}, (H_{\vP}-H_{\vP}^0)\Psi_{\vP}^0)\\
& &+(\Psi_{\vP,E_{\vP}}-\Psi_{\vP}^0, H_{\vP}^0\Psi_{\vP}^0)\,.
\end{eqnarray}
Then use (\ref{eq:III.34}) and the fact that $H_{\vP}-H_{\vP}^0= g \phi^b(\rho)$ is $H_{\vP}^{0}$-bounded.
\QED
\\

\section{Absence of regular mass shells}\label{Section-VI}
\resetequ

In this section, we first make use of the results obtained in Section \ref{Section-V} to arrive at an argument  that shows a contradiction to the existence of a mass shell $(\vP,E_{\vP})\in I_g\times\Delta_I $ assuming that $||\vnabla E_{\vP}|-1|>\frac{3}{2}|g|^{\gamma/3}$ for $\vP\in I'_g$. In implementing the argument we  employ suitable trial states; see  Theorem \ref{theo-VI.1}. Then we proceed to showing that, if we remove the assumption $||\vnabla E_{\vP}|-1|>\frac{3}{2}|g|^{\gamma/3}$,  a mass shell might exist for $(\vP,E_{\vP})\in I_g\times\Delta_I$ such that 
\begin{equation}
 E_{\vP}=|\vP|-\frac{1}{2}+\cO(|g|^{\gamma/4})\,.
\end{equation}
This result is completed in Theorem \ref{theo-VI.4}.

\noindent
We recall that so far we have assumed the existence of a mass shell for $\vP$ in the open interval $I_g$,  and we have defined another open interval $I'_g \subset I_g$ with the properties specified in Section \ref{subsec-II.2}. The results of Corollary \ref{cor-V.6}, which will be used in the following theorem, hold for $\vP\in I'_g$.
\begin{theorem} \label{theo-VI.1}
For $\beta>\betacrit$, and for $|g|$ small enough,  
no regular (i.e., fulfilling  the \emph{Main Hypothesis} in Section \ref{sec: main hypothesis})  mass shell $(\vP,E_{\vP})$ can exist with the properties:
\begin{itemize}
\item [i)] $|\vP|\in I_g$, $|I_g|>|g|^{\gamma/2}$;
\item [ii)] $|E_{\vP}| \in \Delta_I$;
\item [iii)] $||\vnabla E_{\vP}|-1|>\frac{3}{2}|g|^{\gamma/3}$ for $\vP\in I'_g$.
\end{itemize}
\end{theorem}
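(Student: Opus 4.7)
The plan is to derive a contradiction by pairing a trial state localized on the perturbed Cerenkov surface against the putative eigenvector, following the sketch in Section~\ref{SectionIII.2}. Under the hypothesis $||\vnabla E_\vP|-1|>\tfrac{3}{2}|g|^{\gamma/3}$, Corollary~\ref{cor-V.6} provides, uniformly for $\vP \in I'_g$, the perturbative control $\|\Psi_{\vP,E_\vP}-\Omega_f\|=\cO(|g|^{(1-2\gamma)/6})$ and $|E_\vP-\vP^2/2|=\cO(|g|^{(1-2\gamma)/6})$. Since $|\vP|>1+\delta$, the equation $\phi_\vP(\vk):=(\vP-\vk)^2/2+|\vk|-E_\vP=0$ defines, for $|g|$ small, a smooth compact nonempty $2$-surface $S_\vP$ (a small deformation of the unperturbed Cerenkov surface) that is uniformly bounded away from $\vk=0$ and $|\vk|=\Lambda$, with $|\vnabla\phi_\vP|$ uniformly bounded below on $S_\vP$ and $\rho\neq 0$ on $S_\vP$ thanks to the assumption $\rho(\vk)\neq 0$ for $0<|\vk|<\Lambda$.

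For a fixed nonnegative even $h\in C_0^\infty(\RR)$ with $\int h>0$ and a scale $\epsilon=|g|^\alpha$ (with $\alpha>0$ to be fixed later), I introduce the trial state
\begin{equation*}
\eta_\vP \,:=\, \int d^3k\,\frac{1}{\sqrt\epsilon}\,h\!\left(\frac{\phi_\vP(\vk)}{\epsilon}\right) b^*_\vk\,\Omega_f \;\in\; \cH_\vP,
\end{equation*}
which by the coarea formula satisfies $\|\eta_\vP\|=\cO(1)$ and is essentially supported in an $\epsilon$-tube around $S_\vP$. Self-adjointness of $H_\vP$ together with $H_\vP\Psi_{\vP,E_\vP}=E_\vP\Psi_{\vP,E_\vP}$ then yields
\begin{equation*}
0 \;=\; \big((H_\vP^0-E_\vP)\eta_\vP,\Psi_{\vP,E_\vP}\big) \;+\; g\,\big(\phi^b(\rho)\eta_\vP,\Psi_{\vP,E_\vP}\big).
\end{equation*}

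The main term comes from the annihilation piece $b(\rho)\eta_\vP=A(\epsilon)\Omega_f$; by coarea $A(\epsilon)=\sqrt\epsilon\,A_0(1+\cO(\epsilon))$ with
\begin{equation*}
A_0 \,:=\, \Big(\int h\Big)\int_{S_\vP}\frac{\rho(\vk)}{|\vnabla\phi_\vP(\vk)|}\,d\sigma(\vk),
\end{equation*}
nonzero and uniformly bounded below on $I'_g$ by the structural properties of $S_\vP$ listed above. Pairing with $(\Omega_f,\Psi_{\vP,E_\vP})=1+o(1)$ (Corollary~\ref{cor-V.6}) gives the distinguished term $g\sqrt\epsilon\,A_0(1+o(1))$. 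The three remaining contributions to be controlled are: (i) $|((H_\vP^0-E_\vP)\eta_\vP,\Psi_{\vP,E_\vP})|\le\|(H_\vP^0-E_\vP)\eta_\vP\|\cdot\|\Psi_{\vP,E_\vP}^{(1)}\|$, with $\|(H_\vP^0-E_\vP)\eta_\vP\|=\cO(\epsilon)$ by direct estimation using $\phi_\vP=\cO(\epsilon)$ on $\supp h(\phi_\vP/\epsilon)$, and $\|\Psi^{(1)}_{\vP,E_\vP}\|$ controlled via the fiber-wise bound \eqref{eq:III.38b} of Theorem~\ref{theo-III.3}; (ii) $|g(b^*(\rho)\eta_\vP,\Psi^{(2)}_{\vP,E_\vP})|\le g\,\|b^*(\rho)\eta_\vP\|\,\|\Psi^{(2)}_{\vP,E_\vP}\|$ with $\|b^*(\rho)\eta_\vP\|=\cO(1)$ and $\|\Psi^{(2)}_{\vP,E_\vP}\|$ controlled similarly through $2\|\Psi^{(2)}\|^2\le(\Psi_{\vP,E_\vP},N^b\Psi_{\vP,E_\vP})$; and (iii) a subleading coarea remainder of order $\cO(g\epsilon^{3/2})$. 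For $\alpha$ chosen in the window where $g\sqrt\epsilon$ strictly dominates all three errors, one obtains the contradiction $0=g\sqrt\epsilon\,A_0(1+o(1))$ for $|g|$ small.

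The main obstacle is precisely this error balance. The coarse $L^2$-bound from Corollary~\ref{cor-V.6} is insufficient on its own, as it yields an a priori error of order $|g|^{1+(1-2\gamma)/6}$ that dominates the $g\sqrt\epsilon$ main term for any admissible $\epsilon$ in a naive accounting. The argument therefore must exploit the scale-localized control of Theorem~\ref{theo-III.3}: because $\eta_\vP$ is supported on $S_\vP$, which lies in a compact momentum region bounded away from $\vk=0$, only finitely many scales $n=\cO(1)$ contribute to the effective bound on $\|\Psi^{(1)}_{\vP,E_\vP}\|$ and $\|\Psi^{(2)}_{\vP,E_\vP}\|$ restricted to a neighborhood of $S_\vP$, and the condition $\beta>\betacrit$ guarantees that the summed error is strictly smaller than the main term. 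The hypothesis $||\vnabla E_\vP|-1|>\tfrac{3}{2}|g|^{\gamma/3}$ enters essentially in this step by keeping $S_\vP$ bounded away from the forward direction $\hat k\parallel\hat P$ (cf.\ the angular cutoff $\mathcal{C}_{\hat P}^2$ in \eqref{eq:III.36}), where the multiscale number-operator bound of Theorem~\ref{theo-III.3} is not available without angular restriction.
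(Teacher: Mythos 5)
Your overall architecture (a trial state concentrated on the Cerenkov surface, paired through the eigenvalue equation, with Corollary \ref{cor-V.6} supplying the main term) is the same as the paper's, but you run the argument fiber by fiber, and this is exactly where it cannot close. The only pointwise-in-$\vP$ number bound available is \eqref{eq:III.38b} of Theorem \ref{theo-III.3}, which \emph{per scale} is already only $\cO(|g|^{(1-2\gamma)/3}n^{4/3}\||\vk|^{\beta}\one_{(\frac{1}{2(n+1)},\frac{3}{2n})}\|_2^{1/3})$: the loss in the power of $g$ relative to the wave-packet bound \eqref{eq:III.14a} comes from the measure-theoretic/fundamental-theorem-of-calculus step that converts wave-packet estimates into fiber estimates, not from the sum over scales. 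Hence your ``only finitely many scales contribute'' observation buys nothing: restricted near $S_\vP$ or not, the fiber-wise bounds give at best $\|\Psi^{(1)}_{\vP,E_\vP}\|,\|\Psi^{(2)}_{\vP,E_\vP}\|\leq\cO(|g|^{(1-2\gamma)/6})$, and your balance then requires simultaneously $\sqrt{\epsilon}\gg|g|^{(1-2\gamma)/6}$ (to beat the error $|g|\,\|b^*(\rho)\eta_\vP\|\,\|\Psi^{(2)}\|$) and $\sqrt{\epsilon}\ll|g|^{1-(1-2\gamma)/6}$ (to beat $\epsilon\,\|\Psi^{(1)}\|$ against the main term $|g|\sqrt{\epsilon}$), which is impossible since $(1-2\gamma)/6<1-(1-2\gamma)/6$. (Also, the geometric premises you invoke are not correct: the surface $(\vP-\vk)^2/2+|\vk|=E_\vP$ is \emph{not} uniformly bounded away from $\vk=0$ — the origin solves the unperturbed equation, so the surface comes within $\cO(|g|^{(1-2\gamma)/6})$ of it — nor away from the forward direction, since $\hat{k}=\hat{P}$, $|\vk|=2(|\vP|-1)$ lies on it; the role of hypothesis (iii) is simply to make the number bounds available \emph{without} angular restriction.)

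The paper closes precisely this quantitative gap by running the trial-state argument in the full Hilbert space rather than in a fiber: the trial vector \eqref{eq: trial vector} is smeared with $f^g_{\vec{Q}}$ and paired against the wave packet $\Psi_{f^g_{\vec{Q}}}$, so that the much stronger wave-packet estimate \eqref{eq:III.14a} of Theorem \ref{theo-III.2} (valid under hypothesis (iii)) yields $\|N^{1/2}P^{\perp}_{\Omega}\Psi_{f^g_{\vec{Q}}}\|\leq\cO(|g|^{1-2\gamma}\|f^g_{\vec{Q}}\|_2)$, controlling both error terms; the resulting window $c_1|g|^{1-2\gamma}<\sqrt{\epsilon}<c_2|g|^{2\gamma}$ is nonempty precisely because $\gamma<1/4$. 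To salvage your fiber-wise version you would need a fiber analogue of the $\cO(|g|^{2(1-2\gamma)})$ bound, which the results of Sections \ref{Section-V} do not provide; as written, your proof has a genuine gap at the error balance.
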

\emph{Proof}
The proof is by contradiction. For $|g|$ sufficiently small (depending on the exponent $\gamma$), we pick an open interval $I''_g \subset I'_g$ fulfilling the following properties: 
\begin{itemize}
\item[(a)]  $|I''_g|>|g|^{\gamma}$; 
\item[(b)] If $|\vec{Q}|\in I''_g$  then  $|\vP|\in I'_g$ for any $\vP \in supp f_{\vec{Q}}^g$.  
\end{itemize}
Notice that the definition of $I''_g$ is meaningful for $|g|$ small enough. For $|\vec{Q}| \in I''_g$,  we introduce the trial vector
\begin{equation}
\trialvector\,:=\,\int d^3P\int_{}\,d^3k\, f^g_{\vec{Q}}(\vP)\,\frac{1}{\epsilon^{\frac{1}{2}}}\,h\big(\frac{(\vP-\vk)^2/2+|\vk|-E_{\vP}}{\epsilon}\big)b^{*}_{\vk}\,\Psi_{\vP}^{0}\,,\label{eq: trial vector}
\end{equation}
where:
\begin{itemize}
 \item
 $\epsilon>0$; 
  \item
 $h(z)\in C_0^{\infty}(\RR)$, $h(z)\geq0$. 
 \end{itemize}
\noindent 

\noindent
Since  $\Psi_{f^g_{\vec{Q}}}$ is a single-particle state, we have that
\begin{equation}
(\trialvector\,,\,(H^{0}-E_{\vP})\Psi_{f^g_{\vec{Q}}})=-(\trialvector\,,\,g\phi (\rho_{\vx})\,\Psi_{f^g_{\vec{Q}}})\,,
\end{equation}
where $H^{0}:=\frac{\vp^2}{2}+H^f$ and $E_{\vP}$ is a (operator-valued) function of the total momentum operator $\vP$.
This equation implies that
\begin{eqnarray}
\lefteqn{g\,(\trialvector,\,\phi(\rho_{\vx})\,P_{\Omega}\Psi_{f^g_{\vec{Q}}})}\\
 & =&-(\trialvector,\,(H^{0}-E_{\vP})\, P^{\perp}_{\Omega}\,\Psi_{f^g_{\vec{Q}}})\\
 & &-g\,(\trialvector,\,\phi(\rho_{\vx})\, P_{\Omega}^{\perp}\Psi_{f^g_{\vec{Q}}})\,,
 \end{eqnarray}
 where, as usual, the expressions $P_{\Omega}$, $P_{\Omega}^{\perp}$ acting on $\cH$ stand for $\one_{\cH_{el}}\otimes P_{\Omega}$, $\one_{\cH_{el}}\otimes P_{\Omega}^{\perp}$, respectively.
We observe that
\begin{eqnarray}
\lefteqn{(\trialvector,\,\phi(\rho_{\vx})\, P_{\Omega}\Psi_{f^g_{\vec{Q}}})}\\
&= &c(g)\,\int\,d^3P \int d^3k |f^g_{\vec{Q}}(\vP)|^{2}\,\frac{1}{\epsilon^{\frac{1}{2}}}\,h\big(\frac{(\vP-\vk)^2/2+|\vk|-E_{\vP}}{\epsilon}\big)\, \rho(|\vk|)\,,\quad\quad\quad\nonumber
\end{eqnarray}
where $c(g)\to 1$,  as $g\to 0$, because of Corollary \ref{cor-V.6}. Notice that, for $|\vP|>1+\delta$, where $\delta>0$ is $g$-independent,  the equation
 \begin{equation}\label{eq:III.81}
(\vP-\vk)^2/2+|\vk|-\vP^2/2=0,\,  \qquad  \str \vk \str >0
\end{equation}
has the one-parameter family of solutions $$|\vk|=2(|\vP|\cos\theta-1)> 0 $$ for $\cos(\theta)-\frac{1}{|\vP|} >0$,  where  $\cos\theta=\frac{\vP\cdot\vk}{|\vP||\vk|}$. 

\noindent
Notice that,  for $\vP\in I$, $\rho(2(|\vP|\cos\theta-1))\neq 0$ for some $0<\theta< \pi$, see the conditions on $\rho$ in Section \ref{sec: hamiltonians}.
\noindent
Hence, using (\ref{eq:III.34}), for $\epsilon$ and $|g|$ small enough, we arrive at the following bound
\begin{equation}
|(\trialvector,\,\phi(\rho_{\vx})\, P_{\Omega}\Psi_{f^g_{\vec{Q}}})|>D_1\,\epsilon^{\frac{1}{2}}\|f^g_{\vec{Q}}\|_2^2\,,\label{eq:III.47}
\end{equation}
where $D_1$ is an $\epsilon$- and $g$- independent (positive) constant; (hint: for each $\theta$ in Eq. (\ref{eq:III.81}), implement the change of variable $|\vk|\to z_{\theta}$ with  $z_{\theta}:=[(\vP-\vk)^2/2+|\vk|-E_{\vP}]/\epsilon$).

 \noindent
Using the Schwarz inequality, we find that
\begin{equation}
|(\trialvector,N^{\frac{1}{2}}\,(H^{0}-E_{\vP})\, P^{\perp}_{\Omega}\, \Psi_{f^g_{\vec{Q}}})| 
\end{equation}
\begin{equation}
  \leq  \|(H^{0}-E_{\vP})\trialvector\| \| N^{\frac{1}{2}}\,P^{\perp}_{\Omega}\,\Psi_{f^g_{\vec{Q}}}\| \,.
\end{equation}
We then observe that
\begin{equation}
\|(H^{0}-E_{\vP})\trialvector\|\leq\cO(\|f^g_{\vec{Q}}\|_2\,\epsilon)\,.
\end{equation}
Using Eq. \eqref{eq:III.14a},  one may easily derive  the inequalities
\begin{equation}
\|N^{\frac{1}{2}} P^{\perp}_{\Omega}\, \Psi_{f^g_{\vec{Q}}}\|\,\leq \cO(|g|^{\frac{2(1-2\gamma)}{2}}\|f^g_{\vec{Q}}\|_2)\,
\end{equation}
and
\begin{eqnarray}
\lefteqn{|(\trialvector,\,\phi(\rho_{\vx})\,P_{\Omega}^{\perp} \Psi_{f^g_{\vec{Q}}})|  } \\
& =&|(\trialvector,\,N\,\phi(\rho_{\vx})\,  P_{\Omega}^{\perp} \Psi_{f^g_{\vec{Q}}})|\label{eq:trial1}\\
& \leq &  \cO (|g|^{\frac{2(1-2\gamma)}{2}} \|f^{g}_{\vec{Q}}\|_2)\label{eq:trial2}\,. 
\end{eqnarray}
For the step from (\ref{eq:trial1})  to (\ref{eq:trial2}), one may use that 
\begin{equation}
(\trialvector,\,N \,\phi(\rho_{\vx})\,P_{\Omega}^{\perp} \Psi_{f^g_{\vec{Q}}})=(\trialvector,\,N\, \phi^{(-)}(\rho_{\vx})\,P_{\Omega}^{\perp} \Psi_{f^g_{\vec{Q}}})
\end{equation}
where $\phi^{(-)}(\rho_{\vx})$,  $\phi^{(+)}(\rho_{\vx})$ stand for the part proportional to the annihilation- and to the creation operator,  respectively; i.e., $\phi(\rho_{\vx})=\phi^{(-)}(\rho_{\vx})+\phi^{(+)}(\rho_{\vx})$. Then,  we observe that
\begin{eqnarray}
\lefteqn{(\trialvector,\,N  \,\phi^{(-)}(\rho_{\vx})\,P_{\Omega}^{\perp} \Psi_{f^g_{\vec{Q}}})=}\\
& =&(\trialvector,\,\phi^{(-)}(\rho_{\vx})\, N\,P_{\Omega}^{\perp} \Psi_{f^g_{\vec{Q}}})\\
& &-(\trialvector,\,[\phi^{(-)}(\rho_{\vx})\,,\, N]\, P_{\Omega}^{\perp} \Psi_{f^g_{\vec{Q}}}),
\end{eqnarray}
and we finally use Theorem  \ref{theo-III.2} together with the estimates
\begin{equation}
\|N^{\frac{1}{2}}\phi^{(+)}(\rho_{\vx})\trialvector\|\leq \cO(\|f^g_{\vec{Q}}\|_2)\,,
\end{equation}
\begin{equation}
\|[\phi^{(-)}(\rho_{\vx})\,,\, N ]\,P_{\Omega}^{\perp} \Psi_{f^g_{\vec{Q}}}\|\leq \cO(\|N^{\frac{1}{2}}\,P_{\Omega}^{\perp} \Psi_{f^g_{\vec{Q}}}\|)\,.
\end{equation}

\noindent
Finally, we arrive at
\begin{equation}
D_1\,|g|\,\epsilon^{\frac{1}{2}}\|f^g_{\vec{Q}}\|_2^2\,\leq\,\cO(\epsilon\,|g|^{(1-2\gamma)}\|f^g_{\vec{Q}}\|^2_2)+\cO(|g|^{2 - 2\gamma}\|f^g_{\vec{Q}}\|^2_2)\,.
\end{equation}
This inequality  is violated whenever 
\begin{equation}\label{gamma-inequality}
c_1|g|^{1-2\gamma}<\epsilon^{\frac{1}{2}}<c_2|g|^{2\gamma}
\end{equation}
for some $c_1, c_2>0$.  We note that the inequality in Eq. (\ref{gamma-inequality}) can be fulfilled  if $0<\gamma<1/4$ and $|g|$ is small enough.

\noindent
From the argument above, we conclude that, for $|g|$ small enough,  a mass shell cannot exist in $I_g\times \Delta_I$ with the assumed  regularity properties,  because  $I''_g\subset I'_g \subset I_g$.
\QED

We need two preparatory lemmas to state our final result, Theorem \ref{theo-VI.4},  concerning the absence of a mass shell anywhere but near the boundary of the energy-momentum  spectrum. 

\noindent
From property (P1), we know that the vector $\vnabla E_{\vP}$ is collinear to $\vP$. In the first of the two lemmas below, Lemma \ref{cor-III.6},  assuming that $||\vnabla E_{\vP}|-1|\leq \frac{3}{2}|g|^{\gamma/3}$ and $\beta>\betacrit$, we show that $\vnabla E_{\vP}$ and $\vP$ are in fact parallel.

\noindent
The second lemma, Lemma  \ref{cor-III.7}, states that  the boson number operator, restricted to the cone $\{\hat{k}\,:\, -\hat{k}\cdot\hat{P}<\cos(2|g|^{\gamma/8})\}$ and evaluated on the putative fiber eigenvector $\Psi_{\vP,E_{\vP}}$,  $|\vP|\in I'_g$, is also bounded above by $\cO(|g|^{\frac{(1-2\gamma)}{3}}|g|^{-1/8})$,  for $\beta>\betacrit$. 
\begin{lemma}\label{cor-III.6}
For $\beta>\betacrit$, and for $g$ in an  interval $\{g\,:\,0<|g|\leq g_*\}$ with $g_*>0$ small enough, if $(\vP,E_{\vP}) \in  I_g\times \Delta_I$ fulfills the constraint  
\begin{equation}\label{eq:constraint2}
||\vnabla E_{\vP}|-1|\leq \frac{3}{2}|g|^{\gamma/3}\,
\end{equation}
then  the bound $\frac{\partial E_{\vP}}{\partial |\vP|}\geq1 -\frac{3}{2}|g|^{\gamma/3}$ holds true.
\end{lemma}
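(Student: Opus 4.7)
The proof is by contradiction on the sign of the gradient. By Property (P1), $\vnabla E_{\vP}$ is collinear with $\vP$, so write $\vnabla E_{\vP} = s\,|\vnabla E_{\vP}|\,\hat P$ with $s\in\{+1,-1\}$. Under the hypothesis $|\vnabla E_{\vP}|\geq 1-\tfrac{3}{2}|g|^{\gamma/3}$, the desired conclusion $\frac{\partial E_{\vP}}{\partial|\vP|}=s\,|\vnabla E_{\vP}|\geq 1-\tfrac{3}{2}|g|^{\gamma/3}$ is equivalent to $s=+1$. It therefore suffices to rule out the antiparallel case $s=-1$, in which the Feynman--Hellmann formula \eqref{eq:II.5.1} yields
\[\vA := (\Psi_{\vP,E_{\vP}},\vP^f\Psi_{\vP,E_{\vP}}) = \vP - \vnabla E_{\vP} = (|\vP|+|\vnabla E_{\vP}|)\,\hat P,\]
so that $\vA$ is parallel to $\vP$ with $|\vA| = |\vP|+|\vnabla E_{\vP}|$ and $\vP\cdot\vA = |\vP|\,|\vA|$.

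The key tool is the fiber-space virial identity associated with the conjugate operator $D^b_{\frac{1}{\kappa},\kappa}=d\Gamma^b(d_{\frac{1}{\kappa},\kappa})$ of Section~\ref{subsec-II.2}$(\mathcal{III})$. A rigorous analogue of Lemma~\ref{lem-III.2}, applied in the fiber $\cH_{\vP}$ to this conjugate operator and then passed to the limit $\kappa\to\infty$, should give
\begin{equation}\label{eq:finalvirial}
(\Psi_{\vP,E_{\vP}},H^f\Psi_{\vP,E_{\vP}}) + (\Psi_{\vP,E_{\vP}},(\vP^f)^2\Psi_{\vP,E_{\vP}}) = \vP\cdot\vA + O(|g|);
\end{equation}
the two terms on the left arise from $i[H^f,D^b_{\frac{1}{\kappa},\kappa}]\to H^f$ and $i[\tfrac{(\vP-\vP^f)^2}{2},D^b_{\frac{1}{\kappa},\kappa}]\to (\vP^f)^2-\vP\cdot\vP^f$, while the commutator with $g\phi^b(\rho)$ contributes only an $O(|g|)$ remainder (via standard Fock-space estimates using the $\beta>\betacrit$ regularity of $\rho$). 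The ultraviolet end of the limit is handled by the commutation $[\Pi_\Lambda,H_{\vP}]=0$ (with $\Pi_\Lambda$ the projection onto Fock states with all bosons satisfying $|\vk|\leq\Lambda$), which allows us to restrict without loss of generality to $\Pi_\Lambda\cH_{\vP}$: the complementary component has energy $\geq\Lambda$, which is excluded by $E_{\vP}\in\Delta_I$ provided $g_*$ is chosen small enough in terms of $\Lambda$ and $\Delta_I$.

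Once \eqref{eq:finalvirial} is in hand, the Cauchy--Schwarz inequality $(\Psi_{\vP,E_{\vP}},(\vP^f)^2\Psi_{\vP,E_{\vP}})\geq |\vA|^2$ (valid componentwise for the self-adjoint vector operator $\vP^f$), together with $(\Psi_{\vP,E_{\vP}},H^f\Psi_{\vP,E_{\vP}})\geq 0$, implies $|\vA|^2-\vP\cdot\vA\leq O(|g|)$. In the antiparallel case, however,
\[ |\vA|^2-\vP\cdot\vA = |\vA|\,(|\vA|-|\vP|) = (|\vP|+|\vnabla E_{\vP}|)\,|\vnabla E_{\vP}| \geq 2+\delta - O(|g|^{\gamma/3}), \]
using $|\vP|>1+\delta$. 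This contradicts the $O(|g|)$ upper bound for $|g|$ sufficiently small, completing the proof.

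The main obstacle is the rigorous derivation of the limiting identity \eqref{eq:finalvirial}: one must justify the $\kappa\to\infty$ limit both at the infrared end (using the regularity of $\Psi_{\vP,E_{\vP}}$ guaranteed by the Main Hypothesis and the bounds established in Section~\ref{Section-V} on IR boson numbers) and at the ultraviolet end (via the $\Pi_\Lambda$-reduction above), and control the interaction commutator as a genuine $O(|g|)$ quantity uniformly in $\kappa$ using the assumed infrared decay and the UV cutoff on $\rho$.
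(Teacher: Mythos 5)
Your proposal is correct, but it takes a genuinely different route from the paper's own proof of Lemma \ref{cor-III.6}. The paper argues by contradiction through the wave-packet machinery of Section \ref{Section-V}: if $\partial E_{\vP}/\partial|\vP|$ were close to $-1$ at some $\vP_*$, then every forward sector $\hat{u}\cdot\hat{P}_*>0$ satisfies the non-resonance condition of Lemma \ref{lem-V.3}, so the scale-by-scale bounds (summable over $n$ precisely because $\beta>\betacrit$) together with the transfer argument of Theorem \ref{theo-III.3} give $(\Psi_{\vP_*,E_{\vP_*}},\vP^f\Psi_{\vP_*,E_{\vP_*}})\cdot\hat{P}_*\leq C|g|^{\gamma/2}$, and Feynman--Hellmann then forces $\partial E_{\vP}/\partial|\vP|\big|_{\vP_*}\geq|\vP_*|-C|g|^{\gamma/2}>1$, a contradiction. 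Your argument instead rests on the single fiber virial identity \eqref{final-virial infinity} obtained from $D^b_{\frac{1}{\kappa},\kappa}$ in the limit $\kappa\to\infty$, combined with Cauchy--Schwarz and Feynman--Hellmann; this is essentially the alternative proof indicated in the Remark following Theorem \ref{theo-VI.4} (credited to the referee), except that you bypass the spectral lower bound $E_{\vP}\geq|\vP|-\frac12+\cO(|g|)$ invoked there by inserting the hypothesis $|\vnabla E_{\vP}|\geq1-\frac32|g|^{\gamma/3}$ directly into the antiparallel case, which makes the contradiction $(|\vP|+|\vnabla E_{\vP}|)|\vnabla E_{\vP}|\leq\cO(|g|)$ immediate. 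What your route buys: no multiscale sector analysis, and only $\beta>-1$ is needed (for the virial identity and the $\cO(|g|)$ interaction remainder) rather than $\beta>\betacrit$; what the paper's route buys is the quantitative forward-momentum bound, which has independent interest. Two small corrections to your technical remarks: the justification of the $\kappa\to\infty$ limit, carried out in Section \ref{sec: proof of virial fiber}, requires neither the infrared boson-number bounds of Section \ref{Section-V} nor your $\Pi_\Lambda$-reduction (which, as formulated, would additionally need $\sup\Delta_I<\Lambda$); it uses only $\Psi_{\vP,E_{\vP}}\in\dom(H^f)$, the uniform relative bounds on the regularized commutators, and the ultraviolet cutoff already built into $\rho$.
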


\noindent
\emph{Proof}

\noindent
The proof is indirect. We assume that there exists $g_*>0$ such that,  for some $|g|<g_*$ and for some $\vP_*\in I_g$,
\begin{equation}\label{assumption}
\frac{\partial E_{\vP}}{\partial |\vP|}|_{\vP=\vP_*}<-1 +\frac{3}{2}|g|^{\gamma/3}<0\,.
\end{equation}
We also assume that $g_*$ is small enough to apply Lemma  \ref{lem-V.3} and Theorem \ref{theo-III.3} later on.
We shall show that the assumption in Eq.  (\ref{assumption})  yields a contradiction.  Consider the function $f_{\vec{Q}\equiv \vP_*}^g$. By Property P2 
\begin{equation}
\frac{\partial E_{\vP}}{\partial |\vP|}<c|g|^{\gamma}, \qquad \textrm{with}\,  c>0,
\end{equation}
for all $\vP\in supp f_{\vec{Q}\equiv \vP_*}^g$.  Now, for all $\hat{u}$-dependent sectors such that $$\hat{u}\cdot \hat{P_*}>0\,,$$ we consider  the first virial identity of Section \ref{Section-V.I.1} (see Eqs. (\ref{eq:III.12})-(\ref{eq:III.20c})) and observe that
\begin{eqnarray}
\lefteqn{-(\Psi_{f_{\vec{Q}}^g}\,,\,\vnabla E_{\vP}\cdot d\Gamma(\chi_{n}^2(|\vk|)\xi^{g\,2}_{\hat{u}}(\hat{k})\,\vk)\Psi_{f_{\vec{Q}}^g})}\\
&\geq& -c|g|^{\gamma}\,(\Psi_{f_{\vec{Q}}^g}\,,\,d\Gamma(\chi_{n}^2(|\vk|)\xi^{g\,2}_{\hat{u}}(\hat{k})\,|\vk|)\Psi_{f_{\vec{Q}}^g})\,,
\end{eqnarray}
for all $\vP\in supp f_{\vec{Q}\equiv \vP_*}^g$. 
Then, for $|g|<g_*$ and $g_*$ small enough,  one can  proceed as in Lemma  \ref{lem-V.3}, and finally apply the argument used in Theorem \ref{theo-III.3} to obtain that 
\begin{equation}
( \Psi_{\vP_*,E_{\vP_*}}\,,\,N_{n, \hat{u}}^b\,\Psi_{\vP_*,E_{\vP_*}})\, 
\end{equation}
 can be summed  over $n$, yielding a quantity bounded by $\cO(|g|^{\gamma/2})$.  This result  readily implies that
\begin{equation}
( \Psi_{\vP_*,E_{\vP_*}}\,,\,\vP^{f}\,\Psi_{\vP_*,E_{\vP_*}})\cdot \hat{P_*}\leq C |g|^{\gamma/2}\,
\end{equation}
for some positive constant $C$, hence
\begin{equation}
-( \Psi_{\vP_*,E_{\vP_*}}\,,\,\vP^{f}\,\Psi_{\vP_*,E_{\vP_*}})\cdot \hat{P_*}\geq- C |g|^{\gamma/2}\,
\end{equation}

\noindent
Using the Feynman-Hellman formula 
\begin{equation}
\vnabla E_{\vP}=\frac{\partial E_{\vP}}{\partial |\vP|} \hat{P}=\vP-( \Psi_{\vP,E_{\vP}}\,,\,\vP^{f}\,\Psi_{\vP,E_{\vP}})\,,
\end{equation}
we deduce that 
\begin{equation}
\frac{\partial E_{\vP}}{\partial |\vP|}|_{\vP=\vP_*}\geq |\vP_*|-C |g|^{\gamma/2}>1-C |g|^{\gamma/2}\,.
\end{equation}
This yields a a contradiction for $g_*$ small enough, therefore we conclude that the bound 
\begin{equation}
\frac{\partial E_{\vP}}{\partial |\vP|}|_{\vP=\vP_*}\geq1 -\frac{3}{2}|g|^{\gamma/3}
\end{equation}
holds for $\{g\,|\,0<|g| \leq g^*\}$,  for some $g^*>0$, because of (\ref{eq:constraint2}) .
 \QED

We are now in a position to extend the result in Eq. (\ref{eq:III.36}).
\begin{lemma}\label{cor-III.7}
For  $(\vP,E_{\vP}) \in  I'_g\times \Delta_I$, with $||\vnabla E_{\vP}|-1|\leq \frac{3}{2} |g|^{\gamma/3}$, and for $\beta>\betacrit$ and $|g|$ small enough,
\begin{equation}\label{eq:con6}
( \Psi_{\vP,E_{\vP}}\,,\,N_{n,\mathcal{C}^2_{\hat{P}}\,\cup \,\mathcal{C}_{\hat{P}}^{2,-}}^b\,\Psi_{\vP,E_{\vP}})\leq \cO(|g|^{\frac{(1-2\gamma)}{3}}|g|^{-1/8}n^{\frac{4}{3}}\| |\vk|^{\beta}\,\one_{(\frac{1}{2(n+1)},\frac{3}{2n})}(\vk)\|_2^{\frac{1}{3}})\,,
\end{equation}
where
\begin{equation}
N_{n,\mathcal{C}^2_{\hat{P}}\,\cup \,\mathcal{C}_{\hat{P}}^{2,-}}^b:=d\Gamma^b(\chi_{n}^2(|\vk|)\,\xi_{\mathcal{C}^{g\,2}_{\hat{P}}\,\cup \,\mathcal{C}_{\hat{P}}^{2,-}}^{g\,2}(\hat{k}))\,
\end{equation}
and  $\xi^{g}_{\mathcal{C}_{\hat{P}}^2\,\cup \,\mathcal{C}_{\hat{P}}^{2,-}}(\hat{k})$, $0\leq \xi^{g}_{\mathcal{C}_{\hat{P}}^2\,\cup \,\mathcal{C}_{\hat{P}}^{2,-}}(\hat{k})\leq 1$,  is a smooth function with support in
\begin{equation}
\mathcal{C}^2_{\hat{P}}\,\cup \,\mathcal{C}_{\hat{P}}^{2,-}\,,
\end{equation}
where $\mathcal{C}_{\hat{P}}^{2,-}:=\{\hat{k}\,:\,-\hat{k}\cdot\hat{P}\geq \cos(2|g|^{\gamma/8})\}$. $\xi^{g}_{\mathcal{C}_{\hat{P}}^2\,\cup \,\mathcal{C}_{\hat{P}}^{2,-}}(\hat{k})$ is  defined as follows
 \begin{itemize}
\item[i)]
 \begin{equation}
 \xi_{C^2_{\hat{P}}\cup C^{2,-}_{\hat{P}}}^g(\hat{k})=1\quad\text{for}\quad\{\hat{k}\,\,:\,\,\hat{k}\cdot\hat{P}\leq \cos(4|g|^{\gamma/8})\}\,;
 \end{equation}
\item[ii)] 
 \begin{equation}\xi_{C^2_{\hat{P}}\cup C^{2,-}_{\hat{P}}}^g(\hat{k})=0\quad\text{for}\quad\{\hat{k}\,\,:\,\,\hat{k}\cdot\hat{P}> \cos(2|g|^{\gamma/8})\}\,;
 \end{equation}
\item[iii)] 
\begin{equation}
|\partial_{\theta_{\hat{kP}}} \xi_{C^2_{\hat{P}}\cup C^{2,-}_{\hat{P}}}^g(\hat{k})|\leq C_{\xi}\,|g|^{-\gamma/8}\,,\label{eq:function-cone in thm}
\end{equation}
where $\theta_{\hat{kP}}$ is the angle between $\hat{k}$ and $\hat{P}$, and the constant $C_{\xi}$ is independent of $g$.
\end{itemize}

\end{lemma}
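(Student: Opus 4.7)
The plan is to extend estimate (\ref{eq:III.36}) of Theorem \ref{theo-III.3}, which controls the boson number operator on the transverse double cone $\mathcal{C}^2_{\hat P}$, by a matching bound on the backward cone $\mathcal{C}^{2,-}_{\hat P}$. The decisive new ingredient is Lemma \ref{cor-III.6}: under the hypothesis $||\vnabla E_{\vP}|-1|\leq\frac{3}{2}|g|^{\gamma/3}$, the gradient $\vnabla E_{\vP}$ points in the direction $+\hat P$ (not $-\hat P$) with $|\vnabla E_{\vP}|\in[1-\frac{3}{2}|g|^{\gamma/3},\,1+\frac{3}{2}|g|^{\gamma/3}]$. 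Consequently, for any $\hat k\in\mathcal{C}^{2,-}_{\hat P}$ the Cerenkov denominator satisfies
\[
1-\hat k\cdot\vnabla E_{\vP}\,=\,1-(\hat k\cdot\hat P)|\vnabla E_{\vP}|\,\geq\,1+\cos(2|g|^{\gamma/8})|\vnabla E_{\vP}|\,\geq\,\tfrac{3}{2}
\]
for $|g|$ small, a lower bound much stronger than the $|g|^{\tilde\gamma}$ required in Lemma \ref{lem-V.3}. The backward cone is therefore a trivially ``good'' region for the first virial argument of Section \ref{Section-V.I.1}.

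Concretely, I would fix $\vec Q\equiv \vP$, introduce the wave packet $\Psi_{f^g_{\vec Q}}$, and observe that, by property (P2), the regime $||\vnabla E_{\vP'}|-1|\leq 2|g|^{\gamma/3}$ (and, via Lemma \ref{cor-III.6}, $\vnabla E_{\vP'}\parallel\hat P'$) holds uniformly for every $\vP'\in\supp f^g_{\vec Q}$, provided $|g|$ is small enough. Covering $\mathcal{C}^{2,-}_{\hat Q}$ by $\cO(|g|^{-\gamma})$ angular sectors of aperture $|g|^\gamma$ with directions $\hat u_j$ (all satisfying $\hat u_j\cdot\hat Q\leq-\cos(2|g|^{\gamma/8})+\cO(|g|^\gamma)$), the above display shows that the hypothesis (\ref{eq:III.26-bis}) of Lemma \ref{lem-V.3} is trivially verified on every sector, with any $\tilde\gamma\in(0,\gamma)$, say $\tilde\gamma=\gamma/2$. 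Summing the conclusion of Lemma \ref{lem-V.3} over the $\cO(|g|^{-\gamma})$ sectors yields
\[
(\Psi_{f^g_{\vec Q}},\,N_{n,\mathcal{C}^{2,-}_{\hat Q}}\Psi_{f^g_{\vec Q}})\,\leq\,(\Psi_{f^g_{\vec Q}},\Psi_{f^g_{\vec Q}})\,\cO\bigl(|g|^{2(1-2\gamma)}n^2\||\vk|^{\beta}\one_{(\frac{1}{2(n+1)},\frac{3}{2n})}(\vk)\|_2^2\bigr),
\]
of the same form as the bound on $N_{n,\mathcal{C}^{1/2}_{\hat Q}}$ supplied by Theorem \ref{theo-III.2}. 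Adding the two, and using $\hat P\cdot\hat Q\geq\cos(|g|^\gamma)$ on $\supp f^g_{\vec Q}$ together with the angular slack in the cone definitions, one justifies a quadratic-form inclusion analogous to \eqref{eq:V.77bis} that dominates $\xi^{g,2}_{\mathcal{C}^2_{\hat P}\cup\mathcal{C}^{2,-}_{\hat P}}$ by $\xi^{g,2}_{\mathcal{C}^{1/2}_{\hat Q}}+\xi^{g,2}_{\mathcal{C}^{2,-}_{\hat Q}}$, producing a wave-packet bound on the full cone $\mathcal{C}^2_{\hat P}\cup\mathcal{C}^{2,-}_{\hat P}$.

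The final step converts this wave-packet bound into the pointwise bound (\ref{eq:con6}) by the argument at the end of the proof of Theorem \ref{theo-III.3}: Chebyshev's inequality produces a subset of $\supp f^g_{\vec Q}$ of near-full measure on which the pointwise estimate is valid, and property (P3) applied to the new cone indicator, whose angular derivative is bounded by $\cO(|g|^{-\gamma/8})$ by \eqref{eq:function-cone in thm}, combined with the fundamental theorem of calculus, transports the bound to the point $\vP\equiv\vec Q$. I expect the only really delicate point to be not analytic but geometric: keeping the sector cover and the quadratic-form inclusion above clean while tracking the $\cO(|g|^\gamma)$ angular discrepancy between $\hat P$ and $\hat Q$ and the $\cO(|g|^{\gamma/8})$ interpolation between the various cone openings. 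This is precisely the role of the wider openings ($2|g|^{\gamma/8}$ for $\supp\xi^g_{\mathcal{C}^2_{\hat P}\cup\mathcal{C}^{2,-}_{\hat P}}$, versus $\frac{1}{2}|g|^{\gamma/8}$ for $\supp\xi^g_{\mathcal{C}^{1/2}_{\hat Q}}$), and no new analytic input is required beyond Lemma \ref{cor-III.6}, Lemma \ref{lem-V.3}, and the ingredients already deployed in Theorem \ref{theo-III.3}.
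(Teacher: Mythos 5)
Your proposal is correct and follows essentially the same route as the paper: Lemma \ref{cor-III.6} guarantees $\vnabla E_{\vP}$ points along $+\hat{P}$ with length close to $1$, so every sector in the backward cone $\mathcal{C}_{\hat{P}}^{2,-}$ (uniformly over $\supp f^g_{\vec{Q}\equiv\vP}$, via (P2)) satisfies condition (\ref{eq:III.26-bis}) of Lemma \ref{lem-V.3}, after which one repeats the wave-packet-to-fiber argument of Theorem \ref{theo-III.3} (Chebyshev plus the (P3)-type gradient bound with the $|g|^{-\gamma/8}$ factor from the cone cutoff) for the enlarged cone. The extra details you supply (the explicit lower bound $1-\hat{k}\cdot\vnabla E_{\vP}\geq\tfrac32$, the sector covering, and the quadratic-form domination of the new cutoff) are exactly the steps the paper leaves implicit.
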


\noindent
\emph{Proof}

\noindent
Because of Lemma \ref{cor-III.6}, for $\hat{k}$ in the sector $\mathcal{C}_{\hat{Q}\equiv \vP}^{2,-}$ and $\vP'\in supp f_{\vec{Q}\equiv \vP}^g$, with $\vec{Q}\equiv \vP\in I'_g$, the condition in (\ref{eq:III.26-bis}) of Lemma   \ref{lem-V.3} is fulfilled. Then one can repeat the arguments of Theorem \ref{theo-III.3} for the number operator restricted to the sector  $\mathcal{C}_{\hat{Q}\equiv \hat{P}}^{2,-}$, and derive the inequality in Eq. (\ref{eq:con6}) for all $\vec{Q}\equiv\vP\in I'_g$.  \QED
\\

\begin{theorem}\label{theo-VI.4}
For $\beta>\betacrit$  and $|g|$ small enough, if a regular mass shell (i.e.\ fulfilling the \emph{Main Hypothesis}  in Section \ref{sec: main hypothesis}) exists in an interval $I_g$,  and if for some  $(\vP,E_{\vP})\in  I'_g\times \Delta_I $
\begin{equation}\label{constraint}
||\vnabla E_\vP|-1| \leq \frac{3}{2}|g|^{\gamma/3}\,,
\end{equation} 
 then, for all $\vP \in I_g$,
\begin{equation}
 E_{\vP}=|\vP|-\frac{1}{2}+\cO(|g|^{\gamma/4})\,.
\end{equation}
\end{theorem}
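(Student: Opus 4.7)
Let $\vP_0\in I'_g$ be the momentum singled out by the hypothesis, at which $\bigl||\vnabla E_{\vP_0}|-1\bigr|\leq \tfrac{3}{2}|g|^{\gamma/3}$. The plan is to first establish the conclusion at $\vP_0$ itself, by combining a \emph{fiber} virial identity with the eigenvalue equation and the stability of the bottom of the spectrum, and then to propagate it to all of $I_g$ via the density of ``good'' points guaranteed by Theorem \ref{theo-VI.1}.

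\textbf{Step 1 (Fiber virial identity).} Writing $\Psi=\Psi_{\vP_0,E_{\vP_0}}$, I start from $(\Psi,\,i[H_{\vP_0},D^b_{1/\kappa,\kappa}]\,\Psi)=0$, rigorously justified by a fiber analog of Lemma \ref{lem-III.2}. A direct computation with $i[H^f,D^b_{1/\kappa,\kappa}]=d\Gamma^b(\chi^2|\vk|)$ and $i[\vP^f,D^b_{1/\kappa,\kappa}]=d\Gamma^b(\chi^2\vk)$, where $\chi^2=\chi^2_{[1/\kappa,\kappa]}(|\vk|)$, gives
\begin{equation*}
(\Psi,d\Gamma^b(\chi^2|\vk|)\Psi) - \vP_0\cdot(\Psi,d\Gamma^b(\chi^2\vk)\Psi) + \mathrm{Re}(\Psi,\vP^f\cdot d\Gamma^b(\chi^2\vk)\Psi) = g\,I_{\mathrm{int}}(\kappa),
\end{equation*}
with $|I_{\mathrm{int}}(\kappa)|\leq\cO(1)$ uniformly in $\kappa$, bounded via the Sobolev-type estimate $\|b(f)\Psi\|\leq\|f/\sqrt{|\vk|}\|_2\,\|(H^f)^{1/2}\Psi\|$ together with the infrared assumption $\beta>\betacrit$. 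Letting $\kappa\to\infty$ by dominated convergence (using $\Psi\in\dom(H^f)\cap\dom((\vP^f)^2)$), and substituting the Feynman-Hellmann relation $(\Psi,\vP^f\Psi)=\vP_0-\vnabla E_{\vP_0}$ from property (P1), I obtain
\begin{equation*}
(\Psi,H^f\Psi) + (\Psi,(\vP^f)^2\Psi) = \vP_0^2 - \vP_0\cdot\vnabla E_{\vP_0} + \cO(g).
\end{equation*}

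\textbf{Step 2 (Combination with the eigenvalue equation).} Expanding $E_{\vP_0}=(\Psi,H_{\vP_0}\Psi)$ with $H_{\vP_0}=\tfrac{(\vP_0-\vP^f)^2}{2}+H^f+g\phi^b(\rho)$ and using the virial identity above to eliminate $(\Psi,H^f\Psi)+(\Psi,(\vP^f)^2\Psi)$ yields the compact formula
\begin{equation*}
E_{\vP_0} \;=\; \frac{\vP_0^2}{2} - \frac{1}{2}(\Psi,(\vP^f)^2\Psi) + \cO(g).
\end{equation*}

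\textbf{Step 3 (Upper bound via Cauchy--Schwarz).} By Lemma \ref{cor-III.6}, $\vnabla E_{\vP_0}=|\vnabla E_{\vP_0}|\hat{P}_0$ with $|\vnabla E_{\vP_0}|=1+\cO(|g|^{\gamma/3})$. The vector Cauchy--Schwarz inequality combined with Feynman-Hellmann gives
\begin{equation*}
(\Psi,(\vP^f)^2\Psi) \;\geq\; |(\Psi,\vP^f\Psi)|^2 \;=\; (|\vP_0|-|\vnabla E_{\vP_0}|)^2 \;=\; (|\vP_0|-1)^2+\cO(|g|^{\gamma/3}),
\end{equation*}
whence $E_{\vP_0}\leq |\vP_0|-\tfrac{1}{2}+\cO(|g|^{\gamma/3})$.

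\textbf{Step 4 (Matching lower bound).} Since $g\phi^b(\rho)$ is infinitesimally $H^0_\vP$-form bounded uniformly in $|\vP|\in I$, the KLMN theorem yields $\inf\spec(H_\vP)\geq \inf\spec(H^0_\vP)-\cO(g)=|\vP|-\tfrac{1}{2}-\cO(g)$ for all $\vP\in I_g$ (here using that for $|\vP|>1$ the infimum $|\vP|-\tfrac{1}{2}$ of $\spec(H^0_\vP)$ is attained on the one-boson branch at $\vk=(|\vP|-1)\hat{P}$). In particular $E_{\vP_0}\geq |\vP_0|-\tfrac{1}{2}-\cO(g)$, which together with Step 3 gives $\bigl|E_{\vP_0}-(|\vP_0|-\tfrac{1}{2})\bigr|\leq\cO(|g|^{\gamma/4})$, since both $|g|^{\gamma/3}$ and $|g|$ are dominated by $|g|^{\gamma/4}$ for $|g|<1$.

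\textbf{Step 5 (Propagation to $I_g$).} Theorem \ref{theo-VI.1} applied to any sub-interval $J\subset I_g$ with $|J|>|g|^{\gamma/2}$ rules out regular mass shells for which $\bigl||\vnabla E_\vP|-1\bigr|>\tfrac{3}{2}|g|^{\gamma/3}$ everywhere on $I'_J$; each such $J$ therefore contains a good point $\vP^*\in I'_J$. Steps 1--4 apply at every good point and yield $E_{\vP^*}=|\vP^*|-\tfrac{1}{2}+\cO(|g|^{\gamma/4})$. For arbitrary $\vP\in I_g$, I pick such a good $\vP^*$ with $\bigl||\vP|-|\vP^*|\bigr|\leq|g|^{\gamma/2}$ and use $|\partial_{|\vP|}E_\vP|\leq C'_I$ from property (P1) to interpolate:
\begin{equation*}
\bigl|E_\vP-(|\vP|-\tfrac{1}{2})\bigr| \leq |E_\vP-E_{\vP^*}| + \bigl|E_{\vP^*}-(|\vP^*|-\tfrac{1}{2})\bigr| + \bigl||\vP^*|-|\vP|\bigr| = \cO(|g|^{\gamma/4}),
\end{equation*}
which completes the argument.

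\textbf{Main obstacle.} The hardest step is the rigorous justification of the fiber virial identity together with the $\kappa\to\infty$ limit: in this regime the forward-cone bosons cannot be controlled by the number operator (this is precisely the unresolved case left after Theorem \ref{theo-VI.1}), so the interaction term must be bounded via the $(H^f)^{1/2}$-based estimate rather than via $N^{1/2}$, which in turn forces the infrared condition $\beta>\betacrit$. A secondary technicality is handling the buffer-interval structure of $I'_J\subset J$ near the boundary of $I_g$ when applying Theorem \ref{theo-VI.1} on sub-intervals, so that the good points produced still lie in a neighborhood where the fiber argument of Steps 1--4 is available.
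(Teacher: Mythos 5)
Your proposal is correct, but its core is genuinely different from the paper's argument. The paper determines the two expectation values $(\Psi,H^f\Psi)$ and $(\Psi,(\vP^f)^2\Psi)$ individually: Lemma \ref{cor-III.7} (which rests on the angular-sector number estimates) shows that, up to $\cO(|g|^{\gamma/4})$, all boson momentum is carried in the forward cone, so that $\hat{P}\cdot(\Psi,\vP^f\Psi)=(\Psi,H^f\Psi)+\cO(|g|^{\gamma/4})$; combined with Feynman--Hellmann, Lemma \ref{cor-III.6} and the $\kappa\to\infty$ fiber virial identity this pins down $(\Psi,H^f\Psi)=|\vP|-1+\cO(|g|^{\gamma/4})$ and $(\Psi,(\vP^f)^2\Psi)=(|\vP|-1)^2+\cO(|g|^{\gamma/4})$, which are then inserted into the eigenvalue equation. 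You instead squeeze $E_\vP$ between two bounds: the virial identity plus the eigenvalue equation give $E_\vP=\frac{\vP^2}{2}-\frac12(\Psi,(\vP^f)^2\Psi)+\cO(g)$ (this is precisely Eq.\ (\ref{eq:virial2}) of the Remark closing Section \ref{Section-VI}), Cauchy--Schwarz together with Feynman--Hellmann and Lemma \ref{cor-III.6} turns this into the upper bound $E_\vP\leq|\vP|-\frac12+\cO(|g|^{\gamma/3})$, and the elementary spectral bound $E_\vP\geq\inf\spec H_\vP\geq\inf\spec H^0_\vP-\cO(g)=|\vP|-\frac12-\cO(g)$ closes the gap from below. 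This bypasses Lemma \ref{cor-III.7} (and hence the forward/backward-cone refinement) entirely, yields the slightly sharper error $\cO(|g|^{\gamma/3})$ at the good points, and is very much in the spirit of the referee's Remark; what it buys less of is structural information about the putative eigenvector, since the paper's route identifies the boson energy and the longitudinal and squared boson momentum separately, not merely the location of $E_\vP$. Your Step 5 (applying Theorem \ref{theo-VI.1} on subintervals of length slightly above $|g|^{\gamma/2}$ to produce $|g|^{\gamma/2}$-dense good points, then interpolating with $|\vnabla E_\vP|\leq C'_I$) is equivalent to the paper's case analysis a)/b); as you correctly flag, the good points must lie in the buffered sets $I'_J$ so that the wave-packet machinery behind Lemma \ref{cor-III.6} is available at them, and you still depend on that lemma (hence on the sector estimates) to fix the orientation of $\vnabla E_\vP$ --- though with your squeeze the antiparallel case would in any event contradict the lower bound. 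One small correction to your closing remark: the uniform-in-$\kappa$ control of the interaction term in the fiber virial identity only requires the ultraviolet cutoff and $\beta>-1$; the critical exponent $\betacrit$ enters through the number-operator sums underlying Lemma \ref{cor-III.6} and Theorem \ref{theo-VI.1}, not through the virial identity itself.
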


\noindent
\emph{Proof}

\noindent
We consider $(\vP, E_{\vP})\in I'_g\times \Delta_I$ such that
\begin{equation}\label{eq:in}
||\vnabla E_\vP|-1| \leq \frac{3}{2}|g|^{\gamma/3}\,.
\end{equation}
From the Feynman-Hellman formula (see Eq.(\ref{eq:II.5.1})) 
\begin{equation}\label{eq:F-H1}
\vP\cdot\vnabla E_{\vP}\,=\,|\vP|^2-\vP\cdot(\Psi_{\vP,E_{\vP}}\,,\,\vP^f\,\Psi_{\vP,E_{\vP}})\,.
\end{equation}
From the result  in Lemma \ref{cor-III.6}, 
we can derive the following identity 
\begin{equation}\label{eq:F-H2}
\vP\cdot\vnabla E_{\vP}\,=\,|\vP|(1+\,\cO(|g|^{\gamma/3}))\,.
\end{equation}

\noindent
From Lemma \ref{cor-III.7},  for  the expectation values in the equation below,  we can restrict $\vP^f$ and $H^f$ to the sector $\mathcal{C}_{\hat{P}}^{2,+}:=\{\hat{k}\,:\,\hat{k}\cdot\hat{P}\geq \cos(2|g|^{\gamma/8})\}$ up to an $o((|g|^{\gamma/4})$ remainder, and we deduce that
\begin{equation}\label{eq:F-H3}
\hat{P}\cdot(\Psi_{\vP,E_{\vP}}\,,\,\vP^f\,\Psi_{\vP,E_{\vP}})=(\Psi_{\vP,E_{\vP}}\,,\,H^f\,\Psi_{\vP,E_{\vP}})+\cO(|g|^{\gamma/4})\,.
\end{equation}
Hence, by combining  (\ref{eq:F-H1})-(\ref{eq:F-H3}), one arrives at
\begin{eqnarray}
\lefteqn{(\Psi_{\vP,E_{\vP}}\,,\,H^f\,\Psi_{\vP,E_{\vP}})-\vP\cdot(\Psi_{\vP,E_{\vP}}\,,\,\vP^f\,\Psi_{\vP,E_{\vP}}) \nonumber}\\
&= &|\vP|-1+|\vP|-|\vP|^2+\cO(|g|^{\gamma/4})\,.
\end{eqnarray}
Next, starting from the \emph{formal} virial identity
\begin{equation}
(\Psi_{\vP,E_{\vP}}\,,\,i[H_{\vP}\,,\,D^b_{\frac{1}{\kappa}, \kappa}]\,\Psi_{\vP,E_{\vP}})=0\,,
\end{equation}
where $ D_{\frac{1}{\kappa}, \kappa}^{b}=d\Gamma^b(d_{\frac{1}{\kappa}, \kappa})$ is defined in Section \ref{subsec-II.2} ($\cI\cI\cI$), we derive
\begin{eqnarray}
0&= &(\Psi_{\vP,E_{\vP}}\,,\,d\Gamma^b(i[|\vk|,\,d_{\frac{1}{\kappa}, \kappa}])\Psi_{\vP,E_{\vP}})\label{final-virial}\\
& &+( \Psi_{\vP,E_{\vP}}\,,\,d\Gamma^b(i[\vk,\,d_{\frac{1}{\kappa}, \kappa}])\cdot d\Gamma^b(\vk)\Psi_{\vP,E_{\vP}})\nonumber\\
& &-\vP\cdot (  \Psi_{\vP,E_{\vP}}\,,\,d\Gamma^b(i[\vk,\,d_{\frac{1}{\kappa}, \kappa}])\Psi_{\vP,E_{\vP}}) \nonumber \\
& &-g(\Psi_{\vP,E_{\vP}}\,,\,[b^*(id_{\frac{1}{\kappa}, \kappa} \,\rho)+b(id_{\frac{1}{\kappa}, \kappa} \,\rho)] \Psi_{\vP,E_{\vP}})\,.\nonumber
\end{eqnarray}
The virial identity in Eq. \eqref{final-virial} needs to be justified and this is done in Section \ref{sec: proof of virial fiber} in the Appendix.

By taking the limit $\kappa \uparrow +\infty$ on the RHS of  \eqref{final-virial}, it follows that
\begin{eqnarray}
0&= &(\Psi_{\vP,E_{\vP}}\,,\,H^f\Psi_{\vP,E_{\vP}})\label{final-virial infinity}\\
& &+( \Psi_{\vP,E_{\vP}}\,,\,\vec{P}^f\cdot \vec{P}^f\Psi_{\vP,E_{\vP}})\nonumber\\
& &-\vP\cdot (  \Psi_{\vP,E_{\vP}}\,,\,\vec{P}^f\Psi_{\vP,E_{\vP}}) \nonumber \\
& &-g(\Psi_{\vP,E_{\vP}}\,,\,[b^*(id_{\infty} \,\rho)+b(id_{\infty} \,\rho)] \Psi_{\vP,E_{\vP}})\,,\nonumber
\end{eqnarray}
where
\begin{equation}\label{eq:III.7 infty}
d_{\infty}\,:=\,\frac{1}{2}\,(\vk\cdot\,i\vnabla_{\vk}+i\vnabla_{\vk}\cdot\,\vk)\,.
\end{equation}
Eq. \eqref{final-virial infinity} follows from  \eqref{final-virial} thanks to
\begin{enumerate}
\item the infrared behavior of the form factor $\rho(\vk)$, namely for any $\beta>-1$.
\item  the ultraviolet cut-off $\Lambda$; see Eq. (\ref{assumption on form factor}).
\item  the fact that $d\Gamma^b(i[|\vk|,\,d_{\frac{1}{\kappa}, \kappa}])$ and $d\Gamma^b(i[\vk,\,d_{\frac{1}{\kappa}, \kappa}])$ are bounded by $H^f$ and $\Psi_{\vP,E_{\vP}}$ belongs to the domain of $H^f$.
\end{enumerate}
Therefore,  we can express the expectation value of $(\vP^{f})^2$ in the state $\Psi_{\vP,E_{\vP}}$ as a function of $|\vP|$ up to $g$-dependent corrections
\begin{equation}
( \Psi_{\vP,E_{\vP}}\,,(\vP^{f})^2\,\Psi_{\vP,E_{\vP}})=(|\vP|-1)^2+
\cO(|g|^{\gamma/4})\,.
\end{equation}
Using the eigenvalue equation (\ref{eq:III.20bis}), we obtain
\begin{eqnarray}\label{eq:fin}
E_{\vP}
&=&\frac{1}{2}\big[(|\vP|-1)^2+2|\vP|-|\vP|^2+
\cO(|g|^{\gamma})\big]+|\vP|-1+\cO(|g|^{\gamma/4})\nonumber\\
&= &|\vP|-\frac{1}{2}+\cO(|g|^{\gamma/4})\,.\label{eq:final}
\end{eqnarray}
Finally, because of the constraint on $\vnabla E_{\vP}$  (see Property P1,  Section \ref{subsec-II.1}),  if Eq. (\ref{eq:final}) holds for $|\vP|\in I'_g$, either it is also true for $|\vP|\in I_g$ or the mass shell cannot be defined on $I_g$ with the assumed regularity properties. This can be explained considering the following two cases:
\begin{itemize}
\item [a)]
if $|I_g|<2|g|^{\gamma/4}$, use that $|\vnabla E_{\vP}|<C'_I$ and conclude that Eq. (\ref{eq:fin}) holds on $I_g$;
\item[b)]
if $|I_g| \geq 2|g|^{\gamma/4}$,  write $I_g$ as $I_g=\cup_jI^j_g$, with $\{I^j_g\}$ disjoints,  and $2|g|^{\gamma/4}>|I^{j}_g|>|g|^{\gamma/2}$. For each $I^j_g$,  either one can repeat the argument developed  in Eqs. (\ref{eq:in})-(\ref{eq:fin}), and proceed as in a),  or one concludes that the mass shell does not exists for $\vP\in I^j_g$. In the latter case, since $I^j_g\subset I_g$, the mass shell does not exist in $I_g$ with the assumed regularity properties.
\end{itemize}

\noindent
{\bf{Remark}}

\noindent
It is easy to see that 
\begin{equation}\label{eq:energy-constraint}
E_{\vP}\leq \frac{\vP^2}{2}+\cO(|g|)\,.
\end{equation}
The proof follows from Eq. (\ref{final-virial infinity}) by adding and subtracting $\vP^2$ on the right-hand side. In fact, one gets
\begin{eqnarray}\label{eq:virial2}
0&= &(\Psi_{\vP,E_{\vP}}\,,\,H_{\vP} \Psi_{\vP,E_{\vP}})-\frac{\vP^2}{2}\\
& &+\frac{1}{2}( \Psi_{\vP,E_{\vP}}\,,\,\vec{P}^f\cdot \vec{P}^f\Psi_{\vP,E_{\vP}})\nonumber\\
& &-g(\Psi_{\vP,E_{\vP}}\,,\,[b^*(id_{\infty} \,\rho)+b(id_{\infty} \,\rho)] \Psi_{\vP,E_{\vP}})\,.\nonumber
\end{eqnarray}

Furthermore, assuming the validity of the Feynman-Helman formula, we see that
\begin{eqnarray}
\vP \cdot \vnabla E_{\vP} &=&  \vP \cdot (\Psi_{\vP, E_{\vP}},(\vP- \vP^f) \Psi_{\vP, E_{\vP}})\\
&\geq& \vP^2-|\vP|\|\vP^f\Psi_{\vP, E_{\vP}}\|
\end{eqnarray}
From Eq. (\ref{eq:virial2})
\begin{equation}
\|\vP^f\Psi_{\vP, E_{\vP}}\|^2\leq \vP^2-2E_{\vP}+C|g|\quad,\quad C>0,
\end{equation}
and then
\begin{equation}
\vP\cdot \vnabla E_{\vP}\geq \vP^2-|\vP|\sqrt{\vP^2-2E_{\vP}+C|g|}
\end{equation}
For $|\vP|\geq 1+\delta$, because of the constraint $E_{\vP}\geq |\vP|-\frac{1}{2}+\cO(|g|)$,  we can conclude that
\begin{equation}
\hat{P}\cdot \vnabla E_{\vP}\geq 1-C'|g|
\end{equation}
for some positive constant $C'$. This yields an alternative proof of Lemma \ref{cor-III.6}.

\section{Appendix }\label{Appendix}
\resetequ

In Sections \ref{appsec: lemma x domain} and \ref{appsec: lemma virial}, we provide the proofs of Lemmas  \ref{lem-III.1} and \ref{lem-III.2} in Section \ref{Section-V}. For the convenience of the reader, these lemmas are repeated below.  In Section \ref{sec: proof of virial fiber}, we prove the equality \eqref{final-virial} in Section \ref{Section-VI}. 

Lemma \ref{lem-III.2} and the equality \eqref{final-virial} are virial identities whose justification is, in general, a hard task. We refer the reader to \cite{cyconfroesekirschsimon, froehlichgriesemersigal} and \cite{georgescugerard} for more background.

\subsection{Proof of Lemma \ref{lem-III.1}} \label{appsec: lemma x domain}

\begin{blanklemma}[\textrm{\ref{lem-III.1}}]
The vector $ \Psi_{f_{\vec{Q}}^g}$ belongs to the domain of the position operator $\vx$ and 
\begin{equation}\label{eq:V.2 app}
 \norm x_i \Psi_{f_{\vec{Q}}^g}\| \leq \cO(|g|^{-\gamma} \|\Psi_{f_{\vec{Q}}^g}\|),\, \qquad  i=1,2,3
\end{equation}
\end{blanklemma}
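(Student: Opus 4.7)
The plan is to realize $x_i \Psi_{f_{\vec{Q}}^g}$ explicitly in terms of derivatives with respect to $\vP$, and to then bound the result using the \emph{Main Hypothesis} together with the scaling properties of the wave packet $f_{\vec{Q}}^g$. First, I would write the $n$-boson sector of $\Psi_{f_{\vec{Q}}^g}$ in the coordinates $(\vx;\vk_1,\ldots,\vk_n)$, which by the definition of $\Psi_{f_{\vec{Q}}^g}$ and the fiber isomorphisms $I_{\vP}$ takes the form
\begin{equation*}
\int f_{\vec{Q}}^g(\vP)\,e^{i(\vP-\vk_1-\cdots-\vk_n)\cdot\vx}\,\psi_{\vP,E_{\vP}}^{(n)}(\vk_1,\ldots,\vk_n)\,d^3P,
\end{equation*}
where $\psi_{\vP,E_{\vP}}^{(n)}$ denotes the $n$-boson amplitude of $I_{\vP}\Psi_{\vP,E_{\vP}}\in\cF^{b}$.

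Second, I would multiply by $x_i$ and apply the identity $x_i\,e^{i\vP\cdot\vx}=-i\,\partial_{P_i}e^{i\vP\cdot\vx}$, followed by an integration by parts in $P_i$. The boundary terms vanish because $f_{\vec{Q}}^g$ is smooth and compactly supported, and $\vP\mapsto \psi_{\vP,E_{\vP}}^{(n)}$ is continuously differentiable by the \emph{Main Hypothesis}. This produces the identity
\begin{equation*}
x_i \Psi_{f_{\vec{Q}}^g}\;=\;i\int (\partial_{P_i}f_{\vec{Q}}^g)(\vP)\,\Psi_{\vP,E_{\vP}}\,d^3P \;+\; i\int f_{\vec{Q}}^g(\vP)\,\partial_{P_i}\Psi_{\vP,E_{\vP}}\,d^3P,
\end{equation*}
where the fiber derivative $\partial_{P_i}\Psi_{\vP,E_{\vP}}$ is understood via the trivialization $I_{\vP}$, in the sense of \eqref{eq:II.4}.

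Third, the direct integral structure $\cH=\int^{\oplus}\cH_{\vP}\,d^3P$, together with the normalization $\|\Psi_{\vP,E_{\vP}}\|=1$ and the uniform bound $\|\partial_{P_i}\Psi_{\vP,E_{\vP}}\|\leq C_I$ from the \emph{Main Hypothesis}, gives
\begin{equation*}
\|x_i\Psi_{f_{\vec{Q}}^g}\|^2\;\leq\;2\int |\partial_{P_i}f_{\vec{Q}}^g(\vP)|^2\,d^3P\;+\;2C_I^2\,\|f_{\vec{Q}}^g\|_2^2.
\end{equation*}
By the definition of $f_{\vec{Q}}^g$ in Section \ref{subsec-II.2}, we have $|\partial_{P_i} f_{\vec{Q}}^g|\leq \cO(|g|^{-\gamma})$; moreover, the supports of $f_{\vec{Q}}^g$ and of $\partial_{P_i}f_{\vec{Q}}^g$ are comparable (both of volume $\cO(|g|^{3\gamma})$ around $\vec{Q}$), with $f_{\vec{Q}}^g$ identically equal to $1$ on a subset of its support of comparable volume. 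Hence $\|\partial_{P_i} f_{\vec{Q}}^g\|_2\leq \cO(|g|^{-\gamma})\|f_{\vec{Q}}^g\|_2$, and since $\|\Psi_{f_{\vec{Q}}^g}\|=\|f_{\vec{Q}}^g\|_2$ by orthogonality of distinct fibers, the stated bound follows.

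The main technical point is the rigorous justification of the integration by parts and of the interchange of $x_i$ with the direct integral representation. Since $\vP\mapsto I_{\vP}\Psi_{\vP,E_{\vP}}$ is $C^1$ by the \emph{Main Hypothesis}, and $f_{\vec{Q}}^g$ is compactly supported and smooth, this can be performed on each Fock space sector separately and then summed, with the uniform bound $\|\partial_{P_i}\Psi_{\vP,E_{\vP}}\|\leq C_I$ and the Fubini theorem ensuring the interchanges are legitimate.
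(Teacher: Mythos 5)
Your proposal is correct and follows essentially the same route as the paper: $x_i$ is realized as $i\partial_{P_i}$ acting through the fiber decomposition, producing exactly the two terms you display, bounded respectively by $\cO(|g|^{-\gamma})\|f_{\vec{Q}}^g\|_2$ (derivative hitting the wave packet, justified by the same support-comparison argument) and by $C_I\|f_{\vec{Q}}^g\|_2$ (derivative hitting the fiber vector, via the \emph{Main Hypothesis} read through the trivialization $I_{\vP}$). The only difference is presentational: the paper avoids the integration-by-parts justification you flag at the end by working directly with the difference quotient $\Delta_i^{-1}\big(e^{-i\Delta_i x_i}-1\big)\Psi_{f_{\vec{Q}}^g}$, using that $e^{-i\Delta_i x_i}$ shifts the fiber $\cH_{\vP}$ to $\cH_{\vP-\Delta_i\hat{i}}$, which requires only norm-differentiability of $\vP\mapsto I_{\vP}\Psi_{\vP,E_{\vP}}$ and settles membership in $\dom(x_i)$ at the same time.
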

\begin{proof}
It suffices to estimate, in the limit $\Delta_i\to0$,
\begin{eqnarray}
\lefteqn{\frac{e^{-i  \Delta_i x_i}\Psi_{f_{\vec{Q}}^g} - \Psi_{f_{\vec{Q}}^g}}{\Delta_i}}\\
& =&\frac{1}{\Delta_i}\big[e^{-i  \Delta_i x_i}\int f_{\vec{Q}}^g(\vP)\,\Psi_{\vP,E_{\vP}}d^3P-\int f_{\vec{Q}}^g(\vP)\,\Psi_{\vP,E_{\vP}}d^3P\big] \nonumber\\
&= &\frac{1}{\Delta_i}\big[\int f_{\vec{Q}}^g(\vP)\,      e^{-i  \Delta_i x_i}\Psi_{\vP,E_{\vP}} d^3P-\int f_{\vec{Q}}^g(\vP)\,\Psi_{\vP-\Delta_i \hat{i},E_{\vP-\Delta_i \hat{i}}}d^3P\big]  \label{eq:III.22}\\
& &+ \frac{1}{\Delta_i}\big[\int \,(f_{\vec{Q}}^g(\vP)- f_{\vec{Q}}^g(\vP-\Delta_i \hat{i}))\,\Psi_{\vP-\Delta_i \hat{i},E_{\vP-\Delta_i \hat{i}}}d^3P\big] \label{eq:III.23}\\
& &+\frac{1}{\Delta_i}\big[\int f_{\vec{Q}}^g(\vP-\Delta_i \hat{i})\,\Psi_{\vP-\Delta_i \hat{i},E_{\vP-\Delta_i \hat{i}}}d^3P-\int f_{\vec{Q}}^g(\vP)\,\Psi_{\vP,E_{\vP}}d^3P\big] \,\quad\quad \quad\label{eq:III.24}
\end{eqnarray}
We notice that $ e^{-i  \Delta_i x_i}\Psi_{\vP,E_{\vP}} \in \cH_{\vP-\Delta_i \hat{i}}$ (in \eqref{eq:III.22}), and 
\begin{equation}
I_{\vP-\Delta_i \hat{i}}(     e^{-i  \Delta_i x_i}\Psi_{\vP,E_{\vP}}   )=I_{\vP}(\Psi_{\vP,E_{\vP}})\,
\end{equation}
as vectors in $\cF^b$. The term in (\ref{eq:III.24}) is identically zero, by a change of variables.
We now derive bounds for (\ref{eq:III.22}),  (\ref{eq:III.23}), as $\Delta_{i}\to0$. \\

By  item $(iii)$ in the \emph{Main Hypothesis}  (which, strictly speaking, means that $\| \vnabla_{\vP} I_{\vP}(\Psi_{\vP,E_{\vP}} ) \| \leq C_I$)  and the Cauchy-Schwartz inequality, we conclude that \eqref{eq:III.22} is bounded by  $C_I \| f_{\vec{Q}}^g(\vP)\|^{}_2$.

For  (\ref{eq:III.23}), we use again  Cauchy-Schwartz and the bound (for some constant $C$)
\begin{equation}
\|\vnabla_{\vP} f_{\vec{Q}}^g(\vP)\|^{}_2\leq C |\sup \vnabla_{\vP} f_{\vec{Q}}^g(\vP)| \|f_{\vec{Q}}^g(\vP)\|^{}_2 = \cO(|g|^{-\gamma}  \|f_{\vec{Q}}^g(\vP)\|^{}_2), 
\end{equation}
which can be checked from the construction of the functions $f_{\vec{Q}}^g$  (see  below Eq. (\ref{eq:II.8})).

Collecting the bounds on (\ref{eq:III.22}, \ref{eq:III.23}, \ref{eq:III.24}), we have proven the lemma.
\end{proof}
.

\subsection{Proof of Lemma \ref{lem-III.2} }  \label{appsec: lemma virial}

We now proceed with the proof of Lemma \ref{lem-III.2} in Section \ref{Section-V}.

\begin{blanklemma} [\textrm{\ref{lem-III.2}}] 
The identity
\begin{eqnarray}
0&= &(\Psi_{f_{\vec{Q}}^g}\,,\,d\Gamma(\chi_{n}^2(|\vk|)\xi^{g\,2}_{\hat{u}}(\hat{k})\,|\vk|)\Psi_{f_{\vec{Q}}^g})  \label{eq: limiting virial 1}  \\
 && -(\Psi_{f_{\vec{Q}}^g}\,,\,\vnabla E_{\vP}\cdot d\Gamma(\chi_{n}^2(|\vk|)\xi^{g\,2}_{\hat{u}}(\hat{k})\,\vk)\Psi_{f_{\vec{Q}}^g})   \label{eq: limiting virial 2}\\
& &   -g(\Psi_{f_{\vec{Q}}^g}\,,\,[a^*(id_{n}^{\hat{u}} \rho_{\vx})+\, a(id_{n}^{\hat{u}} \rho_{\vx})]\Psi_{f_{\vec{Q}}^g})\,  \label{eq: limiting virial 3}
\end{eqnarray}
holds true. As the one-particle state $\Psi_{f_{\vec{Q}}^g}$ belongs to the form domain of all operators in (\ref{eq: limiting virial 1}, \ref{eq: limiting virial 2}, \ref{eq: limiting virial 3}), this RHS is well-defined.
\end{blanklemma}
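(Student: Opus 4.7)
The plan is to make rigorous the formal commutator identity
\begin{equation*}
0 \;=\; (\Psi_{f_{\vec{Q}}^g},\,i[H - E(\vP_{\mathrm{tot}}),\,D_{n}^{\hat{u}}]\,\Psi_{f_{\vec{Q}}^g}),
\end{equation*}
where $E(\vP_{\mathrm{tot}})$ denotes the operator defined by functional calculus in the total momentum that agrees with $E_{\vP}$ on each fiber of the direct-integral decomposition of $\Psi_{f_{\vec{Q}}^g}$; the identification $H\Psi_{f_{\vec{Q}}^g} = E(\vP_{\mathrm{tot}})\Psi_{f_{\vec{Q}}^g}$ is licit because $\mathrm{supp}\,f^g_{\vec{Q}} \subset I_g$, where the \emph{Main Hypothesis} supplies the eigenpairs. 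First I would verify form-domain membership: $\Psi_{f_{\vec{Q}}^g}\in\dom(H^f)$ by standard arguments (it is a bounded weighted integral of fiber eigenvectors and $H^f$ is $H_{\vP}$-bounded), which handles the first two terms of the RHS since both are dominated in form sense by $H^f$ (use $\chi_n^2\xi_{\hat u}^{g\,2}|\vk| \leq |\vk|$ and $|\vnabla E_{\vP}|\leq C_I'$ from Property~(P1)). For the interaction term, the standard bound $\|a^{\#}(h)\Psi\|\leq\|h\|_2\|(N+1)^{1/2}\Psi\|$ combined with the pointwise estimate $\|d_{n}^{\hat{u}}\rho_{\vx}\|_2 \leq C(1+|\vx|)$ (the $\vx$-linear growth coming from the $\vnabla_{\vk}$-action on $e^{-i\vk\cdot\vx}$) reduces well-definedness to a bound on $\|\vx\,\Psi_{f^g_{\vec{Q}}}\|$, which is $\cO(|g|^{-\gamma})$ by Lemma~\ref{lem-III.1}.

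Next, I would compute each commutator as a quadratic form on the invariant dense subspace $\dom(H^f)\cap\dom(\vx)\cap \cF_{fin}$. Since $\chi_n(|\vk|)$ and $\xi_{\hat{u}}^g(\hat{k})$ commute with multiplication by $|\vk|$ and $\vk$, the standard dilation-generator identity $[F(\vk),\tfrac12(\vk\cdot i\vnabla_{\vk}+i\vnabla_{\vk}\cdot\vk)]=-i\vk\cdot\vnabla_{\vk}F$ gives
\begin{equation*}
i[|\vk|,\,d_{n}^{\hat{u}}] = \chi_n^2(|\vk|)\,\xi_{\hat{u}}^{g\,2}(\hat{k})\,|\vk|,\qquad
i[\vk,\,d_{n}^{\hat{u}}] = \chi_n^2(|\vk|)\,\xi_{\hat{u}}^{g\,2}(\hat{k})\,\vk,
\end{equation*}
while the CCR together with the conventions for $a^{\#}(if)$ introduced in Section~\ref{subsec-II.2} give
\begin{equation*}
ig[\phi(\rho_{\vx}),\,D_{n}^{\hat{u}}] = -g\bigl[a^{*}(id_n^{\hat{u}}\rho_{\vx}) + a(id_n^{\hat{u}}\rho_{\vx})\bigr].
\end{equation*}
These three identities are exactly the three terms of the RHS of the lemma. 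The electron kinetic energy $\vec{p}^{\,2}/2$ drops out because $D_n^{\hat u}$ acts only on the bosonic Fock factor.

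To close the argument, I would regularize $D_n^{\hat u}$ by a bounded cutoff $D_n^{\hat u,\epsilon}$ (for example via a Helffer--Sj\"ostrand representation of a smooth cut-off in the boson number $N$ and electron position $\vx$) that preserves $\dom(H-E(\vP_{\mathrm{tot}}))$. For the regularized operator
\begin{equation*}
(\Psi_{f^g_{\vec{Q}}},\,i[H-E(\vP_{\mathrm{tot}}),\,D_n^{\hat u,\epsilon}]\Psi_{f^g_{\vec{Q}}}) = 0
\end{equation*}
holds by moving $H-E(\vP_{\mathrm{tot}})$ onto $\Psi_{f^g_{\vec{Q}}}$ on either side of the inner product and using $(H-E(\vP_{\mathrm{tot}}))\Psi_{f^g_{\vec{Q}}}=0$ together with the self-adjointness of $H-E(\vP_{\mathrm{tot}})$ (the two operators commute because $E(\vP_{\mathrm{tot}})$ is a function of a conserved quantity). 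I then pass $\epsilon \to 0$ term by term, with convergences given by dominated convergence and the form-domain membership of Paragraph~1.

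The main obstacle will be the control of $[E(\vP_{\mathrm{tot}}),D_n^{\hat u,\epsilon}]$, since $D_n^{\hat u}$ commutes with the electron momentum $\vp$ but not with the field momentum $\vP^f$, so the commutator does not vanish. Functional calculus together with a first-order Taylor expansion of $E$ around $\vec{Q}$ gives
\begin{equation*}
[E(\vP_{\mathrm{tot}}),D_n^{\hat u}]\;\sim\; \vnabla E(\vP_{\mathrm{tot}})\cdot d\Gamma(i[\vk,d_n^{\hat u}]) \,+\, \text{remainder},
\end{equation*}
where the Taylor remainder is $\cO(|\vP_{\mathrm{tot}}-\vec Q|^2)=\cO(|g|^{2\gamma})$ on $\mathrm{supp}\,f^g_{\vec{Q}}$ by Property~(P2) (boundedness of $\partial_{|\vP|}^2 E_{\vP}$). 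After pairing with $\Psi_{f^g_{\vec{Q}}}$ and passing to the regularization limit, only the linear term survives, producing the term $-(\Psi_{f^g_{\vec{Q}}},\vnabla E_{\vP}\cdot d\Gamma(\chi_n^2\xi_{\hat u}^{g\,2}\vk)\Psi_{f^g_{\vec{Q}}})$ in the RHS; the quadratic remainder vanishes because $D^{\hat u,\epsilon}_n$ is bounded and the remainder is a form-small perturbation in the relevant domain.
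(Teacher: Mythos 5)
Your overall scaffolding (well-definedness via $\Psi_{f_{\vec{Q}}^g}\in\dom(H^f)\cap\dom(\vx)$ and Lemma \ref{lem-III.1}, regularization of $D_n^{\hat u}$, formal commutator computation, then removal of the regularization) matches the paper's strategy, and your formal commutators $i[|\vk|,d_n^{\hat u}]=\chi_n^2\xi_{\hat u}^{g\,2}|\vk|$, $i[\vk,d_n^{\hat u}]=\chi_n^2\xi_{\hat u}^{g\,2}\vk$ are correct. The genuine gap is your treatment of the term $[E(\vP_{\mathrm{tot}}),D_n^{\hat u,\epsilon}]$, which is precisely the non-standard part of this virial identity. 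A first-order Taylor expansion of $E$ around $\vec Q$ is neither needed nor adequate: the lemma asserts an \emph{exact} identity, and your quadratic remainder does not disappear in the limit $\epsilon\to 0$. The remainder is an $\epsilon$-independent operator of size $\cO(|g|^{2\gamma})$ on the spectral support of $f_{\vec{Q}}^g$, and its commutator with $D_n^{\hat u,\epsilon}$ is not controlled by its norm (the norm of $D_n^{\hat u,\epsilon}$ diverges as $\epsilon\to 0$); the phrase ``form-small perturbation'' does not substitute for an estimate here. Moreover, even if one accepted an approximate identity with an $\cO(|g|^{2\gamma})$ error, that error would be fatal for the intended application: in Lemma \ref{lem-V.3} the virial identity must resolve quantities of order $|g|^{2(1-\gamma-\tilde\gamma)}n^2\||\vk|^{\beta}\one_{(\frac{1}{2(n+1)},\frac{3}{2n})}\|_2^2$, which for small $|g|$ is far smaller than $|g|^{2\gamma}$. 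Note also an internal inconsistency: if you expand around $\vec Q$ the linear term carries the constant coefficient $\vnabla E(\vec Q)$, whereas you write $\vnabla E(\vP_{\mathrm{tot}})$ — but with $\vnabla E(\vP_{\mathrm{tot}})$ the identity is already exact and no remainder should appear at all.

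The paper's route avoids this entirely: extend $E_{\vP}$ to a $C^2$ function of compact support, write $E_{\vP}=\int d^3z\,\hat E(\vz)e^{i\vz\cdot\vP}$ with $\hat E\in L^1$, and use the exact translation identity $[e^{i\vz\cdot\vP},D_n^{\hat u,\epsilon}]=\vz\cdot d\Gamma(\vdiffdilation_{n,\vz}^{\hat u,\epsilon})\,e^{i\vz\cdot\vP}$. The factor $\vz$ (which would spoil integrability against $\hat E$) is traded for the commutator $\vz\,e^{i\vz\cdot\vP}=[e^{i\vz\cdot\vP},\vx]$ and absorbed by $\|\vx\,\Psi_{f_{\vec{Q}}^g}\|\leq\cO(|g|^{-\gamma})$ from Lemma \ref{lem-III.1} — this is exactly why the regularity hypothesis enters — after which dominated convergence in $\vz$, together with the uniform bounds and the strong convergence $\vdiffdilation_{n,\vz}^{\hat u,\epsilon}\to\vk\chi_n^2\xi_{\hat u}^{g\,2}$, yields the exact term $-(\Psi_{f_{\vec{Q}}^g},\vnabla E_{\vP}\cdot d\Gamma(\chi_n^2\xi_{\hat u}^{g\,2}\vk)\Psi_{f_{\vec{Q}}^g})$ with no error. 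A secondary problem is your choice of regularization: a Helffer--Sj\"ostrand cutoff in $N$ and $\vx$ does not regularize $D_n^{\hat u}$, since the one-particle operator $d_n^{\hat u}$ is a first-order differential operator in $\vk$ and remains unbounded under such cutoffs; the paper instead replaces $\vnabla_{\vk}$ by $\vnabla_{\vk}(1-\epsilon\Delta_{\vk})^{-1}$, making $d_n^{\hat u,\epsilon}$ bounded and $D_n^{\hat u,\epsilon}$ relatively $H^f$-bounded, which is what the subsequent limiting arguments rely on.
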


Since the dilation operator is unbounded, we must check that a regularized expression for the commutator  $\,i[H-E_{\vP}\,,\,D_{n}^{\hat{u}}]$ in Eq. (\ref{eq:III.9}) is well defined and that, upon the removal of the regularization, the expectation value of that commutator in the state $\Psi_{f_{\vec{Q}}^g}$  corresponds to the right-hand side above, i.e.\ (\ref{eq: limiting virial 1}, \ref{eq: limiting virial 2}, \ref{eq: limiting virial 3}). We show that,  provided $\beta$ is sufficiently large, the same strategy as implemented in \cite{FP}  justifies this identity.
Most of the arguments below, with the exception of the one in Section \ref{sec: the derivative term},  are standard in the literature. 

However, compared to the literature, our virial theorem has a little twist. This is due to the fact that we do not attempt to rule out any eigenvector, but merely an eigenvector with a certain regularity property. This is exploited in Lemma \ref{lem-III.1} and it is a crucial ingredient of the justification of the virial identity in Lemma \ref{lem-III.2}.

In Section \ref{sec: well defined rhs}, we prove that the expressions in  (\ref{eq: limiting virial 1}, \ref{eq: limiting virial 2}, \ref{eq: limiting virial 3}) are well-defined. 
In Section \ref{virial identity with regularization}, we start the proof of the equality in Lemma \ref{lem-III.2}.

\subsubsection{Well-definedness of the terms (\ref{eq: limiting virial 1}, \ref{eq: limiting virial 2}, \ref{eq: limiting virial 3}) } \label{sec: well defined rhs}

The operators
\begin{equation}
\qquad \,d\Gamma(\chi_{n}^2(|\vk|)\xi^{g\,2}_{\hat{u}}(\hat{k})\,|\vk|)  \quad \textrm{and} \quad  \vnabla_{\vP}E_{\vP}\cdot d\Gamma(\chi_{n}^2(|\vk|)\xi^{g\,2}_{\hat{u}}(\hat{k})\,\vk)
\end{equation}
are bounded by a (multiple of) $H^f$.  In fact, the operator $ \vnabla E_{\vP}$ is surely bounded if we restricted the total Hilbert space to the fibers $\vP \in I$.  This restriction can be done since the function $f_{\vec{Q}}^g$ has support in $I$.
Since 
\begin{equation} \label{eq: chain of domains}
\Psi_{f_{\vec{Q}}^g}\in \dom(H)\,\Rightarrow\,\Psi_{f_{\vec{Q}}^g}\in
\dom(H^f)
\end{equation} the expressions \eqref{eq: limiting virial 1} and \eqref{eq: limiting virial 2} are well-defined. 
Next, from the expression in \eqref{eq: splitting of form factor} and  the fact that $\rho \in C_1$, we have 
\begin{equation}  \label{eq: l infinity l two bound}
  \int d^3k \frac{1}{|\vk|} \sup_{\vx}  \left\str  \frac{1}{|\vx|+1}  ( d_n^{\hat{u}}\rho_{\vx})(\vk) \right\str^2 < \infty.
\end{equation} 
and hence, by a standard argument for bounding creation/annihilation operators, 
 \begin{equation}
\| \frac{1}{|\vx|+1} a(id_n^{\hat{u}}\rho_{\vx}) \frac{1}{(H^f+1)} \| < \infty.
\end{equation}
Since $\Psi_{f_{\vec{Q}}^g} \in \dom(\vx) \cap \dom(H^f)$ by  Lemma \ref{lem-III.1} and \eqref{eq: chain of domains}, it follows that 
 also the expression \eqref{eq: limiting virial 3} makes sense.

\subsubsection{Virial Identity with a regularized dilation operator} \label{virial identity with regularization}

We introduce the regularized gradient 
\begin{equation} \label{def: regularized gradient}
\vnabla^{\epsilon}_{\vk}:=\frac{\vnabla_{\vk}}{1-\epsilon\Delta_{\vk}}\,,
\end{equation}
where the parameter $\epsilon>0$  will be eventually removed. 
\noindent
Consequently, we also define $D_n^{\hat{u},\epsilon}:=d\Gamma(d_n^{\hat{u},\epsilon})$ where $d_n^{\hat{u},\epsilon}$ corresponds to $d_n^{\hat{u}}$ with $\vnabla_{\vk}$ replaced by $\vnabla^{\epsilon}_{\vk}$. 
Since, thanks to the regularization, $D_n^{\hat{u},\epsilon}$ is bounded w.r.t. to $H^f$, we deduce that  $\Psi_{f_{\vec{Q}}^g}\in \dom(D_n^{\hat{u},\epsilon})$ (cfr.\ \eqref{eq: chain of domains}).

We claim that
\begin{eqnarray}
&&  i((H-E_{\vP})\Psi_{f_{\vec{Q}}^g}\,,\,\,D_n^{\hat{u},\epsilon}\,\Psi_{f_{\vec{Q}}^g})-i(D_n^{\hat{u},\epsilon}\Psi_{f_{\vec{Q}}^g}\,,\,(H-E_{\vP})\Psi_{f_{\vec{Q}}^g})  \qquad  \qquad \qquad  \qquad \label{eq:II.25} \\
&&  \qquad  \qquad = \, \, (\Psi_{f_{\vec{Q}}^g}\,,\,d\Gamma(i[|\vk|,\,d^{\hat{u},\epsilon}_{n}])  \Psi_{f_{\vec{Q}}^g})  \label{eq:II.25andhalf}  \\
& &   \qquad  \qquad \quad-i(\Psi_{f_{\vec{Q}}^g}\, [E_{\vP}, D_n^{\hat{u},\epsilon}]\Psi_{f_{\vec{Q}}^g})\quad\quad\quad \quad\label{eq:II.26} \\
& &  \qquad  \qquad  \quad -g(\Psi_{f_{\vec{Q}}^g}\,,\,[a^*(id_n^{\hat{u},\epsilon}\rho_{\vx})+\, a(id_n^{\hat{u},\epsilon}\rho_{\vx})]\Psi_{f_{\vec{Q}}^g})\label{eq:II.27}\,
\end{eqnarray}
where  the LHS makes sense since $\Psi_{f_{\vec{Q}}^g}\in \dom(D_n^{\hat{u},\epsilon})$ and the RHS is obtained by \emph{formal} evaluation of the commutator  $[H-E_{\vP}, D_n^{\hat{u},\epsilon}]$.
All terms on the RHS are well-defined by similar (but easier) arguments as those in Section \ref{sec: well defined rhs} (for example, note that $[|\vk|,\,d^{\hat{u},\epsilon}_{n}]$ is a bounded operator). Nevertheless, the equality above requires a  justification. In the case at hand, a pedestrian way to provide such a justification is to introduce cutoffs in $\vx,\vk$ and $N$ (the number operator) such that all operators involved are bounded, calculate the commutator and finally remove the cutoffs.

Since $(H-E_{\vP})\Psi_{f_{\vec{Q}}^g}=0$ by assumption,  the expression  (\ref{eq:II.25}) vanishes.
Thus, it is enough to prove that the expressions (\ref{eq:II.25andhalf}, \ref{eq:II.26},  \ref{eq:II.27}) converge to (\ref{eq: limiting virial 1}, \ref{eq: limiting virial 2}, \ref{eq: limiting virial 3}), respectively, as $\epsilon$ tends to $0$. 
These three convergence statements will be established Sections \ref{sec: the energy term}, \ref{sec: the derivative term} and \ref{sec: the interaction term}, respectively.

\subsubsection{Some properties of the regularized dilation operator}\label{sec: properties of diffdilation}

In this preparatory section, we state some estimates on
\begin{equation}
e^{i\vz\cdot\vP}\,D_n^{\hat{u},\epsilon}e^{-i\vz\cdot\vP}-D_n^{\hat{u},\epsilon}
\end{equation}
that will be useful in taking the limit $\epsilon \rightarrow 0$. 
First, we remark that 
\begin{equation}
e^{i\vz\cdot\vP}\,D_n^{\hat{u},\epsilon}e^{-i\vz\cdot\vP}=d\Gamma(d_{n,\vz}^{\hat{u},\epsilon}), \qquad d_{n,\vz}^{\hat{u},\epsilon}:=e^{i\vz\cdot\vk}d_{n}^{\hat{u},\epsilon}  e^{-i\vz\cdot\vk}
\end{equation}
on the appropriate domain. Explicitly,
\begin{equation} 
d_{n,\vz}^{\hat{u},\epsilon}=\,\chi_{n}(|\vk|)\xi_{\hat{u}}^g(\hat{k})\frac{1}{2}\,(\vk\cdot\, \vec{F}_{\epsilon}(i\vnabla_{\vk}+\vz)+\vec{F}_{\epsilon}(i\vnabla_{\vk}+\vz) \cdot\,\vk)\xi_{\hat{u}}^g(\hat{k})\chi_{n}(|\vk|)\, \label{eq: expression for d ep}
\end{equation}
and $\vec{F}_{\epsilon}$ is the family of $\mathbb{R}^3 \mapsto \mathbb{R}^3$ functions given by (cfr.\ \eqref{def: regularized gradient})
\begin{equation}
\vec F_{\ep}(\vy) =  \frac{\vy}{1+\ep |\vy|^2}.
\end{equation}

We define the vector operator $\vdiffdilation_{n,\vz}^{\hat{u},\epsilon}$  such that it satisfies
\begin{equation} 
\vz \cdot \vdiffdilation_{n,\vz}^{\hat{u},\epsilon} = d_{n,\vz}^{\hat{u},\epsilon}-d_{n}^{\hat{u},\epsilon}\,. \label{def: diffdilation}
\end{equation}
Namely,
\begin{equation}
(\vdiffdilation_{n,\vz}^{\hat{u},\epsilon})_j :=  \,\chi_{n}(|\vk|)\xi_{\hat{u}}^g(\hat{k})\frac{1}{2}\,\sum_l\,\big(k_l\,  \int_{0}^{1}dt\, ( \vnabla F_{\ep,l})_j( i\vnabla_{\vk}+t\vec{z})\big)\xi_{\hat{u}}^g(\hat{k})\chi_{n}(|\vk|) + h.c.\,\,.
\end{equation}
where the subscripts $l$ and $j$ label vector components. 
To check that \eqref{def: diffdilation} holds, we substitute the line integral
\begin{equation}
 F_{\ep,l}(\vy+\vz)- F_{\ep,l}(\vy)=\vz\cdot    \int_{0}^{1}dt\,  \vnabla F_{\ep,l}( \vy+t\vec{z}), \qquad  l=1,2,3,
\end{equation} 
into the explicit expression for \eqref{eq: expression for d ep}, using the functional calculus.

We derive immediately the following properties
\begin{enumerate}
\item
The operator norms
\begin{equation}
  \|\,\vdiffdilation_{n,\vz}^{\hat{u},\epsilon}\,  \|, \qquad    \|\, \vk \chi_{n}^2(|\vk|)\xi_{\hat{u}}^{g\,{\,2}}(\hat{k}) \,  \|,   \label{boundedness of diffdilation}
\end{equation}
and hence also
\begin{equation}
 \|\,d\Gamma (\vdiffdilation_{n,\vz}^{\hat{u},\epsilon})\frac{1}{(H^f+1)}  
\,\|  \,    \qquad  \|d\Gamma( \vk\chi_{n}^2(|\vk|)\xi_{\hat{u}}^{g\,{\,2}}(\vk))\frac{1}{(H^f+1)}\|     \label{boundedness of second diffdilation}
 ,
\end{equation}
are bounded uniformly  in $\epsilon$ and in $\vz\in \RR^3$.  For the operators on the left (involving $\vdiffdilation_{n,\vz}^{\hat{u},\epsilon}$), this follows from the fact that $ \sup_{\vy,\ep} \|  \vnabla F_{\ep,j}(\vy)  \|  $ is bounded. For the operators on the right, this is a trivial consequence of the momentum cutoff functions. 
\item For each $\vz$,
\begin{eqnarray}
\vdiffdilation_{n,\vz}^{\hat{u},\epsilon}\quad &  \mathop{\longrightarrow}\limits^{\textrm{strongly}}_{\ep \rightarrow 0} & \quad \vk \chi_{n}^2(|\vk|)\xi_{\hat{u}}^{g\,{\,2}}(\hat{k}).  \label{eq: convergence of diffdilation} \\
d\Gamma (\vdiffdilation_{n,\vz}^{\hat{u},\epsilon})\frac{1}{(H^f+1)}  
\quad &  \mathop{\longrightarrow}\limits^{\textrm{strongly}}_{\ep \rightarrow 0} & \quad   d\Gamma( \vk\chi_{n}^2(|\vk|)\xi_{\hat{u}}^{g\,{\,2}}(\vk))\frac{1}{(H^f+1)}\quad\quad  \label{eq: convergence of second quantized diffdilation}
\end{eqnarray}
This convergence on $\dom (\vnabla_{\vk})$  and  $\dom (d \Gamma(\vnabla_{\vk})) \cap \cF_{fin}$  follows by $\vnabla F_{\ep,j}(\vy) \, \rightarrow \hat{y}_j$, as $\ep \rightarrow 0$, pointwise in $\vy$. Convergence on all vectors then follows by using the uniform boundedness  (\ref{boundedness of diffdilation}, \ref{boundedness of second diffdilation}) above.

\end{enumerate}

\subsubsection{The term $[H^f, D_n^{\hat{u}}]$}   \label{sec: the energy term}

In this section, we show  that (\ref{eq:II.25andhalf}) converges to \eqref{eq: limiting virial 1}, as $\epsilon \rightarrow 0$.

We derive
\begin{equation}
 d \Gamma (i[\,|\vk|\,,\,d_n^{\hat{u},\epsilon}] ) \frac{1}{H^f+1}  \qquad    \mathop{\longrightarrow}\limits_{\ep \rightarrow 0}^{\textrm{strongly}} \qquad  d\Gamma \left(  \,|\vk|\chi_{n}^2(|\vk|)\xi_{\hat{u}}^{g\,{\,2}}(\hat{k})   \right)  \frac{1}{H^f+1}  \label{eq: uniform boundedness energy}
\end{equation}
in exactly the same way as we did to arrive at \eqref{eq: convergence of second quantized diffdilation}. That is, we first establish (using properties of $F_\ep$) that 
$$ \sup_{\ep} \norm  i[\,|\vk|\,,\,d_n^{\hat{u},\epsilon}] \norm < \infty, $$ 
and that, on the dense domain $\dom(\vnabla_{\vk})$, the operator $ i[\,|\vk|\,,\,d_n^{\hat{u},\epsilon}]$  converges to $\,|\vk|\chi_{n}^2(|\vk|)\xi_{\hat{u}}^{g\,{\,2}}(\hat{k})$. 
Since  $\Psi_{f_{\vec{Q}}^g} \in \dom (H^f)$, we conclude that
\begin{equation}
d\Gamma\left(i[|\vk|,\,d^{\hat{u},\epsilon}_{n}] \right) \Psi_{f_{\vec{Q}}^g}  \qquad    \mathop{\longrightarrow}\limits_{\ep \rightarrow 0} \qquad  d\Gamma \left(  \,|\vk|\chi_{n}^2(|\vk|)\xi_{\hat{u}}^{g\,{\,2}}(\hat{k})   \right)   \Psi_{f_{\vec{Q}}^g} 
\end{equation}
We have proven that the difference between
 (\ref{eq:II.25andhalf}) and \eqref{eq: limiting virial 1} vanishes as $\epsilon \rightarrow 0$.

\subsubsection{The term $[E_{\vP}, D_n^{\hat{u}}]$}  \label{sec: the derivative term}

In this section, we show that \eqref{eq:II.26} converges to \eqref{eq: limiting virial 2}, as $\ep \rightarrow 0$.

We consider an extension of  the function $E_{\vP}$,  that is twice differentiable (see Section \ref{sec: properties}) and of compact support $\mathcal{K}$ (i.e., $\{\vP\,|\,|\vP|\in I\}\subset \mathcal{K}$). We use the same symbol, $E_{\vP}$,  for the function extended  to $\mathcal{K}$, and we write 
\begin{equation}
E_{\vP}= \int\,d^3z\,\hat{E}(\vz)e^{i\vz\cdot\vP}\,,
\end{equation}
where $\hat{E}(\vz)$ is the Fourier transform of $E_{\vP}$ (up to the prefactor $(2\pi)^{-3/2} $). Since $E_{\vP}$ is twice differentiable and of compact support,  $|\vz|^2\hat{E}(\vz)$  belongs to $L^2(\RR^3;d^3z)$ and, by Cauchy-Schwarz, $ \hat{E}(\vz)$ is in  $L^1(\RR^3;d^3z)$. Therefore, using the functional calculus, we can write,
\begin{eqnarray}
(\Psi_{f_{\vec{Q}}^g}\,,\, [E_{\vP}, \,D_n^{\hat{u},\epsilon} ]\,\Psi_{f_{\vec{Q}}^g}) = \int\,d^3z\,\hat{E}(\vz)(\Psi_{f_{\vec{Q}}^g}\,,\, [e^{i\vz\cdot\vP}, \,D_n^{\hat{u},\epsilon} ]\,\Psi_{f_{\vec{Q}}^g}) \label{eq: functional calculus for D} \,.
\end{eqnarray}

Then we observe that, on e.g.\ the domain $\dom (H^f)$,
\begin{equation} 
e^{i\vz\cdot\vP}\,D_n^{\hat{u},\epsilon}-D_n^{\hat{u},\epsilon}e^{i\vz\cdot\vP}=(e^{i\vz\cdot\vP}\,D_n^{\hat{u},\epsilon}e^{-i\vz\cdot\vP}
-D_n^{\hat{u},\epsilon})e^{i\vz\cdot\vP} = \vz \cdot d \Gamma (\vdiffdilation_{n,\vz}^{\hat{u},\epsilon}) e^{i\vz\cdot\vP}   \label{eq: translated D}
\end{equation} 
with the bounded operator $\vdiffdilation_{n,\vz}^{\hat{u},\epsilon}$ as defined in Section \ref{sec: properties of diffdilation}.
We are now ready to compare \eqref{eq:II.26} with \eqref{eq: limiting virial 2}:  
\begin{eqnarray}
&& i(\Psi_{f_{\vec{Q}}^g}\,,\, [E_{\vP}, \,D_n^{\hat{u},\epsilon} ]\,\Psi_{f_{\vec{Q}}^g}) -   (\Psi_{f_{\vec{Q}}^g}\,,\,\vnabla E_{\vP}\cdot d\Gamma(\chi_{n}^2(|\vk|)\xi^{g\,2}_{\hat{u}}(\hat{k})\,\vk)\Psi_{f_{\vec{Q}}^g}) \\
&=&  i \int\,d^3z\,\hat{E}(\vz)  (\Psi_{f_{\vec{Q}}^g}\,, \left(d\Gamma(\vdiffdilation_{n,\vz}^{\hat{u},\epsilon})  -\, d\Gamma( \vk\chi_{n}^2(|\vk|)\xi_{\hat{u}}^{g\,{\,2}}(\vk) \right) \cdot \vz \, e^{i \vz \cdot \vP}  \,\Psi_{f_{\vec{Q}}^g})\,\quad\quad\quad\quad   \label{eq:commutator E} \\
&=&-i\int\,d^3z\,\hat{E}(\vz)( \vx \Psi_{f_{\vec{Q}}^g}\,,\,[d\Gamma(\vdiffdilation_{n,\vz}^{\hat{u},\epsilon})- d\Gamma(\vk\chi_{n}^2(|\vk|)\xi_{\hat{u}}^{g\,{\,2}}(\vk)
)]e^{i\vz\cdot\vP}\,\Psi_{f_{\vec{Q}}^g})   \label{eq: commutator x D} \\
&&+i \int\,d^3z\,\hat{E}(\vz)(\Psi_{f_{\vec{Q}}^g}\,,\,[d\Gamma(\vdiffdilation_{n,\vz}^{\hat{u},\epsilon})- d\Gamma(\vk\chi_{n}^2(|\vk|)\xi_{\hat{u}}^{g\,{\,2}}(\vk)
)]e^{i\vz\cdot\vP}\, \vx \Psi_{f_{\vec{Q}}^g}) \label{eq: commutator D x} 
\end{eqnarray}
The first equality follows by   (\ref{eq: functional calculus for D}, \ref{eq: translated D}, \ref{def: diffdilation})  and the fact that the Fourier transform sends differentiation into multiplication.
To obtain the second equality, we used the canonical commutation relation 
\begin{equation}\label{Weyl}
\vz  e^{i\vz\cdot\vP}=[e^{i\vz\cdot\vP}, \vx]
\end{equation}
which holds e.g.\ on $\dom (\vx) \cap \dom(H^f)$.  

Since  $\hat{E}(\vz)\in L^1(\RR^3;d^3z)$, we can estimate \eqref{eq: commutator x D}  
\begin{equation}
 |\eqref{eq: commutator x D}|
\leq \int\,d^3z\,|\hat{E}(\vz)|\, \left|(\vx \Psi_{f_{\vec{Q}}^g}\,,\,[d\Gamma(\vdiffdilation_{n,\vz}^{\hat{u},\epsilon})- d\Gamma(\vk\chi_{n}^2(|\vk|)\xi_{\hat{u}}^{g\,{\,2}}(\vk)\nonumber
)]e^{i\vz\cdot\vP}\,\Psi_{f_{\vec{Q}}^g})\right |\,
\end{equation}
For each $\vec{z}$, the second factor vanishes as $\ep \rightarrow 0$ by  \eqref{eq: convergence of second quantized diffdilation} and the fact that $\Psi_{f_{\vec{Q}}^g} \in \dom (\vx) \cap \dom(H^f)$.
Hence we  conclude that (\ref{eq: commutator x D} ) tends to zero as $\epsilon$ tends to zero, by dominated convergence.
Obviously, \eqref{eq: commutator D x}  can be treated in exactly the same way and hence we have proven that \eqref{eq:commutator E} vanishes as $\ep \rightarrow 0$. 

Hence, we have shown that the difference between \eqref{eq:II.26} and  \eqref{eq: limiting virial 2} vanishes, as $\ep \rightarrow 0$.

%

\subsubsection{The term $[g\phi(\rho_{\vx}), D_n^{\hat{u}}]$} \label{sec: the interaction term}

In this section, we prove that \eqref{eq:II.27} converges to \eqref{eq: limiting virial 3} as $\epsilon \downarrow 0$. 

First, we note that 
\begin{equation}  \label{eq: l infinity l two bound ep}
\sup_{\ep} \int d^3k \frac{1}{|\vk|}\sup_{\vx}   \left\str  \frac{1}{|\vx|+1}  ( d_n^{\hat{u}, \ep}\rho_{\vx})(\vk) \right\str^2 < \infty,
\end{equation} 
This follows in the same way as \eqref{eq: l infinity l two bound} , established in Section \ref{sec: well defined rhs}.
Together, \eqref{eq: l infinity l two bound ep} and \eqref{eq: l infinity l two bound} imply that the operator norms of 
\begin{equation}
  R_{n,\ep}^{\hat u} :=  \frac{1}{H^f+1}  \,a^*(i(d_n^{\hat{u},\epsilon}-d_n^{\hat{u}}) \rho_{\vx}) \frac{1}{|\vx|+1}
 \end{equation}
 \begin{equation}
  (R_{n,\ep}^{\hat u})^* :=  \frac{1}{|\vx|+1}\, a(i(d_n^{\hat{u},\epsilon}-d_n^{\hat{u}})\rho_{\vx}) \frac{1}{H^f+1}
 \end{equation}
are  uniformly bounded in $\epsilon$. We can now take advantage of the fact that $\Psi_{f_{\vec{Q}}^g} \in \dom (\vx) \cap \dom(H^f)$ to write 
\begin{eqnarray}
&& (\Psi_{f_{\vec{Q}}^g}\,,\,[a^*(i(d_n^{\hat{u},\epsilon}-d_n^{\hat{u}}) \rho_{\vx})+\, a(i(d_n^{\hat{u},\epsilon}-d_n^{\hat{u}})\rho_{\vx})]\Psi_{f_{\vec{Q}}^g}) \label{eq: difference of dilations} \\
&=&  ((H^f +1) \Psi_{f_{\vec{Q}}^g}\,,\, R_{n,\ep}^{\hat u}\chi^{}_{K_{\delta}} (|\vx|+1) \Psi_{f_{\vec{Q}}^g})   \label{eq: the part in kdelta} \\
&&+  (   (H^f +1)\Psi_{f_{\vec{Q}}^g}\,, R_{n,\ep}^{\hat u} (|\vx|+1)(1- \chi^{}_{K_{\delta}})   \, \Psi_{f_{\vec{Q}}^g})
 \label{eq: the part out of kdelta} \\
& &+ ((|\vx|+1)\Psi_{f_{\vec{Q}}^g}\,,\,  \chi^{}_{K_{\delta}}(R_{n,\ep}^{\hat u})^*(H^f +1)  \Psi_{f_{\vec{Q}}^g})   \label{eq: the part in kdelta-2} \\
& &+ ((1- \chi^{}_{K_{\delta}})(|\vx|+1)\Psi_{f_{\vec{Q}}^g}\,,\, (R_{n,\ep}^{\hat u})^*(H^f +1)  \Psi_{f_{\vec{Q}}^g})   \label{eq: the part out of kdelta-2} 
\end{eqnarray}
where  $\chi^{}_{K_{\delta}}= \chi^{}_{K_{\delta}}(\vx)$ is the characteristic function of a compact set $K_\delta \subset \mathbb{R}^3$, chosen such that the $\left\str\textrm{\eqref{eq: the part out of kdelta}}\right\str$,  $\left\str\textrm{\eqref{eq: the part out of kdelta-2}}\right\str$ are smaller than $\delta$. This can be done by the uniform bound on  $ \norm R_{n,\ep}^{\hat u} \norm$,  and the fact that $\norm (\chi^{}_{K_{\delta}}-1)(|\vx|+1) \Psi_{f_{\vec{Q}}^g} \norm $ can be made arbitrarily small by choosing $K_{\delta}$ big enough. 
Moreover, for any compact $K$, 
\begin{equation}
\lim_{\epsilon\to0}\int d^3k\,\frac{1}{|\vk|}\,[\sup_{\vx\in K}|((id_n^{\hat{u},\epsilon}\,-\,id_n^{\hat{u}})\rho_{\vx})(\vk)|]^2=0\,,
\end{equation}
This implies that  $ \norm  \chi^{}_{K}  R_{n,\ep}^{\hat u} \norm $, $ \norm  \chi^{}_{K}  (R_{n,\ep}^{\hat u})^*\norm $ and hence \eqref{eq: the part in kdelta}, \eqref{eq: the part in kdelta-2} vanish, as $\ep \rightarrow 0$.
Together, the bounds on \eqref{eq: the part in kdelta}, \eqref{eq: the part in kdelta-2}  and on \eqref{eq: the part out of kdelta}, \eqref{eq: the part out of kdelta-2} prove that \eqref{eq: difference of dilations} vanishes in the limit $\ep \rightarrow 0$. 
Hence, the difference of \eqref{eq:II.27}  and \eqref{eq: limiting virial 3} vanishes as $\epsilon \downarrow 0$.

\subsection{Proof of the fiber virial identity in \eqref{final-virial}}\label{sec: proof of virial fiber}

The justification of the  virial identity in \eqref{final-virial}  is largely analogous to that of the virial identity in Lemma \ref{lem-III.2}.
To avoid repetitive arguments, we just sketch the main strategy of the proof.

First one introduces a regularized dilation operator $d_{\frac{1}{\kappa},\kappa}^{\epsilon}$ and the corresponding second quantized operator $D^{b,\epsilon}_{\frac{1}{\kappa},\kappa}:=d\Gamma^b(d_{\frac{1}{\kappa},\kappa}^{\epsilon})$.  The operator  $d_{\frac{1}{\kappa},\kappa}^{\epsilon}$ is obtained from  $d_{\frac{1}{\kappa},\kappa}$ (see Eq. (\ref{eq:III.7}))  by replacing  the gradient, $\vnabla_{\vk}$,   with  
\begin{equation}
\vnabla^{\epsilon}_{\vk}:=\frac{\vnabla_{\vk}}{1-\epsilon\Delta_{\vk}}\,,\quad \epsilon>0.
\end{equation} Then one exploits the following properties:
\begin{itemize}
\item[i)]
On the dense subspace $\dom(\vnabla_{\vk}) \in \mathfrak{h}$,
\begin{equation}
i[\,|\vk|\,,\,d_{\frac{1}{\kappa},\kappa}^{\epsilon}]\rightarrow\,|\vk|\chi_{[\frac{1}{\kappa},\kappa]}^2(|\vk|)\,,
\end{equation}
\begin{equation}
i[\,\vk\,,\,d_{\frac{1}{\kappa},\kappa}^{\epsilon}]\rightarrow\,\vk\chi_{[\frac{1}{\kappa},\kappa]}^2(|\vk|)
\end{equation}
as $\epsilon \rightarrow 0$. (Strong convergence on the whole of $ \mathfrak{h}$ follows than from ii) below).
\item[ii)]
The operator norms
\begin{eqnarray}
&& \|\,[\,|\vk|\,,\,d_{\frac{1}{\kappa},\kappa}^{\epsilon}]\,\| ,\quad \|\,[\,\vk\,,\,d_{\frac{1}{\kappa},\kappa}^{\epsilon}]\,\|     \\[1mm]
&& \|d\Gamma^b(i[|\vk|,\,d^{\epsilon}_{\frac{1}{\kappa},\kappa}])\,\frac{1}{(1+H^f)}\|,\quad   \|d\Gamma^b(i[\vk,\,d^{\epsilon}_{\frac{1}{\kappa},\kappa}])\,\frac{1}{(1+H^f)}\|\quad\quad
\end{eqnarray}
are bounded uniformly  in $\epsilon$. 
\item[iii)]  
\begin{equation}
\lim_{\epsilon\to0}\int d^3k\,\frac{1}{|\vk|}|(id_{\frac{1}{\kappa},\kappa}^{\epsilon}\,-\,id_{\frac{1}{\kappa}, \kappa})\rho(\vk)|^2=0\,.
\end{equation}

\item[iv)] 
the operator norm 
\begin{equation}
\|b(id_{\frac{1}{\kappa}, \kappa}^{\epsilon}\rho)\frac{1}{(H^f+1)^{1/2}}\|
\end{equation}
is uniformly bounded in $\epsilon$.

\item[v)] 
\begin{equation}
\left\norm b(id_{(\frac{1}{\kappa}, \kappa}^{\epsilon}\,-\,id_{\frac{1}{\kappa},\kappa})\rho(\vk))\, \frac{1}{(1+H^f)^{1/2}} \right\norm^2 
\leq \int d^3k\,\frac{1}{|\vk|}|(id_{\frac{1}{\kappa},\kappa}^{\epsilon}\,-\,id_{\frac{1}{\kappa},\kappa})\rho(\vk))|^2  \quad
\end{equation}

\end{itemize}

\end{document}